\newsavebox{\savepar}
\newenvironment{boxitjournal}{\noindent\begin{lrbox}{\savepar}
\begin{minipage}[b]{6.2in}}
{\end{minipage}\end{lrbox}\fbox{\usebox{\savepar}}}
\newcommand{\av}{\mathbf{a}}
\newcommand{\bv}{\mathbf{b}}
\newcommand{\cv}{\mathbf{c}}
\newcommand{\ev}{\mathbf{e}}
\newcommand{\mv}{\mathbf{m}}
\newcommand{\nv}{\mathbf{n}}
\newcommand{\rv}{\mathbf{r}}
\newcommand{\sv}{\mathbf{s}}
\newcommand{\xv}{\mathbf{x}}
\newcommand{\yv}{\mathbf{y}}
\newcommand{\zv}{\mathbf{z}}
\newcommand{\Av}{\mathbf{A}}
\newcommand{\Bv}{\mathbf{B}}
\newcommand{\Cv}{\mathbf{C}}
\newcommand{\Ev}{\mathbf{E}}
\newcommand{\Fv}{\mathbf{F}}
\newcommand{\Iv}{\mathbf{I}}
\newcommand{\Mv}{\mathbf{M}}
\newcommand{\Rv}{\mathbf{R}}
\newcommand{\Sv}{\mathbf{S}}
\newcommand{\Wv}{\mathbf{W}}
\newcommand{\Xv}{\mathbf{X}}
\newcommand{\Yv}{\mathbf{Y}}
\newcommand{\onev}{\mathbf{1}}
\newcommand{\zerov}{\mathbf{0}}
\newcommand{\Sigmav}{\mathbf{\Sigma}}
\newcommand{\Muv}{\mathbf{\mathcal{M}}}
\newcommand{\betav}{\boldsymbol{\beta}}
\newcommand{\etav}{\boldsymbol{\eta}}
\newcommand{\muv}{\boldsymbol{\mu}}
\newcommand{\thetav}{\boldsymbol{\theta}}
\newcommand{\tr}{\mathop{\sf tr}}
\newcommand{\diag}{\mathop{\sf diag}}
\newcommand{\SNR}{\mbox{SNR}}
\newcommand{\LLR}{\mbox{LLR}}
\newcommand{\E}{\mathop{\sf E}}
\newcommand{\BEAS}{\begin{eqnarray*}}
\newcommand{\EEAS}{\end{eqnarray*}}
\newcommand{\BEA}{\begin{eqnarray}}
\newcommand{\EEA}{\end{eqnarray}}
\newcommand{\BEQ}{\begin{equation}}
\newcommand{\EEQ}{\end{equation}}
\newcommand{\BEQS}{\begin{equation*}}
\newcommand{\EEQS}{\end{equation*}}
\newcommand{\BIT}{\begin{itemize}}
\newcommand{\EIT}{\end{itemize}}
\newcommand{\BNUM}{\begin{enumerate}}
\newcommand{\ENUM}{\end{enumerate}}
\newcommand{\BIEA}{\begin{IEEEeqnarraybox*}}
\newcommand{\EIEA}{\end{IEEEeqnarraybox*}}
\newcommand{\BP}{\begin{proof}}
\newcommand{\EP}{\end{proof}}
\newcommand{\BA}{\begin{array}}
\newcommand{\EA}{\end{array}}
\newcommand{\BT}{\begin{tabular}}
\newcommand{\ET}{\end{tabular}}
\newcommand{\eg}{\mbox{e.g.}}
\newcommand{\ie}{\mbox{i.e.}}
\newcommand{\wrt}{\mbox{w.r.t.}}
\newcommand{\fig}{\mbox{Fig. }}
\definecolor{darkblue}{rgb}{0,0.25,0.5}
\definecolor{navy}{rgb}{0.2,0.3,0.8}
\definecolor{niceblue}{rgb}{0.04,0.52,0.78}
\definecolor{nicegreen}{rgb}{0.17,0.51,0.34}
\definecolor{truegreen}{rgb}{0,0.5,0}
\definecolor{havardred}{rgb}{0.6,0.2,0}
\definecolor{earthyellow}{rgb}{1,0.5,0}
\newtheorem{theorem}{Theorem}
\newtheorem{lemma}{Lemma}
\newtheorem{proposition}{Proposition}
\newtheorem{corollary}{Corollary}
\newtheorem{definition}{Definition}
\newcounter{exno}
\newcounter{oursection}
\newcounter{lecture}
\begin{document}

\title{\huge{A Variational Inference Framework for Soft-In-Soft-Out
Detection in Multiple Access Channels}}

\author{Darryl Dexu Lin and Teng Joon Lim
\thanks{D. D. Lin was with the Department of Electrical and Computer
Engineering, University of Toronto, 10 King's College Road, Toronto,
ON, Canada M5S 3G4. He is now with Qualcomm Incorporated, 5775 Morehouse Drive, San Diego, CA, 92121 USA (e-mail: darryl.lin@utoronto.ca).
T. J. Lim is with the Department of Electrical and Computer
Engineering, University of Toronto, 10 King's College Road, Toronto,
ON, Canada M5S 3G4 (e-mail: tj.lim@utoronto.ca). This work was
presented in part at the IEEE Globecom'05, St. Louis, MO, Nov.28 -
Dec.2, 2005. It was supported by a Natural Sciences and Engineering
Research Council (NSERC) of Canada Postgraduate Scholarship.}} \maketitle

\begin{abstract}
We propose a unified framework for deriving and studying
soft-in-soft-out (SISO) detection in interference channels using the
concept of variational inference. The proposed
framework may be used in multiple-access interference (MAI),
inter-symbol interference (ISI), and multiple-input multiple-output
(MIMO) channels. Without loss of generality, we will focus our
attention on turbo multiuser detection, to facilitate a more concrete discussion. It is
shown that, with some loss of optimality, variational inference avoids the exponential complexity
of \emph{a posteriori} probability (APP) detection by optimizing a
closely-related, but much more manageable, objective function called
\emph{variational free energy}. In addition to its systematic
appeal, there are several other advantages to this viewpoint. First
of all, it provides unified and rigorous justifications for numerous
detectors that were proposed on radically different grounds, and
facilitates convenient joint detection and decoding (utilizing the
turbo principle) when error-control codes are incorporated.
Secondly, efficient joint parameter estimation and data detection is
possible via the variational expectation maximization (EM)
algorithm, such that the detrimental effect of inaccurate channel
knowledge at the receiver may be dealt with systematically. We are also able to extend
BPSK-based SISO detection schemes to arbitrary square QAM
constellations in a rigorous manner using a variational argument.
\end{abstract}


\section{Introduction}

Following the discovery of turbo codes \cite{Berrou93}, the
principle of turbo processing has been used in various signal
processing settings. Among these, turbo detection for coded
transmission in interference channels, which treats the error
control code as the outer code and the interference channel as the
inner code, has been shown to perform dramatically better than the conventional non-iterative method of
interference suppression followed by hard-decision decoding.
Depending on the channel of interest, turbo detection includes
turbo multiuser detection for multiple access channels
\cite{Alex98,Wangx99}, turbo equalization for inter-symbol
interference (ISI) channels \cite{Doui95,Tuch02a}, and turbo MIMO
equalization for multiple-input multiple-output (MIMO) channels
\cite{Ariya00,Haykin04}. Due to the linear Gaussian vector channel
model that is common to these problems, techniques developed in
one area can often be readily applied to another with only minor
modifications. In this paper, we will restrict our signal model to
the multiuser detection (MUD) scenario. It should be understood
that the solutions proposed for this particular problem may be
generalized to turbo equalization and turbo MIMO detection settings as well.

The evolution of MUD research has seen detectors
being derived through many different approaches, such as the
minimization of mean-squared error (MMSE), decision-feedback, or
multistage interference cancellation (IC) \cite{Verdu98}. Within the
past decade, there has been a growing interest in coded CDMA
systems, where the need for joint detection and decoding leads to a
different class of multiuser detectors, namely turbo multiuser
detectors. Practical turbo multiuser detectors proposed in
\cite{Alex98} and \cite{Wangx99} are among the earliest and most
celebrated ones, due to their simplicity and remarkable performance.

Inside a turbo multiuser detector, a soft-in-soft-out (SISO)
detector component is of crucial importance, and is where the main design
challenges lie. It differs from the conventional detectors in that
it must be able to make use of prior knowledge of the symbols to be
detected, and the structure of the multiple access channel, to generate soft
symbol decisions. Unfortunately, unlike the decoder component, for
which feasible, low-complexity \emph{a posteriori} probability (APP)
generators (\eg, the BCJR algorithm \cite{Bahl74} for convolutional
codes) may be assumed, the optimal APP multiuser detector has
exponential complexity and is infeasible. As a result, suboptimal
SISO MUD design is key to the success of a practical turbo multiuser
detector.

In this paper, we intend to propose a generalized method for the
design of a SISO MUD, adopting a technique called \emph{variational
inference} \cite[pp. 422--436]{Mackay03}, which, like the
sum-product algorithm \cite{Ksch01}, is an approximate inference
algorithm in probabilistic models. We will see that this approach
not only successfully includes some important existing SISO MUD
schemes as special cases, but easily leads to various improvements
and extensions. Although our study focuses on SISO MUD by treating
it as an approximate inference engine, it also encompasses uncoded
MUD (detectors with no prior information and only hard decision
output), since uncoded MUD can be viewed as SISO MUD with
uniform prior distributions for the channel symbols.

Prior to this paper, recent attempts on providing a unified approach
to study the wide range of multiuser detectors include, to name a
few, \cite{Bout02}, \cite{Tanaka02} and \cite{Guo05}. Boutros and
Caire \cite{Bout02} generalize iterative multiuser joint decoding as
an approximate sum-product algorithm in a factor graph containing
both the multiuser channel and code constraints. Such a
generalization leads to elegant performance analysis through
\emph{density evolution}. Tanaka \cite{Tanaka02} and Guo and
Verd\'{u} \cite{Guo05} view the uncoded linear and optimal multiuser
detectors as \emph{posterior mean estimators} of the \emph{Bayes
retrochannel} such that, in the large system limit, the bit error
rate (BER) may be evaluated through techniques from statistical
physics. This paper may be regarded as an extension of
\cite{Tanaka02} and \cite{Guo05} into the realm of nonlinear (and
iterative) detectors. Specifically, we show that such detectors
arise from approximating the posterior distributions and iteratively
optimizing the approximate distributions, and address the
design challenges of the MUD component within the iterative
multiuser joint decoding problem, highlighted in \cite{Bout02}.

The implications of this new generalized framework are significant
in at least three ways:
\BNUM \item \textit{Theoretical Justification for Existing Multiuser
Detectors:} Section \ref{sec:variational} introduces the variational
inference formulation for MUD, in which a quantity known as
variational free energy is constructed and minimized, generating a
procedure termed variational free energy minimization (VFEM). From
this perspective, we will show how various uncoded linear multiuser
detectors (\eg, decorrelating and MMSE detectors), as well as their
interference cancellation extensions (\eg, unconstrained or clipped
successive interference cancellation (SIC) detectors) may be
derived. We will further argue that the VFEM approach naturally
produces SISO multiuser detectors that can be used in turbo MUD. In
particular, we will examine the celebrated algorithms proposed in
\cite{Alex98} and \cite{Wangx99}, to reveal that they can both
be derived with the VFEM approach.

\item \textit{Channel Parameter Joint Estimation Using Variational
EM Algorithm:} Section \ref{sec:var_EM} considers the scenario where
certain channel parameters are unknown or inaccurately estimated at
the multiuser receiver, motivating the joint estimation of channel
parameters together with unknown data symbols. The VFEM framework
offers a natural solution to this problem. By iteratively minimizing
the free energy over both the data symbols and the channel
parameters, we arrive at the \emph{variational EM} algorithm
\cite{Neal98}. This is a generalized EM algorithm with exact
inference in the E step replaced by variational inference. As
examples of this parameter estimation mechanism, we will demonstrate
how the unknown channel noise variance may be iteratively estimated,
and inaccurate channel amplitude refined, in conjunction with turbo
MUD.

\item \textit{Generalization of BPSK MUD to Square QAM
Modulation:} In bandwidth-constrained channels, extensions of the
SISO multiuser detectors from BPSK modulation to square QAM
modulation may also be carried out within the VFEM framework. These
extensions are not ad hoc, but optimal in the sense that the
variational free energy modified for $M$-QAM modulation is
minimized. Such a scheme gives rise to an iterative detection
technique for general linear Gaussian channels, called
\emph{Bit-Level Equalization and Soft Detection} (BLESD). It was
introduced in separate works of ours \cite{Lin06isit, Lin07isit}.

\ENUM

The rest of the paper will be organized as follows: Section
\ref{sec:system} describes the multiple access channel model and
formulates the optimal SISO multiuser detectors; Section
\ref{sec:message} discusses the decoding/detection scheduling issue
by studying the factor graph containing both the multiuser channel
and code constraints. This will prove to be an important design
parameter in the subsequent analysis of variational-inference-based
detectors. Sections \ref{sec:variational} and \ref{sec:var_EM}
contain the introduction and application examples of the proposed
variational inference framework for MUD, and in two directions (the
first two points summarized above) justify the merits of this new
point of view; Section \ref{sec:simulation1} presents some
simulation results, and Section \ref{sec:conclusion} concludes the
paper.

\emph{Notation}: Upper and lower case bold face letters indicate
matrices and column vectors, respectively; $\mathbf{1}$ represents
the all-one column vector; $\Xv \circ \Yv$ stands for the Schur
product (element-wise product) of matrices $\Xv$ and $\Yv$;
$\tr(\Xv)$ denotes the trace of a square matrix $\Xv$; $\diag(\xv)$
is a diagonal matrix with the vector $\xv$ on its diagonal;
$\diag(\Xv)$ is a diagonal matrix with the diagonal elements of
square matrix $\Xv$ on its diagonal; $\textsf{E}(\cdot)$ and
$\textsf{V}(\cdot)$ stand for the expected value and variance of a
random variable; $\mathcal{N}(\muv,\Sigmav)$ represents a Gaussian
pdf with mean $\muv$ and covariance matrix $\Sigmav$.

\section{System Description}\label{sec:system}

\subsection{Signal Model for BPSK Modulation}

Consider a synchronous DS-CDMA wireless link with $K$ users.
Assuming flat fading, by sampling the chip matched filter output at chip rate, the received
signal in one symbol interval, $\rv \in \mathbb{R}^{N \times 1}$,
can be written in the well-known vector form:
\begin{equation}\label{eq:CDMA}
    \rv=\Sv \Av \bv+\nv,
\end{equation}
where $\mathbf{S}=[\sv_1, \sv_2, \cdots, \sv_K]$ is the spreading
code matrix containing the normalized spreading sequences of the $K$
active users, $\Av = \diag(A_1,A_2,\cdots,A_K)$ is the channel
matrix representing each user's signal amplitude and $\bv=[b_1, b_2,
\cdots, b_K]^T$ contains the transmitted BPSK channel symbols from each
user. $\nv$ is a white Gaussian noise vector with distribution
$p(\nv)= \mathcal{N}(\zerov,\sigma^2 \Iv)$.

After bit-level matched filtering at the receiver, we may write
the matched filter output, $\yv \in \mathbb{R}^{K \times 1}$, as:
\begin{equation}\label{eq:MF_CDMA}
    \yv = \Sv^T \rv = \Rv \Av \bv + \zv,
\end{equation}
where $\Rv = \Sv^T\Sv$ is the symmetric normalized signature
correlation matrix with unit diagonal elements, and $\zv$ is a
coloured Gaussian noise vector with distribution $p(\zv)=
\mathcal{N}(\zerov,\sigma^2 \Rv)$.

The correlated noise statistics in $\yv$ may be whitened by
applying a noise whitening filter $\Fv^{-T}$, yielding
\BEQ\label{eq:DF_CDMA}
    \bar{\yv} = \Fv^{-T}\yv = \Fv\Av\bv + \bar{\nv},
\EEQ where $\Fv$ is a lower triangular matrix (\ie, $F_{ij} = 0$
for $i < j$) resulting from the Cholesky factorization for $\Rv$,
$\Rv = \Fv^T\Fv$. $\bar{\nv}$ is a white Gaussian noise vector,
having the same distribution as $\nv$.

As $\yv$ and $\bar{\yv}$ are sufficient statistics for detecting
$\bv$, equations (\ref{eq:CDMA}), (\ref{eq:MF_CDMA}) and
(\ref{eq:DF_CDMA}) are equivalent starting points for the derivation
of multiuser detectors, although certain computational savings are
easier to identify with certain models.

Note that the channel model for frequency selective and asynchronous
channels takes a similar linear form as (\ref{eq:CDMA}). Thus the
adaptation to these more general channel types is possible, but will
not be discussed explicitly here. Interested readers may refer to,
\eg, \cite{Wangx99}, for further insights.

\subsection{Optimal SISO Detectors}

Given the prior distribution $p(\bv)$ and the conditional
distribution $p(\rv|\bv)$, the jointly optimal (JO) detector uses Bayes rule to compute
\begin{equation}\label{eq:JO}
    p(\bv|\rv) = \frac{p(\rv|\bv)p(\bv)}{\sum_{\bv}
    p(\rv|\bv)p(\bv)}.
\end{equation}
The posterior distribution $p(\bv|\rv)$ is the ``soft output'' of
the jointly optimal detector; hard decisions are obtained by maximizing over all
possible symbol vectors $\bv$.

Similarly, the individually optimal (IO) detector is
obtained by evaluating the marginal posterior distribution of
$b_k$ ($k = 1$ to $K$):
\begin{equation}\label{eq:IO}
    p(b_k|\rv) = \frac{p(\rv|b_k)p(b_k)}{\sum_{b_k}
    p(\rv|b_k)p(b_k)},
\end{equation}
where $p(\rv|b_k)p(b_k) = \sum_{\bv \smallsetminus
b_k}p(\rv|\bv)p(\bv)$. Due to the discrete nature of the
information symbols, both jointly optimal and individually optimal detectors require prohibitive
exponential complexity.

The individually optimal detector is the optimal SISO multiuser detector in terms of
minimizing bit error rate (BER). Practical suboptimal SISO multiuser detectors may be
derived by taking in the \emph{prior} information $p(b_k)$ and
producing a \emph{posterior} probability $p(b_k|\rv)$ or
$p(b_k|\yv)$ through some intelligent approximation which does not
have exponential complexity. Variational inference is one example
of these ``intelligent approximations'', where the outcome,
$Q(b_k)$, which approximates $p(b_k|\rv)$, is found by optimizing an
underlying cost function called variational free energy, as will be
shown in Section \ref{sec:variational}.

\section{Message-Passing Scheduling in Turbo Multiuser Detection}\label{sec:message}

In a turbo multiuser detector, the detector section needs to be able
to accept prior estimates $\{p(b_k)\}_{k=1}^K$ from the APP decoder
and generate a soft decision, called \emph{extrinsic information}
(EXT), to be sent back to the APP decoder. Such a mechanism for EXT
exchange can be rigorously justified as the message passing
algorithm in graphs \cite{Ksch98,Bout02}. However, since any
practical multiuser detector is at best an \emph{approximation} to
the exact sum-product algorithm (because exact inference, with the
individually optimal detector, is NP complete), good methods to generate and pass EXT
are not unique.

In addition, the factor graph describing the statistical
dependencies among all unknowns (conditioned on the observations)
contains cycles, and hence several message passing schedules are
valid. In this section we describe the sequential, flooding and
hybrid schedules, and show that the Wang-Poor algorithm corresponds
to a hybrid scheduling. The
sequential schedule takes $K$ times as long as the flooding
schedule, but may result in fewer iterations to achieve a given
level of performance. While message-passing scheduling has not been thoroughly studied in the turbo MUD context, it is an important topic in iterative decoding of error control codes. For example, the different convergence rate of sequential and parallel (flooding) scheduling for decoding LDPC codes has been reported in \cite{Yacov07}.

%

\begin{figure}[hbt]
\centering	
\includegraphics[width=4.5in]{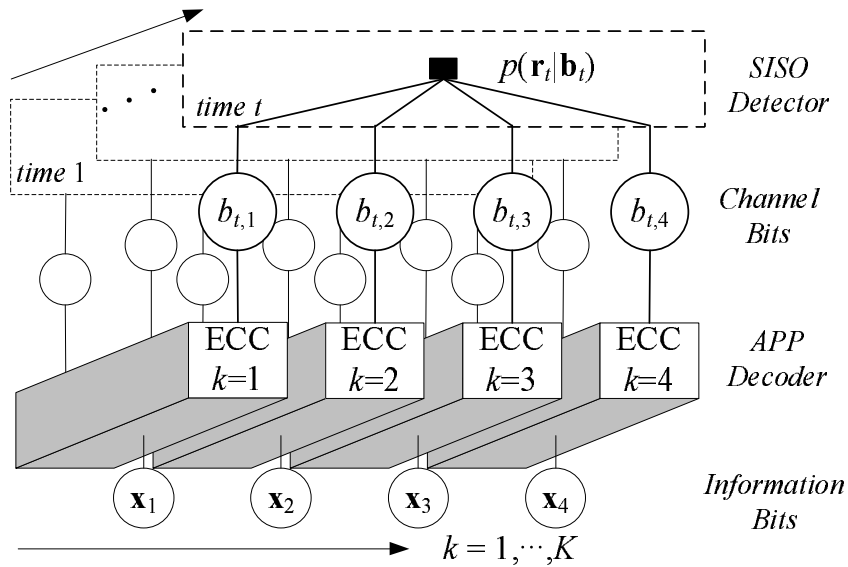}
\caption{Graphical model of a coded multiuser channel. Note the time
dependency among bits of the same user (code constraint), and the
user dependency among bits at the same time (channel constraint).}
\label{fig:message}
\end{figure}

From \fig \ref{fig:message}, it is seen that the nodes representing
the channel bits $\{b_{t,k}\}_{k=1}^K$ are the relay nodes that
separate the graph into two halves, where on one side the decoder
runs belief propagation to perform per-user APP decoding and on the
other side the multiuser detector performs variational inference.
The process by which the APP decoder retrieves prior information and
generates extrinsic information is standard (see \cite{Bahl74}) and
will be skipped. We will therefore only discuss
message passing between the detector and decoder.

\subsection{Obtaining Extrinsic Information: Sequential
Schedule}\label{sec:sequential}

\begin{figure}[hbt]
\centering
\includegraphics[width=5.5in]{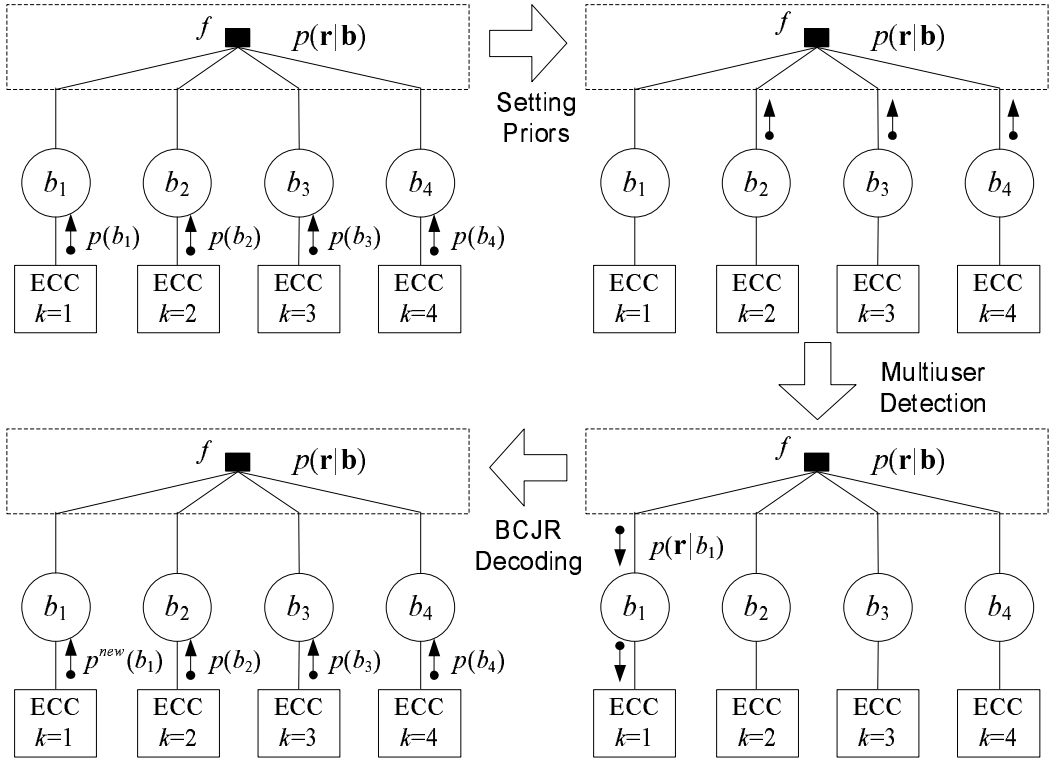}
\caption{An instance of sequential message-passing in the graphical
model: the multiuser detector receives prior distributions of $b_2$,
$b_3$ and $b_4$ to generate the extrinsic information for $b_1$.
This process is repeated for $b_2$, $b_3$ and $b_4$ to complete one
message-passing iteration.} \label{fig:sp_view}
\end{figure}

When a SISO detector is viewed as an approximate sum-product
algorithm \cite{Bout02}, the EXT may be obtained in a way analogous
to the message-passing rule in graphs. \fig \ref{fig:sp_view}
provides an example that demonstrates that the EXT for $b_1$ may be
generated using the priors of $b_2$, $b_3$ and $b_4$, but {\em not}
the prior of $b_1$.  In its exact form the message (EXT) from node
$f$ to node $b_1$ is \BEQ\label{eq:sp_message}
    \mathcal{M}_{f \rightarrow b_1} = \sum_{b_2,b_3,b_4} p(\rv|\bv) p(b_2)p(b_3)p(b_4) = p(\rv|b_1).
\EEQ In sequential scheduling, $\mathcal{M}_{f\rightarrow b_1}$ will
be passed into the APP decoder for user 1, which will generate a new
prior for $b_1$ that will be used for EXT generation for $b_2$, and
so on. So error control decoding is performed one user at a time,
and not in parallel.

In an {\em approximate} evaluation of EXT for $b_k$ that follows the
same vein, one would ignore the prior of $b_k$ even if it is
available from a previous iteration, and use a simple multi-user
detector such as linear MMSE to generate an estimated $p(\rv|b_k)$
using only $\{p(b_l)\}_{l \neq k}$. Thus in the sequential schedule,
\begin{itemize}
  \item the EXT for each bit is obtained using different inputs (prior distributions), necessitating a substantially different EXT generator (multiuser detector) for each bit; and
  \item the prior knowledge of $b_k$ is ignored before detection in generating the EXT for $b_k$.
\end{itemize}

The sequential schedule to obtain extrinsic information is
intuitive, since it resembles the message-passing protocol defined
in the sum-product algorithm \cite[ch. 4]{Jordan04}. But it is also
very restrictive, in that users have to be detected in series,
introducing latency in the detection process. Furthermore, since a
different joint detector must be devised for each user, the overall
complexity in general increases linearly with $K$ if no
simplification measures are taken.

%

\subsection{Obtaining Extrinsic Information: Flooding
Schedule}\label{sec:flooding}

\begin{figure}[hbt]
\centering
\includegraphics[width=5.5in]{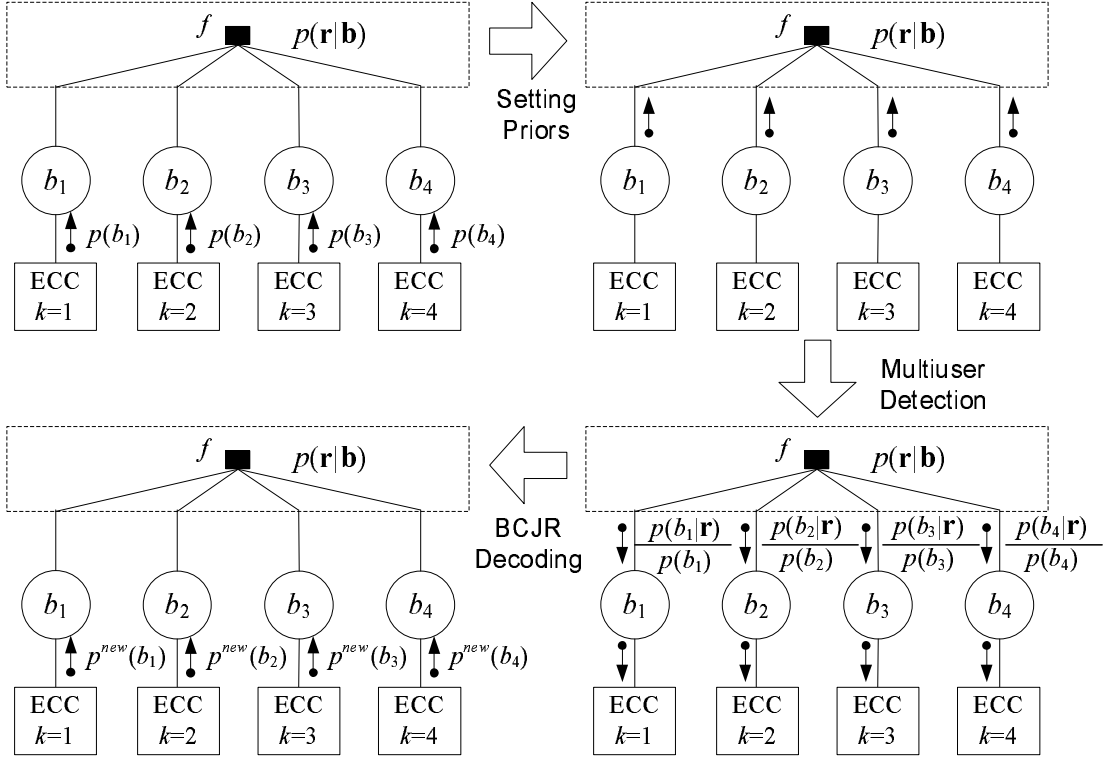}
\caption{An instance of flooding message-passing in the graphical
model: the multiuser detector receives prior distributions of
$b_1,\cdots,b_4$ to generate the extrinsic information for
$b_1,\cdots,b_4$. This completes one message-passing iteration.}
\label{fig:inf_view}
\end{figure}

In the flooding schedule, illustrated in \fig \ref{fig:inf_view},
EXT's for all bits are generated in parallel. The message from node
$f$ to $b_k$ will be \BEQ \label{eq:Mf_flood}
   \mathcal{M}_{f\rightarrow b_k} = \sum_{\{b_l\}_{l\neq k}} p(\rv|\bv)\prod_{l=1,l\neq k}^K p(b_l) \propto \frac{p(b_k|\rv)}{p(b_k)}.
\EEQ
Note that, unlike in sequential scheduling, all EXT's use the same
priors. For instance, $\mathcal{M}_{f\rightarrow b_2}$ and
$\mathcal{M}_{f\rightarrow b_4}$ both use $p(b_3)$ whereas in the
sequential schedule, $\mathcal{M}_{f\rightarrow b_4}$ would use
$p^{new}(b_3)$ from the most recent round of APP decoding.

As well, we can write the EXT of $b_k$ as
\BEQ
   \mathcal{M}_{f\rightarrow b_k} = \frac{1}{p(b_k)}\sum_{\{b_l\}_{l\neq k}} p(\rv|\bv)\prod_{l=1}^K p(b_l)
\EEQ and hence view the flooding schedule as making use of {\em all}
prior probabilities from the {\em same} iteration. This reasoning,
together with (\ref{eq:Mf_flood}), leads to the following
sub-optimal approximation:
\begin{itemize}
  \item Use all prior probabilities from the same iteration to generate an approximate $p(b_k|\rv)$, say $Q(b_k)$;
  \item Form the EXT for $b_k$ by dividing $Q(b_k)$ by $p(b_k)$;
  \item Send all EXT's to the $K$ APP decoders in parallel.
\end{itemize}
%


The advantage of the flooding schedule is two fold: 1) By passing
messages to the detector in one shot, the latency is low; 2) By
generating the extrinsic information in one shot, the complexity of
the detector is reduced.

Through implementing the flooding schedule, our MUD design challenge
is shifted from approximating $p(\rv|b_k)$ to approximating
$p(b_k|\rv)$. And the variational inference viewpoint of MUD allows
us to easily do so.
%

\subsection{Obtaining Extrinsic Information: Hybrid
Schedule}\label{sec:hybrid}

A hybrid schedule can be defined in which the EXT for $b_k$ is
computed without using $p(b_k)$ like in the sequential schedule, and
all EXT's are computed in parallel like in the flooding schedule.
This approach removes the latency issue in sequential scheduling,
and has been used in the literature without justification.


If exact inference is used to compute $p(\rv|b_k)$ in the hybrid
schedule, and $p(b_k|\rv)$ in the flooding schedule, the two
implementations are identical, since the messages coming out of the
MUD section are the same -- $\{p(\rv|b_k)\}_{k=1}^K$. However, in
practical detector design, $p(\rv|b_k)$ or $p(b_k|\rv)$ must be
approximated. As to be demonstrated in Section \ref{sec:siso_mmse},
$p(b_k|\rv)$ may be approximated as $Q(b_k)$ given prior
distributions $\{p(b_l)\}_{l=1}^K$, while $p(\rv|b_k)$ may be
approximated as $Q(b_k)$ given $\{p(b_l)\}_{l\neq k}$ and
non-informative $p(b_k)$. With these approximations, the hybrid and
flooding scheduling schemes differ, as the former becomes the
Wang-Poor turbo detector \cite{Wangx99} and the latter turns into a
brand-new design.


\section{Multiuser Detection via Variational Inference}\label{sec:variational}

In \cite{Guo05}, Guo and Verd\'{u} treat the linear multiuser
detectors as \emph{posterior mean estimators} (PME) with
appropriately postulated distributions $p(\bv)$ and $p(\rv|\bv)$.
For example, if a Gaussian prior is assumed, \ie, $p(\bv) =
\mathcal{N}(\zerov,\Iv)$, and the channel is modelled as $p(\rv|\bv)
= \mathcal{N}(\Sv\Av\bv,\alpha^2 \Iv)$, the posterior (or
conditional) mean estimator, \ie, $\E\left[\bv|\rv\right]$, is a
generalized linear detector given by
\begin{equation}
    \hat{\bv} = \left(\Av^T\Sv^T \Sv\Av + \alpha^2 \Iv\right)^{-1}
    \Av^T\Sv^T \rv.
\end{equation}

By choosing different values for $\alpha$, we arrive at different
linear detectors. If $\alpha^2 = \sigma^2$, we get the MMSE
detector. If $\alpha \rightarrow 0$, we approach the decorrelating
detector. And if $\alpha \rightarrow \infty$, the matched filter
output is attained.

However, \cite{Guo05} has not considered another important class of detectors, namely the nonlinear detectors. In this work, we
wish to extend the coverage of the posterior mean estimator to nonlinear detectors by
introducing an additional degree of freedom in approximating the
posterior distribution. More specifically, we will not limit
ourselves to applying Bayes rule to calculate the posterior, but
instead use the more general and flexible \emph{variational
inference} technique.

\subsection{Variational Inference and Variational Free Energy
Minimization}\label{sec:var_inference}

We shall explain the variational inference method specifically in
terms of its application to multiuser detection, while a more
general and in-depth treatment, as well as its alternative
interpretations and connection to statistical physics, can be found
in \cite{Frey98}, \cite{Jordan99}, and \cite{Mackay03}.

As stated earlier, the general task of the SISO multiuser detector
is to perform inference on $\bv$ given the observation $\rv$, $\yv$
or $\bar{\yv}$ (we will simply use $\rv$ for now, as it is
understood that they are equivalent). Suppose our objective is the
jointly optimal detector, then the distribution of interest is\footnote{Strictly
speaking, individually optimal detector minimizes the BER. But since the difference is
minimal, we may consider the jointly optimal detector for simplicity}
$p(\bv|\rv)$. Very often, however, the direct evaluation of
$p(\bv|\rv)$ is computationally intractable when Bayes rule is
applied directly, in particular, when $p(\bv)$ is a discrete
distribution. In such a case, the variational inference technique
assumes a tractable approximation to $p(\bv|\rv)$, written as
$Q(\bv)$, where the constant $\rv$ is omitted for convenience.

A good approximation $Q(\bv)$ needs to resemble $p(\bv|\rv)$ as
closely as possible, and the \emph{Kullback-Leibler divergence} (or
\emph{relative entropy}) $\mathbb{D}\left[Q(\bv)\|p(\bv|\rv)\right]$ offers
an excellent measure of similarity. But since the distribution
$p(\bv|\rv)$ is difficult to attain as we have assumed, an
equivalent alternative, $p(\bv,\rv)=p(\rv|\bv)p(\bv)$, is used, and
$p(\bv,\rv)$ is called the \emph{complete likelihood function}. The
\emph{variational free energy} is thus defined as:
\begin{equation}\label{eq:free_F}
        \mathcal{F}(\lambda) = \int_{\bv} Q(\bv; \lambda)\log
    \frac{Q(\bv; \lambda)}{p(\bv,\rv)} d\bv,
\end{equation}
which equals $D\left[Q(\bv)\|p(\bv|\rv)\right]$ up to an additive
constant. In (\ref{eq:free_F}), $Q(\bv)$ is written as $Q(\bv; \lambda)$ to denote the dependence of $Q(\bv)$ on $\lambda$ explicitly, where $\lambda$ contains a set of
parameters that specify $Q(\bv)$. In the rest of the paper, we will however drop the dependence of the $Q$ function on $\lambda$ in accordance with the usual convention for writing probability distributions.

If no constraints are placed on $Q(\bv)$, by minimizing
$\mathcal{F}(\lambda)$, we reach $Q(\bv)=p(\bv|\rv)$ and nothing is
gained. But if we parameterize $Q(\bv)$ by assuming that it comes
from a restricted family of distributions (for example, a Gaussian),
we may very easily find a closed-form expression for
$\mathcal{F}(\lambda)$, which leads to a good approximation of
$p(\bv|\rv)$ via the minimization of variational free energy. This
method of performing approximate inference is called
\emph{variational inference}.

One important technique often used in variational inference is the
assumption that $Q(\bv)$ is factorizable as $\prod_{k=1}^K Q_k(b_k)$
(we shall omit the subscripts in $Q_k$ from here on), and find
distributions $\{Q(b_k)\}_{k=1}^K$ that minimize the free energy.
This factorization of a distribution and the independence assumption
associated with it is referred to as the \emph{mean-field
approximation} in statistical physics. A demonstration of its
application will be presented in detail in Section
\ref{sec:mean_field}.

The following is an outline of the general procedure for deriving
multiuser detectors through VFEM: \BNUM
\item \emph{Postulation}: Assume postulated distributions for $p(\bv)$, $p(\rv|\bv)$
and $Q(\bv)$; %
\item \emph{Evaluation}: Derive closed-form expression for
$\mathcal{F}(\lambda)$; %
\item \emph{Optimization}: Minimize $\mathcal{F}(\lambda)$
(exactly or iteratively) over $\lambda$. \ENUM

Note that we have now transformed the general MUD problem into a
well-defined optimization problem, with a unique objective function,
called variational free energy. This procedure bears close
resemblance to the routine of deriving thermodynamic state equations
in statistical physics \cite{Path04}, which is not surprising,
considering the fact that variational inference is indeed rooted in
statistical physics.

\subsection{VFEM Interpretation of Linear Multiuser
Detectors}\label{sec:linear_MUD}

We shall begin by deriving linear multiuser detectors from
variational free energy minimization, and thus show that simply
adjusting the postulated distributions $p(\bv)$, $p(\rv|\bv)$ and
$Q(\bv)$ leads to the well-known decorrelating and MMSE detectors.
Although the exercises presented here are somewhat trivial, since
uncoded linear MUD is the simplest instance of MUD, they lay
the foundation for more sophisticated variations in later sections.\\

\begin{proposition}\label{ca:decorrelating} \emph{Decorrelating
Detectors} may be derived through the VFEM routine by assuming the
following distributions:
    \begin{equation}\label{eq:postulate_decorrelating}
    \left\{\begin{array}{rcl}
    p(\bv) &=& Constant\\
    p(\rv|\bv) &=& \mathcal{N}(\Sv\Av\bv,\sigma^2 \Iv)\\
    Q(\bv) &=& \mathcal{N}(\muv, \Sigmav).
    \end{array} \right.
    \end{equation}
\end{proposition}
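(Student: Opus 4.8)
\noindent The plan is to run the three-step VFEM routine of Section~\ref{sec:var_inference} — postulation (given), evaluation, optimization — explicitly for the distributions in (\ref{eq:postulate_decorrelating}), and then identify the resulting $Q(\bv)$ with the classical decorrelator.

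\emph{Evaluation.} First I would substitute the postulated $p(\bv)$, $p(\rv|\bv)$ and $Q(\bv)$ into the definition (\ref{eq:free_F}) of the variational free energy. Because $p(\bv)$ is constant, $\log p(\bv,\rv)=\log p(\rv|\bv)+\mathrm{const}$, so the (improper) flat prior contributes only an additive constant and never survives the minimization. The entropy term $\int_{\bv} Q(\bv)\log Q(\bv)\,d\bv$ is the negative differential entropy of a Gaussian, equal to $-\frac{1}{2}\log\det\Sigmav$ up to a constant. For the cross term I would write $\log p(\rv|\bv)=-\frac{1}{2\sigma^2}\|\rv-\Sv\Av\bv\|^2+\mathrm{const}$ and apply the standard second-moment identity $\E_Q\|\rv-\Sv\Av\bv\|^2=\|\rv-\Sv\Av\muv\|^2+\tr(\Av^T\Sv^T\Sv\Av\Sigmav)$ for $\bv\sim\mathcal{N}(\muv,\Sigmav)$. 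Using $\Rv=\Sv^T\Sv$ and the symmetry of $\Av$ and $\Rv$, this gives the closed form
\[
\mathcal{F}(\muv,\Sigmav)= -\tfrac{1}{2}\log\det\Sigmav + \tfrac{1}{2\sigma^2}\|\rv-\Sv\Av\muv\|^2 + \tfrac{1}{2\sigma^2}\tr\!\left(\Av\Rv\Av\Sigmav\right) + \mathrm{const}.
\]

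\emph{Optimization.} The key observation is that $\muv$ and $\Sigmav$ appear in separate terms, so each can be minimized on its own. The $\muv$-term is a convex quadratic whose gradient vanishes at $\muv=(\Av\Rv\Av)^{-1}\Av\Sv^T\rv=\Av^{-1}\Rv^{-1}\yv$ (recalling $\yv=\Sv^T\rv$ from (\ref{eq:MF_CDMA})), i.e.\ the matched-filter output decorrelated by $\Rv^{-1}$ and rescaled by the amplitudes. The $\Sigmav$-term $-\frac{1}{2}\log\det\Sigmav+\frac{1}{2\sigma^2}\tr(\Av\Rv\Av\Sigmav)$ is convex on the positive-definite cone; differentiating (with $\partial\log\det\Sigmav/\partial\Sigmav=\Sigmav^{-1}$) and setting to zero yields $\Sigmav=\sigma^2(\Av\Rv\Av)^{-1}=\sigma^2\Av^{-1}\Rv^{-1}\Av^{-1}$, which is exactly the error covariance of the decorrelating detector. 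Convexity in each block guarantees the stationary point is the global minimizer, so VFEM returns $Q(\bv)=\mathcal{N}\!\left(\Av^{-1}\Rv^{-1}\yv,\ \sigma^2\Av^{-1}\Rv^{-1}\Av^{-1}\right)$, whose mean — and whose hard decision, obtained by thresholding $\muv$ — coincides with the decorrelator.

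\emph{Main obstacle.} The algebra is routine; what deserves care is the \emph{interpretation} step — verifying that the recovered $\muv$ and $\Sigmav$ are precisely the decorrelator output and its noise statistics (so that a reader who knows only the textbook derivation recognizes the detector), and remarking that since a continuous Gaussian $Q$ cannot place mass on $\{\pm1\}^K$, the soft output is taken to be $\muv$ and the hard output its sign. A secondary technical point is to confirm that differentiating $\mathcal{F}$ with respect to a \emph{symmetric} matrix $\Sigmav$, rather than an unconstrained one, leaves the stationarity condition unchanged here.
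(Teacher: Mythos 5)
Your proposal is correct and follows essentially the same route as the paper: substitute the postulated Gaussians into (\ref{eq:free_F}), obtain the closed-form free energy (your expression expands to (\ref{eq:decorrelating_F}) up to additive constants), and set the gradients with respect to $\muv$ and $\Sigmav$ (the paper uses $\Sigmav^{-1}$, which yields the same stationary point) to zero, recovering $\hat{\muv}=(\Av^T\Rv\Av)^{-1}\Av^T\Sv^T\rv$ and $\hat{\Sigmav}=\sigma^2(\Av^T\Rv\Av)^{-1}$. Your added remarks on block separability, convexity, and the symmetric-matrix derivative are sound refinements of the same argument.
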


\begin{proof}
    Evaluating $\mathcal{F}(\lambda)$ as in (\ref{eq:free_F}), we have
    a function of $\muv$ and $\Sigmav$:
    \begin{equation}\label{eq:decorrelating_F}
    \mathcal{F}(\muv,\Sigmav) = -\frac{1}{2} \log|\Sigmav| + \frac{1}{2\sigma^2} \left\{ \muv^T\Av^T\Sv^T \Sv\Av\muv  + \tr[(\Av^T\Sv^T \Sv\Av) \Sigmav] - 2\rv^T \Sv \Av \muv \right\}
    \end{equation}
    The final estimate of $Q(\bv)$ is given by the minimizers
    $\hat{\muv}$ and $\hat{\Sigmav}$ of $\mathcal{F}(\muv,\Sigmav)$. Calculating $\partial \mathcal{F}(\muv)/ \partial \muv$ and $\partial \mathcal{F}(\Sigmav)/ \partial
    \Sigmav^{-1}$ and equating to zero, we have
    \begin{equation} \label{eq:hatmuv_decorrelating}
    \begin{array}{rcl}
        \hat{\muv} &=& (\Av^T\Sv^T \Sv\Av)^{-1} \Av^T \Sv^T \rv\\
        \hat{\Sigmav} &=& \sigma^2 (\Av^T\Sv^T \Sv\Av)^{-1} .
    \end{array}
    \end{equation}
If hard decisions are desired, $\hat{\muv}$ can be used as the
detector output, since it maximizes $Q(\bv)$, which is Gaussian. It
is easy to recognize that $\hat{\muv}$ is identical to the
decorrelating detector output.
\end{proof}

Note that given the postulated priors in
(\ref{eq:postulate_decorrelating}), the exact posterior $p(\bv|\rv)$
is tractable and is in fact Gaussian. Therefore, the solved $Q$
function, $\mathcal{N}(\hat{\muv},\hat{\Sigmav})$, is the exact
posterior distribution which could also have been found by applying
Bayes rule directly.

The decorrelating detector uses non-informative priors for the data
bits transmitted, by setting $p(\bv)$ to a constant. But in
practice, side information is available. For instance,
$\{b_k\}_{k=1}^K$ can be safely assumed to be i.i.d. and zero mean.
For BPSK signaling, in particular, we also known that
$\textsf{E}(b_k^2) = 1$. We will subsequently show that the Gaussian
approximation about $p(\bv)$, utilizing the first and second order
statistics of $\bv$, gives rise to the familiar MMSE detector.

\begin{proposition}\label{ca:linear} \emph{MMSE Multiuser
Detectors} may be derived through the VFEM routine by assuming the
following distributions:
    \begin{equation}\label{eq:postulate_linear}
    \left\{ \begin{array}{rcl}
    p(\bv) &=& \mathcal{N}(\zerov, \Iv)\\
    p(\rv|\bv) &=& \mathcal{N}(\Sv\Av\bv,\sigma^2 \Iv)\\
    Q(\bv) &=& \mathcal{N}(\muv, \Sigmav).
    \end{array} \right.
    \end{equation}
\end{proposition}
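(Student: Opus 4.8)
The plan is to follow the template of the proof of Proposition~\ref{ca:decorrelating} essentially verbatim, the only change being the extra factor $p(\bv)=\mathcal{N}(\zerov,\Iv)$ in the complete likelihood $p(\bv,\rv)=p(\rv|\bv)p(\bv)$. First I would substitute the three postulated Gaussians into the free-energy definition (\ref{eq:free_F}) and organize it as $\mathcal{F}= -H[Q] - \textsf{E}_{Q}[\log p(\rv|\bv)] - \textsf{E}_{Q}[\log p(\bv)]$. The differential entropy of the Gaussian $Q$ contributes $-\tfrac12\log|\Sigmav|$ up to an additive constant; the term $-\textsf{E}_{Q}[\log p(\rv|\bv)]$ is literally the one already evaluated in (\ref{eq:decorrelating_F}), namely $\tfrac{1}{2\sigma^2}\{\muv^T\Av^T\Sv^T\Sv\Av\muv + \tr[(\Av^T\Sv^T\Sv\Av)\Sigmav] - 2\rv^T\Sv\Av\muv\}$ plus a constant. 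The genuinely new piece is $-\textsf{E}_{Q}[\log p(\bv)] = \tfrac12\textsf{E}_{Q}[\bv^T\bv] + \text{const} = \tfrac12(\muv^T\muv + \tr\Sigmav) + \text{const}$, since under $Q$ the $k$-th component of $\bv$ has mean $\mu_k$ and variance $\Sigma_{kk}$.

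Collecting terms yields a closed-form $\mathcal{F}(\muv,\Sigmav)$ that differs from (\ref{eq:decorrelating_F}) only by the added strictly convex term $\tfrac12\muv^T\muv + \tfrac12\tr\Sigmav$. I would then minimize over $\muv$ and $\Sigmav$ exactly as before. Setting $\partial\mathcal{F}/\partial\muv = \zerov$ gives $\bigl(\tfrac{1}{\sigma^2}\Av^T\Sv^T\Sv\Av + \Iv\bigr)\muv = \tfrac{1}{\sigma^2}\Av^T\Sv^T\rv$, hence $\hat{\muv} = (\Av^T\Sv^T\Sv\Av + \sigma^2\Iv)^{-1}\Av^T\Sv^T\rv$, which is precisely the MMSE multiuser detector output. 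Differentiating with respect to $\Sigmav^{-1}$ and equating to zero (using the same standard matrix-calculus identities already invoked in Proposition~\ref{ca:decorrelating}, together with the new contribution of $\tfrac12\tr\Sigmav$) gives $\hat{\Sigmav} = \sigma^2(\Av^T\Sv^T\Sv\Av + \sigma^2\Iv)^{-1}$. Since $Q$ is Gaussian, hard decisions are read off from its mode $\hat{\muv}$, completing the identification with the MMSE detector.

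I expect the only mildly delicate step to be the differentiation with respect to $\Sigmav^{-1}$, but this is the identical manipulation used in the decorrelating-detector proof and is entirely routine; everything else is bookkeeping, and the essential content is simply that the Gaussian prior $p(\bv)=\mathcal{N}(\zerov,\Iv)$ injects the regularizer $\sigma^2\Iv$ into the matrix inverse. Finally, I would append the same caveat that follows Proposition~\ref{ca:decorrelating}: with these jointly Gaussian postulated models the true posterior $p(\bv|\rv)$ is itself Gaussian, so the free-energy-minimizing $Q=\mathcal{N}(\hat{\muv},\hat{\Sigmav})$ in fact coincides with the exact Bayesian posterior, and the genuinely approximate character of VFEM only emerges once $p(\bv)$ is taken to be discrete, as in the later sections.
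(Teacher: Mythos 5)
Your proposal is correct and follows essentially the same route as the paper: evaluate the free energy in closed form (your entropy/likelihood/prior decomposition reproduces exactly the paper's expression, since the extra term $\tfrac12\muv^T\muv+\tfrac12\tr\Sigmav$ is precisely the $\sigma^2\Iv$ augmentation inside the braces of the paper's $\mathcal{F}(\muv,\Sigmav)$), then set the derivatives with respect to $\muv$ and $\Sigmav^{-1}$ to zero to obtain the stated $\hat{\muv}$ and $\hat{\Sigmav}$. Your closing remark that $Q$ here coincides with the exact Gaussian posterior is also accurate and mirrors the observation the paper makes after the decorrelating-detector proposition.
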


\begin{proof}
    Evaluating $\mathcal{F}(\lambda)$ yields a function of $\muv$ and $\Sigmav$:
    \begin{equation}\label{eq:linear_F}
    \mathcal{F}(\muv,\Sigmav) = -\frac{1}{2} \log|\Sigmav| +
    \frac{1}{2\sigma^2} \left\{ \muv^T(\Av^T\Sv^T \Sv\Av + \sigma^2 \Iv)\muv + \tr[(\Av^T\Sv^T \Sv\Av + \sigma^2 \Iv)\Sigmav] - 2\rv^T \Sv \Av \muv \right\}
    \end{equation}
    Solving $\partial \mathcal{F}(\muv)/ \partial \muv = \zerov$ and $\partial \mathcal{F}(\Sigmav)/ \partial
    \Sigmav^{-1} = \zerov$ leads to the following solution:
    \begin{equation} \label{eq:hatmuv_mmse}
    \begin{array}{rcl}
        \hat{\muv} &=& (\Av^T\Sv^T \Sv\Av + \sigma^2 \Iv)^{-1} \Av^T \Sv^T \rv \\
        \hat{\Sigmav} &=& \sigma^2 (\Av^T\Sv^T \Sv\Av + \sigma^2
        \Iv)^{-1}.
    \end{array}
    \end{equation}
Apparently, $\hat{\muv}$ in (\ref{eq:hatmuv_mmse}) can be identified
as the MMSE detector output.
\end{proof}

Note that the variational inference interpretation of decorrelating
and MMSE detectors also produces a covariance matrix $\Sigmav$ of
the $Q$ function, which is not available through conventional signal
processing techniques. $\Sigmav$ indicates the reliability of the
detector output, something the hard-decision detector is unable to
make use of. But it will prove valuable in SISO detectors, as
demonstrated in Sections \ref{sec:siso_mmse} and
\ref{sec:mean_field}.

\subsection{VFEM Interpretation of Interference Cancellation
Detectors}\label{sec:sic_mud}

Iterative multiuser detectors, and especially their convergence
behavior, have been actively researched in the past. In
\cite{Elder98}, linear SIC and PIC are categorized as the
Gauss-Seidel and Jacobi iterations for solving linear equations. SIC
is also analyzed in greater depth in \cite{Ras00} and \cite{Guo00}.
The study is later extended to clipped SIC in \cite{Tan01} through
the investigation of the variational inequality (VI) problem. Here
we offer an alternative view of SIC as the coordinate descent
algorithm applied to the minimization of
$\mathcal{F}(\muv,\Sigmav)$.

\begin{proposition}\label{ca:iterative} \emph{Linear/Clipped SIC
Detectors} may be derived from assuming the same distributions as in
(\ref{eq:postulate_decorrelating}) or (\ref{eq:postulate_linear}),
except by minimizing $\mathcal{F}(\muv,\Sigmav)$ using the
\emph{coordinate descent algorithm}. That is, in the $i$-th
iteration, for $k=1$ to $K$,
    \begin{equation}
    \BA{cl}
    \min_{\mu_k} &
    \mathcal{F}(\mu_1^{(i)},\cdots,\mu_{k-1}^{(i)},\mu_k,\mu_{k+1}^{(i-1)},\cdots,\mu_K^{(i-1)},\Sigmav)\\
    $s.t.$ & \mu_{min} \leq \mu_k \leq \mu_{max}.
    \EA\end{equation}
    The algorithm describes a
    linear SIC if $\mu_{min}=-\infty$ and $\mu_{max}=\infty$, and
    a clipped SIC otherwise.
\end{proposition}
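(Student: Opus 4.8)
The plan is to feed the closed-form free energies already obtained in the proofs of Propositions~\ref{ca:decorrelating} and~\ref{ca:linear} into a coordinate-wise minimization and read off the recursion. Writing $\mathbf{H} = \Av^T\Sv^T\Sv\Av$ for the decorrelating postulate (\ref{eq:postulate_decorrelating}) and $\mathbf{H} = \Av^T\Sv^T\Sv\Av + \sigma^2\Iv$ for the MMSE postulate (\ref{eq:postulate_linear}), and $\mathbf{g} = \Av^T\Sv^T\rv$, both (\ref{eq:decorrelating_F}) and (\ref{eq:linear_F}) take the form $\mathcal{F}(\muv,\Sigmav) = \frac{1}{2\sigma^2}\big(\muv^T\mathbf{H}\muv - 2\mathbf{g}^T\muv\big) + c(\Sigmav)$, where $c(\Sigmav)$ gathers the $\muv$-free terms. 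In particular the sub-problem over a single $\mu_k$ (everything else held fixed) is a strictly convex quadratic, since its leading coefficient $H_{kk} = A_k^2$ (resp.\ $A_k^2+\sigma^2$) is strictly positive; note also that $\Sigmav$ decouples from $\muv$, so whether or not one re-optimizes $\Sigmav$ between sweeps is immaterial to the $\muv$ updates.

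First I would treat the unconstrained case ($\mu_{min}=-\infty$, $\mu_{max}=\infty$). Setting $\partial\mathcal{F}/\partial\mu_k = 0$ gives $H_{kk}\mu_k + \sum_{l\neq k}H_{kl}\mu_l = g_k$, and using the freshly updated $\mu_l^{(i)}$ for $l<k$ together with the previous $\mu_l^{(i-1)}$ for $l>k$ is exactly the Gauss--Seidel sweep for the linear system $\mathbf{H}\muv = \mathbf{g}$. Substituting $\Av^T\Sv^T\rv = \Av\yv$ and $H_{kl} = A_kR_{kl}A_l$ with $R_{kk}=1$ turns the update into $\mu_k^{(i)} = A_k^{-1}\big(y_k - \sum_{l<k}R_{kl}A_l\mu_l^{(i)} - \sum_{l>k}R_{kl}A_l\mu_l^{(i-1)}\big)$ in the decorrelating case (and the same with an extra $\sigma^2$ in the denominator in the MMSE case). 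This is precisely successive interference cancellation: the matched-filter statistic $y_k$, minus the interference of the other users reconstructed from their current soft estimates, rescaled by the amplitude $A_k$ --- recovering the Gauss--Seidel interpretation of linear SIC noted in \cite{Elder98}.

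Finally, for general $\mu_{min},\mu_{max}$ I would invoke the strict convexity established above: the minimizer of a strictly convex quadratic over an interval is the projection of its unconstrained minimizer onto that interval, i.e.\ the clipped value, so the constrained coordinate descent yields exactly the clipped-SIC recursion, which collapses to linear SIC when the interval is the whole line. I do not anticipate a genuine obstacle here; the only points needing care are (i) the bookkeeping of which coordinates have already been refreshed within a sweep, so that the result is the \emph{successive} (Gauss--Seidel) rather than the \emph{parallel} (Jacobi/PIC) cancellation order, and (ii) observing that the argument is literally identical for the two postulated priors, the only difference being the presence or absence of the $\sigma^2\Iv$ term in $\mathbf{H}$.
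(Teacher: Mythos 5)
Your proposal is correct and follows essentially the same route as the paper: set $\partial\mathcal{F}/\partial\mu_k=0$ in the quadratic free energies (\ref{eq:decorrelating_F}) and (\ref{eq:linear_F}), recognize the coordinate-wise update as the familiar soft interference cancellation expression (the paper's (\ref{eq:linear_SIC}) and (\ref{eq:linear_SIC1})), and handle the bounded case by solving the same stationarity condition subject to the box constraint. The only cosmetic differences are that you make the Gauss--Seidel bookkeeping and the clipping-as-projection-of-the-unconstrained-minimizer step explicit, which the paper leaves implicit.
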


\begin{proof}
    Setting $\partial \mathcal{F}(\muv,\Sigmav)/ \partial \mu_k = 0$
    based on (\ref{eq:linear_F}) yields
    \begin{equation}\label{eq:linear_SIC_eq}
        A_k\sv_k^T \Sv\Av\muv - A_k\sv_k^T \rv + \sigma^2 \mu_k = 0.
    \end{equation}
    Rearranging the terms and defining $\muv_{\setminus k} =
    [\mu_1,\cdots,\mu_{k-1},0,\mu_{k+1},\cdots,\mu_K]^T$, the optimal
    $\mu_k$ is then expressed in the familiar linear interference
    cancellation form if $\mu_k$ is unbounded (\ie, $\mu_{min}=-\infty$ and $\mu_{max}=\infty$):
    \begin{equation}\label{eq:linear_SIC}
        \hat{\mu}_k = \frac{1}{A_k^2 + \sigma^2}
        A_k\sv_k^T(\rv-\Sv\Av\muv_{\setminus k}).
    \end{equation}
    Since updating $\mu_k$ ($k=1,\cdots,K$) consecutively subject to $\partial \mathcal{F}(\muv,\Sigmav)/ \partial \mu_k =
    0$ is the coordinate descent algorithm for
    minimizing $\mathcal{F}(\muv,\Sigmav)$, then (\ref{eq:linear_SIC}) corresponds to the coordinate descent implementation of the
    MMSE detector. On the other hand, setting $\partial \mathcal{F}(\muv,\Sigmav)/ \partial \mu_k = 0$
    based on (\ref{eq:decorrelating_F}) leads to the the coordinate descent implementation of the decorrelating detector:
    \begin{equation}\label{eq:linear_SIC1}
        \hat{\mu}_k = \frac{1}{A_k}\sv_k^T(\rv-\Sv\Av\muv_{\setminus
        k}),
    \end{equation}
    which is the standard-form SIC detector seen in the literature. If $\mu_{min}$ and $\mu_{max}$ are
    finite, we need to solve (\ref{eq:linear_SIC_eq}) subject
    to $\mu_{min} \leq \mu_k \leq \mu_{max}$, which corresponds to clipped SIC.
\end{proof}

To verify that (\ref{eq:linear_SIC}) and (\ref{eq:linear_SIC1}) do
converge to MMSE or decorrelator solutions, and to gain further
insights into the convergence behavior when the optimization
constraints are active (clipped SIC), we invoke the following
theorem \cite{Luo92}:

\begin{theorem}[Luo and Tseng, 1992]
\it{
    Consider an optimization problem:
    \begin{equation}\label{eq:theorem}
    \begin{array}{ll}
    min\ f(\xv)=g(\Ev \xv) + \cv^T\xv, &s.t.\ \xv \in \mathcal{X},
    \end{array}
    \end{equation}
    where $\mathcal{X}$ is a box (possibly unbounded) in
    $\mathbb{R}^n$, $f$ is a proper closed convex function in
    $\mathbb{R}^n$, $g$ is a proper closed convex function in
    $\mathbb{R}^m$, $\Ev$ is an $m\times n$ matrix having no zero
    column, and $\cv \in \mathbb{R}^n$.
    Also assume
    \begin{enumerate}
    \item The set of optimal solutions for (\ref{eq:theorem}),
    denoted by $\mathcal{X}^*$ is nonempty;
    \item The domain of $g$ is open and $g$ is strictly convex twice
    continuously differentiable on the domain;
    \item $\nabla^2 g(\Ev\xv^*)$ is positive definite for all $\xv^* \in
    \mathcal{X}^*$.
    \end{enumerate}

    Then if $\{\xv^r\}$ is a sequence of iterates generated by coordinate
    descent method according to the Almost Cyclic Rule or
    Gauss-Southwell Rule, $\{\xv^r\}$ converges at least linearly
    to an element of $\mathcal{X}^*$.
}
\end{theorem}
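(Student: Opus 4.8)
\emph{Proof idea (the result is that of \cite{Luo92}; here is a plan).} The plan is to follow the classical \emph{error-bound} route to linear convergence of feasible descent methods: the product structure $f(\xv)=g(\Ev\xv)+\cv^{T}\xv$ with $g$ strictly convex is exactly what makes this route available. The proof has three ingredients — a structural reduction, a \emph{local error bound}, and a per-sweep sufficient-decrease estimate — which are then combined to force the objective values, and hence the iterates, to converge geometrically.

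First I would record the consequences of the hypotheses. Since $g$ is strictly convex on its open domain, $\Ev\xv^{*}$ takes one and the same value $\mathbf{t}^{*}$ at every $\xv^{*}\in\mathcal{X}^{*}$ (otherwise the midpoint of two distinct optima would have strictly smaller objective), so $\cv^{T}\xv^{*}$ is likewise a constant $\nu^{*}$ and $\nabla f(\xv^{*})=\Ev^{T}\nabla g(\mathbf{t}^{*})+\cv$ is a fixed vector on $\mathcal{X}^{*}$; hence $\mathcal{X}^{*}=\{\xv\in\mathcal{X}:\Ev\xv=\mathbf{t}^{*},\ \cv^{T}\xv=\nu^{*}\}$ is a \emph{polyhedron}. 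The ``no zero column'' hypothesis on $\Ev$, together with strict convexity of $g$, makes every one-coordinate restriction of $f$ strictly convex, so each coordinate subproblem has a unique minimizer and the method is well defined. One also shows the iterates stay in a compact set (the objective is nonincreasing along the trajectory and, under the assumptions, its sublevel sets meet $\mathcal{X}$ in a bounded set), so $\nabla f$ is Lipschitz along the trajectory with some constant $L$; and near $\mathbf{t}^{*}$, assumption~(3) together with continuity of $\nabla^{2}g$ makes $g$ strongly convex with a modulus $\mu>0$.

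The heart of the matter — and the step I expect to be the main obstacle — is the \emph{local Lipschitzian error bound}: there are $\tau>0$ and $\varepsilon>0$ such that
\[
   \mathrm{dist}(\xv,\mathcal{X}^{*})\ \le\ \tau\,\bigl\|\xv-\Pi_{\mathcal{X}}\!\bigl(\xv-\nabla f(\xv)\bigr)\bigr\|
\]
for every feasible $\xv$ in the compact sublevel set whose natural residual (the norm on the right, with $\Pi_{\mathcal{X}}$ Euclidean projection onto the box $\mathcal{X}$) is at most $\varepsilon$. The idea is that a small residual forces $f(\xv)$ close to the optimal value; local strong convexity of $g$ then forces $\Ev\xv$ close to $\mathbf{t}^{*}$ and $\cv^{T}\xv$ close to $\nu^{*}$; thus $\xv$ \emph{almost} satisfies the affine system and box constraints defining $\mathcal{X}^{*}$, and Hoffman's error bound for polyhedral systems converts ``almost satisfies'' into the displayed distance estimate. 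Upgrading this to a genuinely \emph{Lipschitzian} (not merely H\"olderian) bound is delicate and rests essentially on the polyhedrality of $\mathcal{X}$ and on careful bookkeeping of which faces of the box are active near $\mathcal{X}^{*}$; this is the real content of \cite{Luo92}.

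Granting the error bound, the remaining steps are routine and I would carry them out as follows. Under the Almost Cyclic or Gauss--Southwell rule, every coordinate is updated at least once in a fixed number $T$ of consecutive iterations; treating such a window as one sweep $\xv^{r}\mapsto\xv^{r+T}$, exact coordinatewise minimization plus the Lipschitz bound give a sufficient-decrease estimate $f(\xv^{r})-f(\xv^{r+T})\ge\gamma\|\xv^{r}-\xv^{r+T}\|^{2}$ and a residual estimate $\bigl\|\xv^{r}-\Pi_{\mathcal{X}}(\xv^{r}-\nabla f(\xv^{r}))\bigr\|\le\beta\|\xv^{r}-\xv^{r+T}\|$ (the only obstruction to stationarity at $\xv^{r}$ is the motion made during the sweep). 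Since $f(\xv^{r})$ is nonincreasing and the trajectory bounded, the residual tends to $0$, so eventually the error bound applies and $\mathrm{dist}(\xv^{r},\mathcal{X}^{*})\le\tau\beta\|\xv^{r}-\xv^{r+T}\|$. A cost-to-go estimate — from convexity, the Lipschitz bound, the approximate optimality conditions satisfied by $\xv^{r+T}$, and this distance bound — then gives $f(\xv^{r+T})-f^{*}\le\sigma\|\xv^{r}-\xv^{r+T}\|^{2}$; combined with the sufficient decrease this yields $f(\xv^{r+T})-f^{*}\le(\sigma/\gamma)\bigl[(f(\xv^{r})-f^{*})-(f(\xv^{r+T})-f^{*})\bigr]$, i.e.\ the suboptimality falls by a fixed factor below $1$ over each window. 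Finally, summing $\|\xv^{r}-\xv^{r+T}\|\le\sqrt{(f(\xv^{r})-f(\xv^{r+T}))/\gamma}$ shows $\{\xv^{r}\}$ is Cauchy and converges to an element of $\mathcal{X}^{*}$, at least $R$-linearly. In the application made here, $g$ is a positive-definite quadratic and $\mathcal{X}$ a Cartesian box, so hypotheses (1)--(3) hold automatically, and the clipped and unclipped SIC recursions (\ref{eq:linear_SIC}) and (\ref{eq:linear_SIC1}) inherit linear convergence to the corresponding linear-detector (MMSE or decorrelator) solution.
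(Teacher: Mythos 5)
This theorem is not proved in the paper at all: it is quoted verbatim from \cite{Luo92} and invoked as a black box to justify the corollary on convergence of linear/clipped SIC, so there is no in-paper argument to compare yours against. Judged on its own terms, your sketch is a faithful outline of the actual Luo--Tseng proof: the invariance of $\Ev\xv$ and $\cv^T\xv$ over $\mathcal{X}^*$ (hence its polyhedral description), the local Lipschitzian error bound obtained by combining local strong convexity of $g$ near $\Ev\xv^*$ with Hoffman's bound for the polyhedral system, and the standard sufficient-decrease / cost-to-go chaining that converts the error bound into geometric decay of $f(\xv^r)-f^*$ and $R$-linear convergence of the iterates. You also correctly isolate the error bound as the genuinely hard step and correctly note that in the application made here ($g$ a positive-definite quadratic, $\mathcal{X}$ a Cartesian box) the hypotheses are automatic, which is all the paper actually needs.

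One caveat in your preliminary reductions: the claim that the sublevel sets of $f$ meet $\mathcal{X}$ in a bounded set does not follow from the stated hypotheses. With $\cv=\zerov$, $\Ev$ having a nontrivial null space, and $\mathcal{X}$ unbounded along a null direction, the sublevel sets are unbounded even though $\mathcal{X}^*$ is nonempty. Luo and Tseng circumvent this by showing that the image sequence $\{\Ev\xv^r\}$ stays in a compact subset of the domain of $g$ (which is what is actually needed to extract a Lipschitz constant for $\nabla f=\Ev^T\nabla g(\Ev\,\cdot)+\cv$ along the trajectory) and by bounding the distance of $\xv^r$ to $\mathcal{X}^*$ directly, rather than confining the iterates themselves to a compact set. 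This affects only the bookkeeping, not the architecture of your plan, but as written that step would fail in general.
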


Since the objective function of optimization,
$\mathcal{F}(\muv,\Sigmav)$, satisfies all conditions in the theorem
when the spreading codes are linearly independent, it is clear that
this theorem applies to the general linear/clipped SIC setting. Also
due to the objective function being quadratic and the constraints
being linear, there is a unique optimal solution in $\mathcal{X}^*$.
We may thus conclude the following:

\begin{corollary}
\emph{Linear/Clipped SIC} are guaranteed to converge to the unique
minimum free energy defined by $\mathcal{F}(\muv,\Sigmav)$ and the
constraint ($\mu_{min} \leq \mu_k \leq \mu_{max}$), and the rate of
convergence is at least linear.
\end{corollary}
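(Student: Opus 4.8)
The plan is to verify that the free energy $\mathcal{F}(\muv,\Sigmav)$, regarded as the objective of the coordinate descent scheme in Proposition~\ref{ca:iterative}, meets every hypothesis of the Luo--Tseng theorem, and then to argue separately that its optimal set is a single point so that ``the unique minimum free energy'' is well defined. First I would note that in the closed forms (\ref{eq:decorrelating_F}) and (\ref{eq:linear_F}) the terms depending on $\Sigmav$ (namely $-\tfrac12\log|\Sigmav|$ and the trace term) are functionally decoupled from those depending on $\muv$; hence the minimization over $\Sigmav$ is done once in closed form, giving $\hat\Sigmav$ as in (\ref{eq:hatmuv_decorrelating}) or (\ref{eq:hatmuv_mmse}), and the iteration of Proposition~\ref{ca:iterative} is genuinely a descent on the residual function of $\muv$ alone. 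So it suffices to study $\xv \mapsto \mathcal{F}(\xv,\hat\Sigmav)$ with $\xv=\muv$ over the box $\mathcal{X}=\{\xv:\mu_{min}\le x_k\le\mu_{max}\}$.

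Next I would cast this function into the form $f(\xv)=g(\Ev\xv)+\cv^T\xv+\text{const}$. For the MMSE postulation, factor $\Av^T\Sv^T\Sv\Av+\sigma^2\Iv=\Ev^T\Ev$ with $\Ev$ the $(N+K)\times K$ matrix obtained by stacking $\Sv\Av$ on top of $\sigma\Iv$, and put $g(\zv)=\tfrac{1}{2\sigma^2}\|\zv\|^2$, $\cv=-\tfrac{1}{\sigma^2}\Av^T\Sv^T\rv$; for the decorrelating postulation, take $\Ev=\Sv\Av$ with the same $g$ and $\cv$. In both cases $\Ev$ has no zero column (the $k$-th column of the MMSE $\Ev$ is never zero since $\sigma>0$; for the decorrelator one needs only $A_k\neq 0$ and $\sv_k\neq 0$), $g$ is a proper closed convex function on the open domain $\mathbb{R}^m$, is strictly convex and $C^\infty$, and $\nabla^2 g\equiv\tfrac{1}{\sigma^2}\Iv\succ 0$ everywhere, so hypotheses~2 and~3 hold automatically, the latter in particular at every point of $\mathcal{X}^*$. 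For hypothesis~1 (nonemptiness of $\mathcal{X}^*$): when the box is bounded (clipped SIC) this follows from continuity and compactness; when the box is unbounded (linear SIC) it follows from coercivity of the quadratic, which holds in the MMSE case because $\Av^T\Sv^T\Sv\Av+\sigma^2\Iv\succ 0$, and in the decorrelating case because $\Av^T\Sv^T\Sv\Av\succ 0$ under the standing assumption that the spreading codes are linearly independent and the amplitudes nonzero.

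Then I would observe that the consecutive cyclic sweep $\mu_1,\dots,\mu_K$ of Proposition~\ref{ca:iterative} is a special case of the Almost Cyclic Rule and that each update is exact coordinate minimization, so the Luo--Tseng theorem applies directly and yields at-least-linear convergence of the iterates to some $\muv^\star\in\mathcal{X}^*$. It remains to pin down uniqueness. Since the objective is a convex quadratic over a box, $\mathcal{X}^*$ is convex, and it reduces to a single point exactly when the quadratic is strictly convex, i.e.\ when the relevant Hessian---$\tfrac{1}{\sigma^2}(\Av^T\Sv^T\Sv\Av+\sigma^2\Iv)$ for MMSE, $\tfrac{1}{\sigma^2}\Av^T\Sv^T\Sv\Av$ for the decorrelator---is positive definite; this is automatic in the MMSE case and is exactly the linear-independence hypothesis in the decorrelating case. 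Finally, comparing $\muv^\star$ with the unconstrained stationary points (\ref{eq:hatmuv_decorrelating}) and (\ref{eq:hatmuv_mmse}) identifies it with the decorrelator or MMSE output when $\mu_{min}=-\infty$, $\mu_{max}=\infty$, completing the claim. The only real obstacle here is bookkeeping rather than mathematics: choosing the factorization $\Ev^T\Ev$ so that $\Ev$ provably has no zero column, and making sure the degenerate singular/unbounded-box decorrelator case is excluded by the linear-independence hypothesis rather than overlooked.
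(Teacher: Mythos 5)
Your proposal is correct and follows essentially the same route as the paper: the paper also obtains the corollary by asserting that $\mathcal{F}(\muv,\Sigmav)$ satisfies the hypotheses of the Luo--Tseng theorem when the spreading codes are linearly independent, and deduces uniqueness from the objective being a (strictly convex) quadratic over a box. The only difference is that you explicitly carry out the verification the paper leaves implicit --- the separability in $\Sigmav$, the factorization $f(\xv)=g(\Ev\xv)+\cv^T\xv$ with a no-zero-column $\Ev$, and the observation that uniqueness really requires positive definiteness of the Hessian rather than merely ``quadratic objective, linear constraints'' --- which is a welcome tightening but not a different argument.
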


This result is proven for the first time to our knowledge.
Additionally, we may relax the conventional cyclic order of
iteration for SIC and assert that as long as the coordinates are
iterated upon according to either the \emph{Almost Cyclic Rule} or
\emph{Gauss-Southwell Rule}, at least linear convergence rate is
guaranteed. These relaxed iteration rules are discussed in
\cite{Luo92}.

In the sequel, we will investigate a few SISO multiuser detectors
within the variational inference framework. Unlike the uncoded
detectors studied previously, we will now make use of the soft
output provided by $Q(\bv)$ to facilitate iterative multiuser joint
decoding. We will demonstrate that a unique SISO detector is
determined by choosing 1) the postulated distributions (like
(\ref{eq:postulate_decorrelating}) and (\ref{eq:postulate_linear}),
but with biased priors), and 2) the message-passing schedule for
joint decoding.

\subsection{VFEM Interpretation of Gaussian SISO Multiuser
Detector}\label{sec:siso_mmse}

\begin{boxitjournal}
\begin{definition}
A \emph{Gaussian SISO Multiuser Detector} is a multiuser detector
that obtains soft estimates $Q(\bv)$ through the VFEM routine,
subject to the following postulated distributions:
    \begin{equation}\label{eq:siso_mud_postulate}
    \left\{\begin{array}{rcl}
    p(\bv) &=& \mathcal{N}(\tilde{\bv}, \Wv)\\
    p(\rv|\bv) &=& \mathcal{N}(\Sv\Av\bv, \sigma^2 \Iv)\\
    Q(\bv) &=& \mathcal{N}(\muv, \Sigmav),
    \end{array} \right.
    \end{equation}
where $\tilde{\bv} = [\tilde{b}_1,\cdots,\tilde{b}_K]^T$ are the
soft bit estimates from the APP decoder, and $\Wv =
\diag([1-\tilde{b}_1^2,\cdots,1-\tilde{b}_K^2]^T)$.
\end{definition}
\end{boxitjournal}

We name this detector \emph{Gaussian SISO MUD} because, like the
uncoded linear detectors in Section \ref{sec:linear_MUD}, Gaussian
densities are assumed for the prior and posterior distributions of
$\bv$. But unlike the linear detectors, this detector is capable of
accepting informative priors, as well as generating soft posterior
bit probability.

\subsubsection{The Existing Form of Gaussian SISO
MUD}\label{sec:exist_gaussian_siso}

A ground-breaking turbo detection scheme was proposed by Wang and
Poor \cite{Wangx99}, spurring a tremendous amount of interest in
turbo MUD and turbo equalization in the years that followed. It
involves a two stage process: First, the soft bit estimate from the
APP decoder is remodulated and subtracted from the matched filter
output: \BEQ\label{eq:yv_k}
    \yv_k \triangleq \yv - \Rv\Av\tilde{\bv}_k,
\EEQ
In (\ref{eq:yv_k}), $\tilde{\bv}_k =
[\tilde{b}_1,\cdots,\tilde{b}_{k-1},0,\tilde{b}_{k+1},\cdots,\tilde{b}_K]^T$,
which is equal to the soft bit estimates coming from the APP
decoder, $\tilde{\bv}$, except for the $k$-th element being $0$.

Second, a linear MMSE filter is used to further suppress the
residual interference. It can be shown that the filter output is
\BEQ\label{eq:z_k}
    z_k = A_k\ev_k^T\overbrace{\left[\Av^T\Wv_k\Av + \sigma^2\Rv^{-1}\right]^{-1}}^{\mbox{MMSE with residual MAI}} \overbrace{ \left[\Rv^{-1}\yv-\Av\tilde{\bv}_k\right]}^{\mbox{Soft IC}},
\EEQ where $\ev_k$ denotes a $K$-vector of all zeros, except for the
$k$-th element being $1$, and $\Wv_k =
\diag([1-\tilde{b}_1^2,\cdots,1-\tilde{b}_{k-1}^2,1,1-\tilde{b}_{k+1}^2,\cdots,1-\tilde{b}_K^2]^T)$.

In order to convert the MMSE filter output $z_k$ into a soft
estimate in the discrete domain, a Gaussian equivalent channel
assumption is made about $z_k$, \ie,
\BEQ
    z_k = \alpha_k b_k + \eta_k,
\EEQ where $\alpha_k$ is a constant and $p(\eta_k) =
\mathcal{N}(0,\nu_k^2)$. In other words, $p(z_k|b_k) =
\mathcal{N}(\alpha_k b_k, \nu_k^2)$. Since $\alpha_k$ and $\nu_k^2$
can be found to be, respectively, \BEQ\label{eq:alpha_nu}\BA{rcl}
    \alpha_k &=& A_k^2 \left[(\Av^T\Wv_k\Av + \sigma^2 \Rv^{-1})^{-1} \right]_{k,k}\\
    \nu_k^2 &=& z_k - z_k^2, \EA\EEQ
the output EXT can be written as \BEQ
    \LLR_{mud}(b_k) = \log \frac{p(\rv|b_k=1)}{p(\rv|b_k=-1)} \thickapprox \log \frac{p(z_k|b_k = 1)}{p(z_k|b_k = -1)} =
    \frac{2 z_k}{1-\alpha_k}.
\EEQ

In essence, the target distribution $p(\rv|b_k)$ is approximated by
$p(z_k|b_k)$ to obtain the EXT. We will now demonstrate that with
the VFEM formulation, the two-stage process can be derived from a
single optimization procedure, and without the heuristic Gaussian
assumption about $z_k$.

\begin{proposition}
    The SISO multiuser detection scheme described in \cite{Wangx99}
    is an instance of \emph{Gaussian SISO MUD}.
\end{proposition}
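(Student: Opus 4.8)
The plan is to run the three-step VFEM routine of Section~\ref{sec:var_inference} (postulation, evaluation, optimization) on the densities in~(\ref{eq:siso_mud_postulate}) under the hybrid extrinsic-information schedule of Section~\ref{sec:hybrid}, and then to verify that the resulting approximate marginal $Q(b_k)$ reproduces the two-stage process~(\ref{eq:yv_k})--(\ref{eq:z_k}) exactly, \emph{without} invoking the ad hoc ``Gaussian equivalent channel'' postulate $p(z_k|b_k)=\mathcal{N}(\alpha_k b_k,\nu_k^2)$.

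\emph{Evaluation and optimization.} First I would evaluate $\mathcal{F}(\muv,\Sigmav)$ from~(\ref{eq:free_F}) for~(\ref{eq:siso_mud_postulate}). This is the computation behind~(\ref{eq:linear_F}) verbatim, except that $p(\bv)=\mathcal{N}(\tilde{\bv},\Wv)$ contributes $\frac{1}{2}(\muv-\tilde{\bv})^T\Wv^{-1}(\muv-\tilde{\bv})+\frac{1}{2}\tr(\Wv^{-1}\Sigmav)$ in place of the term $\frac{1}{2}(\muv^T\muv+\tr\Sigmav)$ coming from $\mathcal{N}(\zerov,\Iv)$. Solving $\partial\mathcal{F}/\partial\muv=\zerov$ and $\partial\mathcal{F}/\partial\Sigmav^{-1}=\zerov$, and using $\Av^T=\Av$, $\Sv^T\Sv=\Rv$, $\Sv^T\rv=\yv$, gives
\BEQ
  \hat{\muv}=(\Av\Rv\Av+\sigma^2\Wv^{-1})^{-1}(\Av\yv+\sigma^2\Wv^{-1}\tilde{\bv}),\qquad \hat{\Sigmav}=\sigma^2(\Av\Rv\Av+\sigma^2\Wv^{-1})^{-1},
\EEQ
which, exactly as in the remark after Proposition~\ref{ca:decorrelating}, is also the exact Gaussian posterior (the variational restriction is tight here). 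To generate the EXT for $b_k$, the hybrid schedule discards the self-prior $p(b_k)$, so in the formulas above $(\tilde{\bv},\Wv)$ is replaced by $(\tilde{\bv}_k,\Wv_k)$, with $[\tilde{\bv}_k]_k=0$ and $[\Wv_k]_{kk}=1$.

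\emph{Matching to Wang--Poor.} The heart of the proof is then a matrix-inversion-lemma calculation. Let $\mathbf{B}\triangleq(\Av^T\Wv_k\Av+\sigma^2\Rv^{-1})^{-1}$ be the matrix in~(\ref{eq:z_k}) (recall $\Av^T=\Av$). Woodbury's identity gives $(\Av\Rv\Av+\sigma^2\Wv_k^{-1})^{-1}=\frac{1}{\sigma^2}(\Wv_k-\Wv_k\Av\mathbf{B}\Av\Wv_k)$, and since $\mathbf{B}^{-1}=\Av\Wv_k\Av+\sigma^2\Rv^{-1}$ one has the auxiliary identity $\Iv-\mathbf{B}\Av\Wv_k\Av=\sigma^2\mathbf{B}\Rv^{-1}$; substituting both collapses the mean to the soft-IC form $\hat{\muv}=\tilde{\bv}_k+\Wv_k\Av\mathbf{B}(\Rv^{-1}\yv-\Av\tilde{\bv}_k)$. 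Taking the $k$-th coordinate and using $[\tilde{\bv}_k]_k=0$, $[\Wv_k]_{kk}=1$ and $\ev_k^T\Av=A_k\ev_k^T$ yields $\hat{\mu}_k=A_k\ev_k^T\mathbf{B}(\Rv^{-1}\yv-\Av\tilde{\bv}_k)$, which is precisely $z_k$ of~(\ref{eq:z_k}); similarly $\hat{\Sigma}_{kk}=\ev_k^T(\Wv_k-\Wv_k\Av\mathbf{B}\Av\Wv_k)\ev_k=1-A_k^2[\mathbf{B}]_{kk}=1-\alpha_k$, with $\alpha_k$ as in~(\ref{eq:alpha_nu}).

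\emph{Extrinsic information and the main obstacle.} Finally, in the hybrid schedule the message leaving the detector is $Q(b_k)\propto p(\rv|b_k)$; evaluating the Gaussian $Q(b_k)=\mathcal{N}(\hat{\mu}_k,\hat{\Sigma}_{kk})$ at $b_k=\pm1$ and taking the log-ratio gives $\LLR_{mud}(b_k)=\frac{2\hat{\mu}_k}{\hat{\Sigma}_{kk}}=\frac{2z_k}{1-\alpha_k}$, identical to the Wang--Poor extrinsic LLR, so the Gaussian-equivalent-channel step leading to~(\ref{eq:alpha_nu}) is bypassed rather than assumed. I expect the Woodbury manipulation, together with the bookkeeping of the hybrid-schedule substitution $(\tilde{\bv},\Wv)\mapsto(\tilde{\bv}_k,\Wv_k)$, to be the only real obstacle: it is precisely this ``reset'' of the $k$-th coordinate of the prior that both removes the self-prior of $b_k$ and makes the $k$-th marginal of the common optimizer $\hat{\muv}$ coincide with the individually computed MMSE output $z_k$.
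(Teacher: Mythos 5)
Your proposal is correct and follows essentially the same route as the paper: postulate the Gaussian densities with the $k$-th prior component reset to $(\tilde{b}_k,W_{kk})=(0,1)$, minimize the resulting quadratic free energy to get $\hat{\muv}$ and $\hat{\Sigmav}$, and use the matrix inversion lemma to identify $\hat{\mu}_k$ with $z_k$ and $[\hat{\Sigmav}]_{kk}$ with $1-\alpha_k$, so that $\LLR_{mud}(b_k)=2z_k/(1-\alpha_k)$ without the Gaussian-equivalent-channel assumption. The only cosmetic differences are that you make the Woodbury algebra for the mean explicit (the paper asserts the identification of $\mu_k'$ with $z_k$ directly and reserves the inversion lemma for the covariance) and that you label the schedule ``hybrid'' where the proof text invokes the sequential rule --- both discard the self-prior identically at the detector, and the paper itself notes afterwards that the full Wang--Poor scheme is the hybrid variant.
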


\BP
If the extrinsic information is extracted following the sequential
schedule in Section \ref{sec:sequential}, by ignoring the prior
information for $b_k$, then (\ref{eq:siso_mud_postulate}) may be
modified as
    \begin{equation}\label{eq:siso_mud_postulate1}
    \left\{\begin{array}{rcl}
    p(\bv) &=& \mathcal{N}(\tilde{\bv}_k, \Wv_k)\\
    p(\rv|\bv) &=& \mathcal{N}(\Sv\Av\bv, \sigma^2 \Iv)\\
    Q(\bv) &=& \mathcal{N}(\muv_k, \Sigmav_k),
    \end{array} \right.
    \end{equation}
where $\tilde{\bv}_k =
[\tilde{b}_1,\cdots,\tilde{b}_{k-1},0,\tilde{b}_{k+1},\cdots,\tilde{b}_K]^T$
and $\Wv_k =
\diag([1-\tilde{b}_1^2,\cdots,1-\tilde{b}_{k-1}^2,1,1-\tilde{b}_{k+1}^2,\cdots,1-\tilde{b}_K^2]^T)$.
From (\ref{eq:siso_mud_postulate1}), it can be shown that
\begin{equation}\label{eq:F_siso}
\begin{array}{rcl}
    \mathcal{F}_{gauss}(\muv_k,\Sigmav_k) &=& \frac{1}{2\sigma^2} [ \muv_k^T(\Av^T\Sv^T\Sv\Av+\sigma^2\Wv_k^{-1})\muv_k - 2(\rv^T\Sv\Av +\sigma^2\tilde{\bv}_k^T\Wv_k^{-1})\muv_k]\\
    && -\frac{1}{2}\log|\Sigmav_k| + \frac{1}{2}\tr(\Wv_k^{-1}\Sigmav_k) + \frac{1}{2\sigma^2}\tr(\Av^T\Sv^T\Sv\Av\Sigmav_k).
\end{array}
\end{equation}
Let $\mu_k'$ denote the $k$-th element of $\muv_k$. Solving
$\partial \mathcal{F}_{gauss}(\muv_k,\Sigmav_k)/\partial \mu_k' = 0$
yields
\begin{equation}\label{eq:mu}
    \mu_k' =
    \ev_k^T(\Av^T\Sv^T\Sv\Av+\sigma^2\Wv_k^{-1})^{-1}\Av^T\Sv^T(\rv-\Sv\Av\tilde{\bv}_k),
\end{equation}
which is identical to $z_k$ in (\ref{eq:z_k}).

One piece of information that the MMSE-based detector in
\cite{Wangx99} does not have is the covariance matrix of the
posterior distribution, $\Sigmav_k$, which can be shown to be
\BEQ\label{eq:Sigma}
    \Sigmav_k = \left(\frac{1}{\sigma^2}\Av^T\Sv^T\Sv\Av +
    \Wv_k^{-1}\right)^{-1}.
\EEQ

In other words, the marginal posterior distribution of $b_k$ is
$Q(b_k) = \mathcal{N}(\mu_k', [\Sigmav_k]_{k,k})$. Since the prior
distribution of $b_k$ is ignored during the detection operation,
$Q(b_k)$ obtained as such is in fact proportional to $p(\rv|b_k)$.
Therefore, \BEQ
    \LLR_{mud}(b_k) = \log \frac{p(\rv|b_k=1)}{p(\rv|b_k=-1)} \thickapprox \log \frac{Q(b_k=1)}{Q(b_k=-1)} =
    \frac{2\mu_k'}{[\Sigmav_k]_{k,k}}.
\EEQ

Applying the matrix inversion lemma on $\Sigmav_k$ in
(\ref{eq:Sigma}), we have \BEQ
    \Sigmav_k = \Wv_k - \Wv_k \Av(\Av\Wv_k\Av + \sigma^2 \Rv^{-1} )^{-1} \Av
    \Wv_k.
\EEQ
Since $[\Wv_k]_{k,k} = 1$, $[\Sigmav_k]_{k,k} = 1-
A_k^2[(\Av\Wv_k\Av + \sigma^2 \Rv^{-1})^{-1}]_{k,k} = 1 - \alpha_k$,
where $\alpha_k$ is as defined in (\ref{eq:alpha_nu}). Therefore,
\BEQ
    \LLR_{mud}(b_k) = \frac{2\mu_k'}{[\Sigmav_k]_{k,k}} = \frac{2 z_k}{1-\alpha_k}.
\EEQ
We have thus re-derived the Wang-Poor scheme via a radically
different approach. It is remarkable how the variational inference
viewpoint leads to exactly the same outcome as \cite{Wangx99}, while
the conditional Gaussian assumption made about the MMSE filter
output is no longer necessary.
\EP

After taking APP decoding into account, the Wang-Poor turbo MUD
algorithm as a whole can be seen as hybrid-Gaussian-SISO MUD. In the
next section, we will systematically investigate all three possible
scheduling schemes applied to Gaussian SISO MUD.

\subsubsection{The Standard Forms of Gaussian SISO
MUD}\label{sec:stand_gaussian_siso}

\begin{table}
\caption{Three scheduling schemes of turbo MUD employing Gaussian
SISO MUD.}\label{tab:gauss_siso} \small
\begin{center}
\setlength{\tabcolsep}{3pt}
\begin{tabular}{|lllcl|}  \hline
\multicolumn{5}{|c|}{\textbf{Sequential-Gaussian-SISO}} \\
\hline
\multicolumn{5}{|l|}{\emph{Initialization}: $\tilde{\bv} = \zerov$} \\
\multicolumn{5}{|l|}{FOR $j = 1:J$ (\emph{Outer Iteration})}\\
    &\multicolumn{4}{l|}{FOR $k=1:K$} \\
    &   & $\tilde{\bv}_k$ &=& $[\tilde{b}_1,\cdots,\tilde{b}_{k-1},0,\tilde{b}_{k+1},\cdots,\tilde{b}_K]^T$\\
    &   & $\Wv_k$ &=&
            $\diag([1-\tilde{b}_1^2,\cdots,1-\tilde{b}_{k-1}^2,1,1-\tilde{b}_{k+1}^2,\cdots,1-\tilde{b}_K^2]^T)$\\
    &   & $\mu_k'$ &=& $A_k\ev_k^T\left[\Av^T\Wv_k\Av + \sigma^2\Rv^{-1}\right]^{-1} \left[\Rv^{-1}\yv-\Av\tilde{\bv}_k\right]$\\
    &   & $\alpha_k$ &=& $A_k^2 \left[(\Av^T\Wv_k\Av + \sigma^2 \Rv^{-1})^{-1} \right]_{k,k}$ \\
    &   & \multicolumn{3}{l|}{$\LLR_{mud}(b_k) = \frac{2\mu_k'}{1-\alpha_k}$} \\
    &   & \multicolumn{3}{l|}{$\LLR_{dec}(b_k) \stackrel{\scriptsize \mbox{Decoding}}{\Longleftarrow} \LLR_{mud}(b_k)$} \\
    &   & $\tilde{b}_k$ &=& $\tanh[\LLR_{dec}(b_k)/2]$\\
    &END& & &\\
END& & & &\\ \hline \hline

\multicolumn{5}{|c|}{\textbf{Flooding-Gaussian-SISO}} \\
\hline
\multicolumn{5}{|l|}{\emph{Initialization}: $\tilde{\bv} = \zerov$} \\
\multicolumn{5}{|l|}{FOR $j = 1:J$ (\emph{Outer Iteration})}\\
    &\multicolumn{4}{l|}{FOR $k=1:K$} \\
    &   & $\tilde{\bv}_k$ &=& $[\tilde{b}_1,\cdots,\tilde{b}_{k-1},0,\tilde{b}_{k+1},\cdots,\tilde{b}_K]^T$\\
    &   & $\Wv$ &=&
            $\diag([1-\tilde{b}_1^2,\cdots,1-\tilde{b}_K^2]^T)$\\
    &   & $\check{\mu}_k$ &=& $A_k\ev_k^T\left[\Av^T\Wv\Av + \sigma^2\Rv^{-1}\right]^{-1} \left[\Rv^{-1}\yv-\Av\tilde{\bv}_k\right]$\\
    &   & $\check{\alpha}_k$ &=& $(1-\tilde{b}_k^2)A_k^2 \left[(\Av^T\Wv\Av + \sigma^2 \Rv^{-1})^{-1} \right]_{k,k}$ \\
    &   & \multicolumn{3}{l|}{$\LLR_{mud}(b_k) = \frac{2\check{\mu}_k}{1-\check{\alpha}_k}$} \\
    &END& & &\\

    &\multicolumn{4}{l|}{FOR $k=1:K$} \\
    &   & \multicolumn{3}{l|}{$\LLR_{dec}(b_k) \stackrel{\scriptsize \mbox{Decoding}}{\Longleftarrow} \LLR_{mud}(b_k)$} \\
    &   & $\tilde{b}_k$ &=& $\tanh[\LLR_{dec}(b_k)/2]$\\
    &END& & &\\
END& & & &\\
\hline\hline

\multicolumn{5}{|c|}{\textbf{Hybrid-Gaussian-SISO}} \\
\hline
\multicolumn{5}{|l|}{\emph{Initialization}: $\tilde{\bv} = \zerov$} \\
\multicolumn{5}{|l|}{FOR $j = 1:J$ (\emph{Outer Iteration})}\\
    &\multicolumn{4}{l|}{FOR $k=1:K$} \\
    &   & $\tilde{\bv}_k$ &=& $[\tilde{b}_1,\cdots,\tilde{b}_{k-1},0,\tilde{b}_{k+1},\cdots,\tilde{b}_K]^T$\\
    &   & $\Wv_k$ &=&
            $\diag([1-\tilde{b}_1^2,\cdots,1-\tilde{b}_{k-1}^2,1,1-\tilde{b}_{k+1}^2,\cdots,1-\tilde{b}_K^2]^T)$\\
    &   & $\mu_k'$ &=& $A_k\ev_k^T\left[\Av^T\Wv_k\Av + \sigma^2\Rv^{-1}\right]^{-1} \left[\Rv^{-1}\yv-\Av\tilde{\bv}_k\right]$\\
    &   & $\alpha_k$ &=& $A_k^2 \left[(\Av^T\Wv_k\Av + \sigma^2 \Rv^{-1})^{-1} \right]_{k,k}$ \\
    &   & \multicolumn{3}{l|}{$\LLR_{mud}(b_k) = \frac{2\mu_k'}{1-\alpha_k}$} \\
    &END& & &\\

    &\multicolumn{4}{l|}{FOR $k=1:K$} \\
    &   & \multicolumn{3}{l|}{$\LLR_{dec}(b_k) \stackrel{\scriptsize \mbox{Decoding}}{\Longleftarrow} \LLR_{mud}(b_k)$} \\
    &   & $\tilde{b}_k$ &=& $\tanh[\LLR_{dec}(b_k)/2]$\\
    &END& & &\\
END& & & &\\
\hline

\end{tabular}
\setlength{\tabcolsep}{5pt}
\end{center}
\normalsize
\end{table}

In Table \ref{tab:disc_siso}, we summarize three different versions
of standard Gaussian SISO MUD. In the following, we point out some
of the major characteristics associated with each one, and, in
particular, introduce the new flooding schedule implementation.

\emph{Sequential-Gaussian-SISO:} In Section
\ref{sec:exist_gaussian_siso}, we presented an elegant
variational-inference-based approach to obtain the EXT at the SISO
detector output, which coincides with the EXT conventionally
calculated through soft interference cancellation and MMSE
filtering.

In contrast to \cite{Wangx99}, however, where the EXT's are stored
until all users are processed and then used for APP decoding in
parallel, the sequential schedule requires the EXT,
$\LLR_{mud}(b_k)$, be directly passed down to the APP decoder. Then
the EXT from the APP decoder, viewed by the detector as the updated
prior $\tilde{b}_k$, is immediately used for the detection of
$b_{k+1}$.

\emph{Flooding-Gaussian-SISO:} The flooding schedule allows the APP
decoding of all users to be done in parallel. In the detection
stage, some changes to the derivation presented in Section
\ref{sec:exist_gaussian_siso} are needed, since the prior
information of $b_k$ should not be ignored as is done in
(\ref{eq:siso_mud_postulate1}). Instead, the postulated
distributions in (\ref{eq:siso_mud_postulate}) are adopted.

Given (\ref{eq:siso_mud_postulate}), the free energy becomes
\begin{equation}\label{eq:F_siso1}
\begin{array}{rcl}
    \mathcal{F}_{gauss}(\muv,\Sigmav) &=& \frac{1}{2\sigma^2} [ \muv^T(\Av^T\Sv^T\Sv\Av+\sigma^2\Wv^{-1})\muv - 2(\rv^T\Sv\Av +\sigma^2\tilde{\bv}^T\Wv^{-1})\muv]\\
    && -\frac{1}{2}\log|\Sigmav| + \frac{1}{2}\tr(\Wv^{-1}\Sigmav) + \frac{1}{2\sigma^2}\tr(\Av^T\Sv^T\Sv\Av\Sigmav).
\end{array}
\end{equation}

Solving $\partial \mathcal{F}(\muv,\Sigmav)/\partial \muv = \zerov$
and $\partial \mathcal{F}(\muv,\Sigmav)/\partial \Sigmav^{-1} =
\zerov$ leads to the minimizer of
$\mathcal{F}_{gauss}(\muv,\Sigmav)$ in (\ref{eq:F_siso}):
\BEQ\label{eq:mu_optimal}\BA{rcl}
    \muv &=& \tilde{\bv} +
    (\Av^T\Sv^T\Sv\Av+\sigma^2\Wv^{-1})^{-1}\Av^T\Sv^T(\rv-\Sv\Av\tilde{\bv})\\
    \Sigmav &=& (\sigma^{-2}\Av^T\Sv^T\Sv\Av+\Wv^{-1})^{-1}.
\EA\EEQ

It implies that the approximate posterior distribution, $p(\bv|\rv)
\thickapprox Q(\bv) = \mathcal{N}(\muv, \Sigmav)$. In other words,
the marginal posterior distribution of $b_k$ is $p(b_k|\rv)
\thickapprox \mathcal{N}(\mu_k, [\Sigmav]_{k,k})$. Recalling in
(\ref{eq:siso_mud_postulate}), $p(b_k) = \mathcal{N}(\tilde{b}_k,
1-\tilde{b}_k^2)$, if we apply the flooding schedule in Section
\ref{sec:flooding} to extract the EXT, then \BEQ
    p(\rv|b_k) \varpropto \frac{p(b_k|\rv)}{p(b_k)} \thickapprox \frac{Q(b_k)}{p(b_k)}=
    \mathcal{N}(\mu_{ext},\sigma_{ext}^2),
\EEQ where \BEQ\label{eq:mu_sigma}\BA{rcl}
    \mu_{ext} &=& \sigma_{ext}^2\left(\frac{\mu_k}{[\Sigmav]_{k,k}} - \frac{\tilde{b}_k}{1-\tilde{b}_k^2}\right) \\
    \frac{1}{\sigma_{ext}^2} &=& \frac{1}{[\Sigmav]_{k,k}} - \frac{1}{1-\tilde{b}_k^2}.
\EA\EEQ
(\ref{eq:mu_sigma}) is true, because if
$\mathcal{N}(\mu_1,\sigma_1^2)\mathcal{N}(\mu_2,\sigma_2^2)
\varpropto \mathcal{N}(\mu_3,\sigma_3^2)$, then \cite{Ksch01}
\BEQ\BA{rcl}
    \frac{\mu_3}{\sigma_3^2} &=& \frac{\mu_1}{\sigma_1^2} +
    \frac{\mu_2}{\sigma_2^2}\\
    \frac{1}{\sigma_3^2} &=& \frac{1}{\sigma_1^2} +
    \frac{1}{\sigma_2^2}.
\EA\EEQ

Finally, sampling $\mathcal{N}(\mu_{ext},\sigma_{ext}^2)$ at $b_k =
1$ and $b_k=-1$, we obtain
\BEQ\displaystyle\BA{rcl}
    \LLR_{mud}(b_k) &=& \log \frac{p(\rv|b_k=1)}{p(\rv|b_k=-1)}\\
    &\thickapprox& \frac{2\mu_{ext}}{\sigma_{ext}^2}\\
    &=& \frac{2 \mu_k}{[\Sigmav]_{k,k}} - \frac{2 \tilde{b}_k}{1-\tilde{b}_k^2}\\
    &=& \frac{2 \ev_k^T[\tilde{\bv} + \Wv\Av(\Av\Wv\Av + \sigma^2\Rv^{-1})^{-1}(\Rv^{-1}\yv - \Av\tilde{\bv})]}{(1-\tilde{b}_k^2) - (1-\tilde{b}_k^2)^2 A_k^2\left[(\Av\Wv\Av + \sigma^2\Rv^{-1})\right]_{k,k}} - \frac{2
    \tilde{b}_k}{1-\tilde{b}_k^2}\\
    &=& \frac{2 \ev_k^T[\Wv\Av(\Av\Wv\Av + \sigma^2\Rv^{-1})^{-1}(\Rv^{-1}\yv - \Av\tilde{\bv}_k)] - 2\tilde{b}_k (1-\tilde{b}_k^2) A_k^2\left[(\Av\Wv\Av + \sigma^2\Rv^{-1})\right]_{k,k}}{(1-\tilde{b}_k^2)\left\{1 - (1-\tilde{b}_k^2) A_k^2\left[(\Av\Wv\Av +
    \sigma^2\Rv^{-1})^{-1}\right]_{k,k}\right\}}\\
    &=& \frac{2(1-\tilde{b}_k^2)A_k\ev_k^T (\Av\Wv\Av + \sigma^2\Rv^{-1})^{-1}(\Rv^{-1}\yv - \Av\tilde{\bv}_k)}{(1-\tilde{b}_k^2)\left\{1 - (1-\tilde{b}_k^2) A_k^2\left[(\Av\Wv\Av +
    \sigma^2\Rv^{-1})^{-1}\right]_{k,k}\right\}}\\
    &=& \frac{2\check{\mu}_k}{1 - \check{\alpha}_k^2},
\EA\EEQ
where \BEQ\label{eq:check_mu_alpha}\BA{rcl}
    \check{\mu}_k &=& A_k \ev_k^T(\Av\Wv\Av + \sigma^2\Rv^{-1})^{-1}(\Rv^{-1}\yv -
    \Av\tilde{\bv}_k)\\
    \check{\alpha}_k &=& (1-\tilde{b}_k^2) A_k^2\left[(\Av\Wv\Av +
    \sigma^2\Rv^{-1})^{-1}\right]_{k,k}.
\EA\EEQ
In (\ref{eq:check_mu_alpha}), $\check{\mu}_k$ can also be computed
more efficiently as \BEQ
    \check{\mu}_k = A_k \ev_k^T(\Av\Wv\Av + \sigma^2\Rv^{-1})^{-1}(\Rv^{-1}\yv -
    \Av\tilde{\bv}) + \tilde{b}_k A_k^2 \left[(\Av\Wv\Av +
    \sigma^2\Rv^{-1})^{-1}\right]_{k,k},
\EEQ
such that common information may be utilized to evaluate
$\check{\mu}_k$ for all $k$.

\emph{Hybrid-Gaussian-SISO:} As mentioned earlier, the Wang-Poor
turbo MUD scheme is exactly the hybrid-Gaussian-SISO MUD. It differs
from the sequential schedule in that the EXT for $b_k$ generated by
the SISO detector is now stored until the EXT's of all users
$k=1,\cdots,K$ are ready. Then EXT's are passed down to the APP
decoders, for decoding in parallel. Hybrid-Gaussian-SISO MUD brings
computational savings compared to the more optimal
sequential-Gaussian-SISO MUD, due to both the possibility of
parallel decoding, and the ease of evaluating $[\Av^T\Wv_k\Av +
\sigma^2 \Rv^{-1}]^{-1}$.


So far, based on the Gaussian distributions assumed in the
postulation step, we showed that the variational inference algorithm
converges to a family of Gaussian SISO detectors, including the
well-known Wang-Poor scheme as the special case. But the VFEM
framework allows us to generalize even further, since the Gaussian
distributions, albeit convenient, are unnatural choices for BPSK
symbols. The subsequent section will focus on a different family of
detectors induced by a different set of assumptions in the
postulation step.

\subsection{VFEM Interpretation of Discrete SISO Multiuser
Detector}\label{sec:mean_field}

\begin{boxitjournal}
\begin{definition}
A \emph{Discrete SISO Multiuser Detector} is a multiuser detector
that obtains soft estimates $Q(\bv)$ through the VFEM routine,
subject to the following postulated distributions:
    \begin{equation}\label{eq:mean_field_postulate}
    \left\{\begin{array}{rcl}
    p(\bv) &=& \prod_{k=1}^K
    \xi_k^{\frac{1+b_k}{2}}(1-\xi_k)^{\frac{1-b_k}{2}}, \ \ b_k
    \in \{\pm 1\}\\
    p(\rv|\bv) &=& \mathcal{N}(\Sv\Av\bv,\sigma^2 \Iv)\\
    Q(\bv) &=& \prod_{k=1}^K
    \gamma_k^{\frac{1+b_k}{2}}(1-\gamma_k)^{\frac{1-b_k}{2}}, \ \ b_k
    \in \{\pm 1\},
    \end{array}\right.
    \end{equation}
where $\xi_k$ and $\gamma_k$ are the prior and posterior probability
of $b_k$ being $1$.
\end{definition}
\end{boxitjournal}

The \emph{discrete SISO MUD} has two salient features in the
postulated distributions: 1) Both the prior and posterior
distributions are discrete, conforming to the actual properties of
the data; 2) The posterior distributions of individual bits,
$\{b_k\}_{k=1}^K$, are assumed to be independent by applying the
\emph{mean-field} approximation. Indeed, the only distinction
between this scheme and the jointly optimal detector is the mean-field
approximation about the posterior, which, though a crude assumption
in general, is asymptotically exact in the large system limit. This
technique is closely tied to the replica method used to study the
performance of randomly spread CDMA \cite{Guo05}. The mean-field
approximation is also used in \cite{Fabri02} and \cite{Kaba03} to
derive multiuser detectors for uncoded CDMA.

\subsubsection{The Existing Form of Discrete SISO
MUD}\label{sec:exist_discrete_siso}

In \cite{Alex98}, a simple (linear complexity) multiuser detector
was proposed for coded CDMA producing near optimal performance at
very high network load. Alexander, Grant and Reed applied a simple
interference cancellation scheme and made the following observation:
\begin{equation}\label{eq:Alex_eq}
\begin{array}{rl}
    & p(y_k|b_k,\bv_{\setminus k} = \tilde{\bv}_k) \\
    =& \frac{1}{\sqrt{2\pi\sigma^2}}\exp\left\{ -\frac{1}{2\sigma^2}(y_k - \sv_k^T\Sv\Av\tilde{\bv}_k-A_k^2b_k)^2\right\}\\
    =& \frac{1}{\sqrt{2\pi\sigma^2}}\exp\left\{ -\frac{1}{2\sigma^2}[A_k b_k - \sv_k^T(\rv - \Sv\Av\tilde{\bv}_k)]^2\right\}
\end{array}
\end{equation}
where $\tilde{\bv}$ is the average bit estimate received from the
APP decoder, and $\tilde{\bv}_k =
[\tilde{b}_1,\cdots,\tilde{b}_{k-1},0,\tilde{b}_{k+1},\cdots,\tilde{b}_K]^T$.
Defining $\sigma_{tot}^2 = \sigma^2 + \sigma_{MU}^2$ as the variance
of the combined channel noise and residual MAI modelled as Gaussian
noise, $\sigma_{tot}^2$ can be approximated as the sample average of
$[\sv_k^T(\rv - \Sv\Av\tilde{\bv})]^2$.

The soft estimate of $b_k$ can then be drawn from (\ref{eq:Alex_eq})
as a log-likelihood ratio: \BEQ\label{eq:exist_LLR}
    \LLR(b_k) =  \frac{2}{\sigma_{tot}^2} A_k\sv_k^T(\rv - \Sv\Av\tilde{\bv}_k),
\EEQ%
and fed back to the APP channel code decoders. The decoders
subsequently update $\tilde{\bv}$ for the next iteration. Now we
proceed to prove the link of this simple and effective scheme to the
VFEM framework.

\begin{proposition}
    The SISO multiuser detection scheme described in \cite{Alex98}
    is an instance of the \emph{Discrete SISO MUD}.
\end{proposition}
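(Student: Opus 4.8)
The plan is to run the three–step VFEM routine of Section~\ref{sec:var_inference} on the postulated distributions (\ref{eq:mean_field_postulate}). Postulation is already fixed, so the work is to (i)~evaluate the variational free energy $\mathcal{F}$ in closed form as a function of the Bernoulli parameters $\{\gamma_k\}_{k=1}^K$, and (ii)~minimise it coordinate-wise. Throughout I would use the moments induced by the factorised $Q$: writing $\bar b_k \triangleq 2\gamma_k-1 = \E_Q[b_k]$, one has $\E_Q[b_k^2]=1$ and $\E_Q[b_ib_j]=\bar b_i\bar b_j$ for $i\neq j$, while the mean-field factorisation collapses $\E_Q[\log Q]$ and $\E_Q[\log p(\bv)]$ into sums of Bernoulli cross-entropies, so that the first two contributions to $\mathcal{F}$ become $\sum_k\bigl[\gamma_k\log\tfrac{\gamma_k}{\xi_k}+(1-\gamma_k)\log\tfrac{1-\gamma_k}{1-\xi_k}\bigr]$.

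For the evaluation step I would expand $-\E_Q[\log p(\rv|\bv)] = \tfrac{1}{2\sigma^2}\E_Q\|\rv-\Sv\Av\bv\|^2 + \mbox{const}$ and use the identity $\E_Q[\bv^T\Av\Rv\Av\bv] = \bar\bv^T\Av\Rv\Av\bar\bv + \sum_{k=1}^K A_k^2(1-\bar b_k^2)$, where the unit diagonal of $\Rv=\Sv^T\Sv$ is exactly what makes the diagonal of $\Av\Rv\Av$ equal to $\{A_k^2\}$. Collecting terms gives $\mathcal{F}(\{\gamma_k\}) = \sum_k[\gamma_k\log\tfrac{\gamma_k}{\xi_k}+(1-\gamma_k)\log\tfrac{1-\gamma_k}{1-\xi_k}] + \tfrac{1}{2\sigma^2}\bigl[\,\|\rv\|^2 - 2\rv^T\Sv\Av\bar\bv + \bar\bv^T\Av\Rv\Av\bar\bv + \sum_k A_k^2(1-\bar b_k^2)\bigr] + \mbox{const}$.

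The optimisation step is coordinate descent on $\gamma_k$. Setting $\partial\mathcal{F}/\partial\gamma_k=0$, using $\partial\bar b_k/\partial\gamma_k=2$, the entropy-derivative $\log\tfrac{\gamma_k}{1-\gamma_k}-\log\tfrac{\xi_k}{1-\xi_k}$, and the crucial cancellation in which the diagonal quadratic term $-\tfrac{1}{\sigma^2}A_k^2\bar b_k$ exactly removes the self-interference part of $\tfrac{1}{\sigma^2}A_k\sv_k^T\Sv\Av\bar\bv$, I expect to reach the fixed point $\log\tfrac{\gamma_k}{1-\gamma_k} = \log\tfrac{\xi_k}{1-\xi_k} + \tfrac{2}{\sigma^2}A_k\sv_k^T(\rv-\Sv\Av\bar\bv_{\setminus k})$, where $\bar\bv_{\setminus k}$ has its $k$-th entry zeroed. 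The posterior LLR of $Q(b_k)$ is $\log\tfrac{\gamma_k}{1-\gamma_k}$; subtracting the prior LLR $\log\tfrac{\xi_k}{1-\xi_k}$ gives the extrinsic LLR $\tfrac{2}{\sigma^2}A_k\sv_k^T(\rv-\Sv\Av\bar\bv_{\setminus k})$. Invoking the scheduling discussion of Section~\ref{sec:message}, a single coordinate sweep initialised at $\gamma_j=\xi_j$ (equivalently at the APP-decoder feedback $\tilde b_j = 2\xi_j-1$) makes $\bar\bv_{\setminus k}=\tilde\bv_k$, which reproduces (\ref{eq:exist_LLR}) except that the scalar $\sigma_{tot}^2$ is replaced by $\sigma^2$.

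That last point is where the real obstacle lies — not in the algebra, which is routine but bookkeeping-heavy, but in reconciling the pure VFEM output $\sigma^2$ with the noise-plus-residual-MAI variance $\sigma_{tot}^2=\sigma^2+\sigma_{MU}^2$ of \cite{Alex98}. The resolution I would argue is that $\sigma_{tot}^2$ is precisely the effect of replacing the postulated channel law by $p(\rv|\bv)=\mathcal{N}(\Sv\Av\bv,\sigma_{tot}^2\Iv)$ in the \emph{postulation} step — a legitimate move, since postulated distributions need not be the true ones, in the same spirit as the free parameter $\alpha^2$ in the PME interpretation of \cite{Guo05} recalled at the start of this section. With that substitution the derivation above yields (\ref{eq:exist_LLR}) verbatim, establishing that the scheme of \cite{Alex98} is the Discrete SISO MUD run under a sequential (or hybrid) schedule.
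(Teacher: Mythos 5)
Your derivation follows the same core route as the paper's: the same moment identity for $\E_Q[\bv^T\Av\Rv\Av\bv]$ under the mean-field $Q$ (the paper states it as $\E(\bv^T \Cv \bv) = \mv^T [\Cv-\diag(\Cv)]\mv + \onev^T\diag(\Cv)\onev$, which is equivalent to yours), the same closed form for $\mathcal{F}_{disc}$, the same stationarity condition (\ref{eq:mean_gamma}), and the same extraction of the extrinsic LLR as posterior minus prior. Two points differ. First, your claim that a single coordinate sweep initialised at $\gamma_j=\xi_j$ makes the cancellation argument land on (\ref{eq:exist_LLR}) is imprecise: in a genuinely sequential sweep the earlier coordinates have already moved by the time $b_k$ is updated, so you would obtain (\ref{eq:LLRupdate}) with $\mv_{<k}^{(i)}$ rather than $\tilde{\bv}_{<k}$. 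To recover (\ref{eq:exist_LLR}) exactly you must evaluate all $K$ stationarity conditions in parallel at the prior means --- which is what the paper does, and which it explicitly flags as a simplification that no longer guarantees monotone decrease of the free energy; your write-up loses that caveat. Second, you reconcile $\sigma^2$ with $\sigma_{tot}^2$ by re-postulating the channel variance in the spirit of the postulated-measure idea of \cite{Guo05}; the paper instead treats $\sigma^2$ as an unknown parameter and shows in Section \ref{sec:em_meanfield} that the M step of the variational EM algorithm reproduces the iterative estimate of $\sigma_{tot}^2$ used in \cite{Alex98} (the residual-MAI term vanishes as $\mv$ hardens). Your route makes the formulas match for a fixed postulated variance, but the paper's version also explains where the \emph{iterative} estimation of $\sigma_{tot}^2$ in \cite{Alex98} comes from, which is the stronger statement.
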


\BP
Let the prior distribution $p(\bv)$ in
(\ref{eq:mean_field_postulate}) represent the EXT provided by the APP decoder. Also, $p(\bv)=\prod_{k=1}^K p(b_k)$, where $p(b_k)
= \xi_k^{\frac{1+b_k}{2}}(1-\xi_k)^{\frac{1-b_k}{2}}$ implies that
$p(b_k=1) = (\xi_k)^{1}(1-\xi_k)^{0} = \xi_k$ and $p(b_k=0) =
(\xi_k)^{0}(1-\xi_k)^{1} = 1- \xi_k$. As seen from the derivation in
(\ref{eq:Alex_eq}), in the traditional MUD viewpoint, this
information may be used for soft interference cancellation in the
detection stage. We will now demonstrate that this IC technique
corresponds to one iteration of recursive minimization of
variational free energy.

We let $\tilde{b}_k = 2\xi_k -1$ and $m_k = 2\gamma_k - 1$, to
denote the prior mean and posterior mean of $b_k$. After some
mathematical manipulation, we have, according to
(\ref{eq:mean_field_postulate}) and (\ref{eq:free_F}),
\begin{equation}\label{eq:f_meanfield}
\begin{array}{rcl}
    \mathcal{F}_{disc}(\mv) &=& \sum_{k=1}^K \left[\frac{1+m_k}{2} \log \frac{1+m_k}{1+\tilde{b}_k} +
    \frac{1-m_k}{2}
    \log\frac{1-m_k}{1-\tilde{b}_k}\right] + \frac{N}{2}\log \sigma^2\\
    && + \frac{1}{2\sigma^2}\left[\rv^T\rv - \rv^T\Sv\Av\mv + \mv^T\Bv\mv + \tr(\Av^T\Sv^T\Sv\Av)\right],
\end{array}
\end{equation}
where $\Bv = \Av^T\Sv^T\Sv\Av - \diag{(\Av^T\Sv^T\Sv\Av)}$.
(\ref{eq:f_meanfield}) is obtained by utilizing the property that
\BEQ\BA{rcl}
    \E(\bv^T \Cv \bv) &=& \E\left(\sum_{i\neq j}C_{ij}b_i b_j + \sum_{i=1}^K C_{ii}b_i^2 \right)\\
&=& \mv^T [\Cv-\diag(\Cv)]\mv + \onev^T\diag(\Cv)\onev, \EA\EEQ
for $\bv\in \{\pm 1\}^K$ and $\Cv = [C_{ij}] \in \mathbb{R}^{K\times
K}$.

Rearranging $\partial \mathcal{F}_{disc}(\mv)/\partial \mv = \zerov$
gives a system of equations, for $k=1,\cdots,K$, that determines the
minimum of $\mathcal{F}_{disc}(\mv)$,
\begin{equation}\label{eq:mean_gamma}
    \log\frac{1+m_k}{1-m_k} = \log\frac{1+\tilde{b}_k}{1-\tilde{b}_k} +
    \frac{2}{\sigma^2}\left[\etav_k^T \rv - \betav_k^T\mv\right],
\end{equation}
where $\etav_k$ and $\betav_k$ are the $k$-th column vectors of
$\Sv\Av$ and $\Bv$, respectively. The coordinate descent algorithm
minimizes a function successively along one direction at a time. By
setting $\partial \mathcal{F}_{disc}(\mv)/\partial m_k$ to zero in
turn, we have the following update for user $k$ in iteration $i$:
\begin{equation} \label{eq:LLRupdate}
   \LLR^{(i)}(b_k) = \LLR^{(0)}(b_k) + \frac{2}{\sigma^2}\left[\etav_k^T \rv - \betav_k^T\mv_{<k}^{(i)} - \betav_k^T\mv_{>k}^{(i-1)}\right].
\end{equation}

In (\ref{eq:LLRupdate}), we defined the log-likelihood ratio
$\LLR^{(i)}(b_k) \triangleq \log\frac{1+m_k^{(i)}}{1-m_k^{(i)}}$ (or
equivalently, $m_k^{(i)} = \tanh[\LLR^{(i)}(b_k)/2]$). The
iterations are initialized with the prior probabilities of $b_k$,
\ie, $\mv^{(0)} = \tilde{\bv}$ and $\LLR^{(0)}(b_k) =
\log\frac{1+\tilde{b}_k}{1-\tilde{b}_k}$. As well, $\mv_{<k} =
[m_1,\cdots,m_{k-1},\overbrace{0,\cdots,0}^{K-k+1}]^T$, while
$\mv_{>k} = [\overbrace{0,\cdots,0}^{k-1},m_k,\cdots,m_K]^T$.

The flooding schedule (see \fig \ref{fig:inf_view}) indicates that
the EXT is the ratio between the posterior and the prior
distributions, or the difference between posterior and prior
$\LLR$'s, \ie, after $I$ iterations, the multiuser detector passes
the following EXT to the $k$-th decoder:
\BEQ\label{eq:disc_LLR}
    \LLR_{mud}(b_k) = \LLR_{pos}(b_k) - \LLR^{(0)}(b_k) = \frac{2}{\sigma^2} \left[\etav_k^T \rv - \betav_k^T\mv_{<k}^{(I)} - \betav_k^T\mv_{>k}^{(I-1)}\right].
\EEQ

Consider simplifying (\ref{eq:disc_LLR}) by removing the serial
iterations, then \BEQ
    \LLR_{mud}(b_k) = \frac{2}{\sigma^2} \left[\etav_k^T \rv -
    \betav_k^T\mv^{(0)}\right] = \frac{2}{\sigma^2} A_k\sv_k^T(\rv - \Sv\Av\tilde{\bv}_k),
\EEQ which is similar to (\ref{eq:exist_LLR}). Note that this
simplified updating scheme does not guarantee the decrease of free
energy, and thus is not as robust as the standard version in
(\ref{eq:disc_LLR}).
\EP

In the above proof, we have set $\sigma^2$ to be the channel noise
variance, and assumed it known. This is in contrast to
Alexander-Grant-Reed's original derivation, where $\sigma_{tot}^2$
is the noise-plus-MAI variance which has to be estimated
iteratively. We will postpone the discussion of this issue until
Section \ref{sec:em_meanfield}, where we will show that, the
iterative estimation of $\sigma_{tot}^2$ can be interpreted as the M
step in the variational EM algorithm for joint data detection and
noise variance estimation.

Also from (\ref{eq:LLRupdate}), an interesting link to uncoded
multi-stage SIC can be made -- in that case, $\LLR^{(0)}(b_k) = 0$.
Defining $\hat{b}_k^{(i)} = m_k^{(i)} = \tanh[\LLR^{(i)}(b_k)/2]$,
we get the hyperbolic-tangent SIC updates
\begin{equation}
   \hat{b}_k^{(i)} = \tanh\left\{\frac{1}{\sigma^2} A_k\sv_k^T\left(\rv - \Sv\Av\hat{\bv}_{<k}^{(i)} - \Sv\Av\hat{\bv}_{>k}^{(i-1)}\right)\right\}.
\end{equation}


In addition to demonstrating a solid theoretical foundation for the
Alexander-Grant-Reed scheme, this section also clearly revealed the
underlying suboptimal simplifications made en route to the final
result. In the following, we will compare it to the standard
discrete SISO MUD based on the theory of VFEM and the associated
scheduling rules.

\subsubsection{The Standard Forms of Discrete SISO
MUD}\label{sec:stand_discrete_siso}

\begin{table}
\caption{Three scheduling schemes of turbo MUD employing discrete
SISO MUD.}\label{tab:disc_siso} \small
\begin{center}
\setlength{\tabcolsep}{3pt}
\begin{tabular}{|lllllcl|}  \hline
\multicolumn{7}{|c|}{\textbf{Sequential-Discrete-SISO}} \\
\hline
\multicolumn{7}{|l|}{\emph{Initialization}: $\mv = \zerov$ and $\LLR_{dec}(b_k)=0$ for all $k$}\\
\multicolumn{7}{|l|}{FOR $j = 1:J$ (\emph{Outer Iteration})}\\
    &\multicolumn{6}{l|}{FOR $k=1:K$} \\
    &   &\multicolumn{5}{l|}{$\LLR_{dec}(b_k) = 0$} \\
    &   &\multicolumn{5}{l|}{FOR $i = 1:I$ (\emph{Inner Iteration})} \\
    &   & & \multicolumn{4}{l|}{FOR $l=k:K$, $1:k-1$} \\
    &   & & & \multicolumn{3}{l|}{$\LLR_{mud}(b_l) = \frac{2}{\sigma^2}\left[\etav_k^T \rv - \betav_k^T\mv\right]$} \\
    &   & & & \multicolumn{3}{l|}{$\LLR_{pos}(b_l) = \LLR_{dec}(b_l) + \LLR_{mud}(b_l)$} \\
    &   & & & \multicolumn{3}{l|}{$m_l = \tanh[\LLR_{pos}(b_l)/2]$} \\
    &   & &END&&&\\
    &   &END&&&&\\
    &   & \multicolumn{5}{l|}{$\LLR_{dec}(b_k) \stackrel{\scriptsize \mbox{Decoding}}{\Longleftarrow} \LLR_{mud}(b_k)$} \\
    &END&&&&&\\
END&&&&&&\\ \hline \hline

\multicolumn{7}{|c|}{\textbf{Flooding-Discrete-SISO}} \\
\hline
\multicolumn{7}{|l|}{\emph{Initialization}: $\mv = \zerov$ and $\LLR_{dec}(b_k)=0$ for all $k$}\\
\multicolumn{7}{|l|}{FOR $j = 1:J$ (\emph{Outer Iteration})}\\
    &\multicolumn{6}{l|}{FOR $i = 1:I$ (\emph{Inner Iteration})} \\
    &   &\multicolumn{5}{l|}{FOR $k=1:K$} \\
    &   & & \multicolumn{4}{l|}{$\LLR_{pos}(b_k) = \LLR_{dec}(b_k) + \frac{2}{\sigma^2}\left[\etav_k^T \rv - \betav_k^T\mv\right]$} \\
    &   & & \multicolumn{4}{l|}{$m_k = \tanh[\LLR_{pos}(b_k)/2]$} \\
    &   &END&&&&\\
    &END&&&&&\\
    &\multicolumn{6}{l|}{FOR $k = 1:K$} \\
    &   & \multicolumn{5}{l|}{$\LLR_{mud}(b_k) = \LLR_{pos}(b_k) - \LLR_{dec}(b_k)$} \\
    &   & \multicolumn{5}{l|}{$\LLR_{dec}(b_k) \stackrel{\scriptsize \mbox{Decoding}}{\Longleftarrow} \LLR_{mud}(b_k)$} \\
    &END&&&&&\\
END&&&&&&\\
\hline\hline

\multicolumn{7}{|c|}{\textbf{Hybrid-Discrete-SISO}} \\
\hline
\multicolumn{7}{|l|}{\emph{Initialization}: $\mv = \zerov$ and $\LLR_{dec}(b_k)=0$ for all $k$}\\
\multicolumn{7}{|l|}{FOR $j = 1:J$ (\emph{Outer Iteration})}\\
    &\multicolumn{6}{l|}{FOR $k=1:K$} \\
    &   &\multicolumn{5}{l|}{$\LLR_{dec}(b_k) = 0$} \\
    &   &\multicolumn{5}{l|}{FOR $i = 1:I$ (\emph{Inner Iteration})} \\
    &   & & \multicolumn{4}{l|}{FOR $l=k:K$, $1:k-1$} \\
    &   & & & \multicolumn{3}{l|}{$\LLR_{mud}(b_l) = \frac{2}{\sigma^2}\left[\etav_k^T \rv - \betav_k^T\mv\right]$} \\
    &   & & & \multicolumn{3}{l|}{$\LLR_{pos}(b_l) = \LLR_{dec}(b_l) + \LLR_{mud}(b_l)$} \\
    &   & & & \multicolumn{3}{l|}{$m_l = \tanh[\LLR_{pos}(b_l)/2]$} \\
    &   & &END&&&\\
    &   &END&&&&\\
    &END&&&&&\\
    &\multicolumn{6}{l|}{FOR $k = 1:K$} \\
    &   & \multicolumn{5}{l|}{$\LLR_{dec}(b_k) \stackrel{\scriptsize \mbox{Decoding}}{\Longleftarrow} \LLR_{mud}(b_k)$} \\
    &END&&&&&\\
END&&&&&&\\ \hline

\end{tabular}
\setlength{\tabcolsep}{5pt}
\end{center}
\normalsize
\end{table}

In Table \ref{tab:disc_siso}, we summarize three different versions
of the standard discrete SISO MUD. The following highlights the
major characteristics of each scheme.

\emph{Sequential-Discrete-SISO}: The sequential schedule obtains the
EXT for $b_k$ through a serial update algorithm governed by
(\ref{eq:disc_LLR}). Before the inner iterations,
$\{\LLR_{dec}(b_l)\}_{l\neq k}$ are set to the most recent output
from the APP decoder, except for $\LLR_{dec}(b_k)$, which is set to
$0$. This is equivalent to setting $\xi_k = 1/2$ in
(\ref{eq:mean_field_postulate}), as required by the sequential
scheduling rule.

After the serial update, $\LLR_{mud}(b_k)$ is immediately sent to
the $k$-th APP decoder ($\{\LLR_{mud}(b_l)\}_{l\neq k}$ are
discarded), such that an updated prior $\LLR_{dec}(b_k)$ is
generated (see \fig \ref{fig:sp_view}). The sequential schedule is
inefficient, since a different serial update of $\LLR_{mud}(b_k)$
needs to be done $K$ times, one for each user. A SIC-based turbo MUD
scheme proposed by Kobayashi, Boutros, and Caire \cite{Koba01} can
be seen as a simplification to the full-blown
sequential-discrete-SISO MUD, with $I=1$ inner iteration.

\emph{Flooding-Discrete-SISO}: The flooding schedule is much more
efficient. The serial update algorithm in the inner iteration
updates the posterior LLR's, $\{\LLR_{pos}(b_k)\}_{k=1}^K$. After
$I$ iterations, in which the free energy is monotonically reduced,
reliable estimates of $\{\LLR_{pos}(b_k)\}_{k=1}^K$ are attained.
The SISO detector passes the EXT, $\LLR_{mud}(b_k)$, into the APP
decoder, where the decoding of $K$ users can be done in parallel.

\emph{Hybrid-Discrete-SISO}: The hybrid schedule differs from the
sequential schedule, in that when $\LLR_{mud}(b_k)$ is found, it is
not immediately sent to the APP decoder to update $\LLR_{dec}(b_k)$,
but stored until all other users' EXT's are obtained, to facilitate
parallel APP decoding.


\subsection{VFEM Interpretation of Decorrelating-Decision-Feedback SISO Multiuser
Detector}\label{sec:siso_ddf}

In \cite{Duel93}, Duel-Hallen proposed the
decorrelating-decision-feedback (DDF) multiuser detector. It has
been shown to out-perform most of the linear and
interference-cancellation detectors, especially in terms of near-far
resistance. However, the soft-decision DDF MUD and its application
within the turbo MUD framework is relatively unknown. In this
section, we will propose a SISO DDF multiuser detector using the
VFEM principle. The subsequent discussion will allow new insights
and new algorithms, including an interesting link to the discrete
SISO detector discussed earlier.

Consider applying the VFEM routine to the following postulated
distributions:
    \begin{equation}\label{eq:ddf_postulate}
    \left\{ \begin{array}{rcl}
    p(\bv) &=& \prod_{k=1}^K
    \xi_k^{\frac{1+b_k}{2}}(1-\xi_k)^{\frac{1-b_k}{2}}, \ \ b_k
    \in \{\pm 1\}\\
    p(\bar{\yv}|\bv) &=& \mathcal{N}(\Fv\Av\bv,\sigma^2 \Iv)\\
    Q(\bv) &=& \prod_{k=1}^K
    \gamma_k^{\frac{1+b_k}{2}}(1-\gamma_k)^{\frac{1-b_k}{2}}, \ \ b_k
    \in \{\pm 1\},
    \end{array} \right.
    \end{equation}
Notice that these distributions are identical to the discrete SISO
case in (\ref{eq:mean_field_postulate}), except that the received
vector $\rv$ is replaced by its sufficient statistics $\bar{\yv}$
(defined in (\ref{eq:DF_CDMA})). Therefore, we may directly make use
of the derivation in Section \ref{sec:mean_field}, and arrive at an
iterative detector similar to (\ref{eq:mean_gamma}):
\BEQ\label{eq:mean_gamma_ddf}
    \log\frac{1+m_k}{1-m_k} = \log\frac{1+\tilde{b}_k}{1-\tilde{b}_k} +
    \frac{2}{\sigma^2}\left[\bar{\etav}_k^T \bar{\yv} -
    \bar{\betav}_k^T\mv\right],
\EEQ where $\bar{\etav}_k$ and $\bar{\betav}_k$ are the $k$-th
column vector of $\Fv\Av$ and $\Av^T\Fv^T\Fv\Av -
\diag{(\Av^T\Fv^T\Fv\Av)}$, respectively. (\ref{eq:mean_gamma_ddf})
is in fact identical to (\ref{eq:mean_gamma}), since $\Fv^T\bar{\yv}
= \Sv^T\rv = \yv$ and $\Fv^T\Fv = \Sv^T\Sv=\Rv$.

Consider the uncoded scenario, \ie, $\tilde{b}_k = 0$ for
$k=1,\cdots,K$, then (\ref{eq:mean_gamma_ddf}) reduces to
\BEQ\label{eq:hypertan_ddf}
    m_k = \tanh \left\{\frac{1}{\sigma^2}\left(\bar{\etav}_k^T \bar{\yv} -
    \bar{\betav}_k^T\mv\right)\right\}.
\EEQ
The free energy is monotonically reduced if $m_k$ is evaluated in a
SIC fashion similar to (\ref{eq:LLRupdate}), \ie, in the $i$-th
iteration: \BEQ\label{eq:mk_update_DDF1}
    m_k^{(i)} = \tanh \left\{\frac{1}{\sigma^2}\left(\bar{\etav}_k^T \bar{\yv} -
    \bar{\betav}_k^T\mv_{<k}^{(i)} - \bar{\betav}_k^T\mv_{>k}^{(i-1)}\right)\right\}.
\EEQ

Now we take a crucial step that will produce the DDF SISO detector
based on (\ref{eq:mk_update_DDF1}). We will alter the definition of
$\bar{\etav}_k$ and $\bar{\betav}_k$, by replacing $\Fv$ with a new
matrix $\Fv_k$. Let $\Fv_k$ be $\Fv$, except with elements
$F_{k+1,k}$ to $F_{K,k}$ nulled, \ie, \BEQ\label{eq:F_k}
    \Fv_k = \left[\BA{cccccc}F_{1,1}& & & & &\\
    F_{2,1}&\ddots& & & &\\
    & & F_{k,k} & & &\\
    \vdots & & \fbox{0} & F_{k+1,k+1} & &\\
    & & \vdots & \vdots & \ddots &\\
    F_{K,1}& & \fbox{0} & F_{K,k+1} & & F_{K,K}\EA\right].
\EEQ Then we let $\bar{\etav}_k$ and $\bar{\betav}_k$ be the $k$-th
column vectors of $\Fv_k\Av$ and $\Av^T\Fv_k^T\Fv_k\Av -
\diag{(\Av^T\Fv_k^T\Fv_k\Av)}$, respectively. Subsequently, we see
that \BEA
    \bar{\etav}_k &=&  [0,\cdots,0,A_k F_{k,k},0,\cdots,0]^T\\
    \bar{\betav}_k &=& A_k F_{k,k}[A_1 F_{k,1},\cdots,A_{k-1}
    F_{k,k-1},0,\cdots,0]^T,
\EEA
and \BEQ\BA{ccc}
    \bar{\etav}_k^T \bar{\yv} &=& A_k F_{k,k} \bar{y}_k\\
    \bar{\betav}_k^T\mv_{>k} &=& 0.
\EA\EEQ %
Hence (\ref{eq:mk_update_DDF1}) becomes
\BEQ\label{eq:mk_update_DDF2}
    m_k = \tanh\left\{\frac{1}{\sigma^2}\left(A_k F_{k,k} \bar{y}_k -
    \bar{\betav}_k^T\mv_{<k}\right)\right\}. \EEQ
Finally, it is not difficult to recognize that the argument inside
the $\tanh(\cdot)$ function is the DDF detector output, scaled by
$1/\sigma^2$. The additional $\tanh$ function has its intuitive
appeal, since it provides soft bit estimates for BPSK in an AWGN
channel (assuming perfect cancellation of interference).

We have therefore derived a soft-decision DDF detector, obtaining it
based on the discrete SISO MUD, and by replacing the matrix $\Fv$
with $\Fv_k$ in the free energy minimization stage. If informative
priors, $\{\LLR^{(0)}(b_k)\}_{k=1}^K$, are used,
(\ref{eq:mk_update_DDF2}) simply becomes
\BEQ\label{eq:mk_update_DDF3}
    \LLR_{pos}(b_k) = \LLR^{(0)}(b_k) + \frac{2}{\sigma^2}\left(A_k F_{k,k} \bar{y}_k -
    \bar{\betav}_k^T\mv_{<k}\right),
\EEQ
where $m_k = \tanh[\LLR_{pos}(b_k)/2]$. But the EXT is unchanged,
since
\BEQ\label{eq:mk_update_DDF4}
    \LLR_{mud}(b_k) = \LLR_{pos}(b_k) - \LLR^{(0)}(b_k) = \frac{2}{\sigma^2}\left(A_k F_{k,k} \bar{y}_k -
    \bar{\betav}_k^T\mv_{<k}\right).
\EEQ

\begin{boxitjournal}
\begin{definition}
A \emph{DDF SISO Multiuser Detector} is a multiuser detector that
obtains soft estimates $Q(\bv)$ through the VFEM routine, subject to
the following postulated distributions:
    \begin{equation}\label{eq:postulate_ddf}
    \left\{ \begin{array}{rcl}
    p(\bv) &=& \prod_{k=1}^K
    \xi_k^{\frac{1+b_k}{2}}(1-\xi_k)^{\frac{1-b_k}{2}}, \ \ b_k
    \in \{\pm 1\}\\
    p(\bar{\yv}|\bv) &=& \mathcal{N}(\Fv\Av\bv,\sigma^2 \Iv)\\
    Q(\bv) &=& \prod_{k=1}^K
    \gamma_k^{\frac{1+b_k}{2}}(1-\gamma_k)^{\frac{1-b_k}{2}}, \ \ b_k
    \in \{\pm 1\},
    \end{array} \right.
    \end{equation}
and replacing $\Fv$ with $\Fv_k$ (as in (\ref{eq:F_k})) in the free
energy minimization stage.
\end{definition}
\end{boxitjournal}

To understand the effect of transforming $\Fv$ to $\Fv_k$ in simple
terms, we first need to realize that $\bar{\etav}_k$ acts as a
matched filter on $\bar{\yv}$ to extract the decision metric for
$b_k$, while $\bar{\betav}_k$ indicates the amount of interference
to be subtracted from $\bar{\etav}_k^T \bar{\yv}$ to improve
estimation. By defining $\Fv_k$ as in (\ref{eq:F_k}), we
heuristically assume $\bar{\etav}_k$ to be non-zero only in the
$k$th element, essentially ignoring the presence of $b_k$ in
$\bar{y}_{k+1},\cdots,\bar{y}_K$. This simplification also makes
$\bar{\betav}_k$ non-zero only in the first $k-1$ elements, implying
that only the estimates for $b_1,\cdots,b_{k-1}$ need to be
subtracted for the detection of $b_k$.

In this section, we have shown that the same free energy expression
specifies the DDF SISO MUD and discrete SISO MUD. But in DDF SISO
MUD, we replaced $\Fv$ with $\Fv_k$ in the execution of the
coordinate descent algorithm to enable a decision-feedback
structure.

\subsection{DDF-Aided Discrete SISO MUD}\label{sec:modify_disc}

%

Having discussed both Gaussian SISO MUD and discrete SISO MUD, a
natural question to ask is how the two compare with each other in
complexity and performance. In general, it is well-known that
Gaussian SISO MUD is a robust algorithm, but has relatively high
complexity. This is easy to see from the VFEM viewpoint, since
Gaussian SISO MUD minimizes $\mathcal{F}_{gauss}(\muv,\Sigmav)$
exactly in the free energy minimization stage. But solving the
optimization problem exactly entails higher complexity due to the
need for matrix inversion.

Discrete SISO MUD, on the other hand, decreases
$\mathcal{F}_{disc}(\mv)$ iteratively through a SIC-like procedure,
which has only linear complexity in $K$. However, since
$\mathcal{F}_{disc}(\mv)$ is not a convex function, serial updates
of this form can become trapped in local minima, resulting in poor
detection performance. This is the reason why detection schemes
based on discrete SISO MUD only work in systems with small spreading
code correlations, such as random spreading CDMA systems
\cite{Alex98,Koba01}, but not in high-interference channels, such as
those considered in \cite{Wangx99} with the spreading code
correlation between different users being $\rho=0.7$.

As is shown in Section \ref{sec:siso_ddf}, the DDF SISO MUD has the
same target free energy as the discrete SISO MUD. But through a
small modification to the coordinate descent step, the DDF SISO MUD
is able to overcome local minima and bring the free energy close to
it's minima. The only problem is that, due to the substitution of
$\Fv_k$ for $\Fv$, it cannot refine its estimate iteratively, and
hence does not settle at the exact global minimum of
$\mathcal{F}_{disc}(\mv)$.

To combine the good convergence property of the DDF SISO MUD and the
optimality promised by the discrete SISO MUD, we will introduce a
so-called \emph{DDF-aided discrete SISO} detector as follows
\newline

\begin{boxitjournal}
\begin{definition}
A \emph{DDF-Aided Discrete SISO Multiuser Detector} is a multiuser
detector that obtains soft estimates $Q(\bv)$ through the VFEM
routine, subject to the postulated distributions in
(\ref{eq:postulate_ddf}). It is implemented by replacing the first
iteration of the discrete SISO MUD algorithm with DDF SISO MUD.
\end{definition}
\end{boxitjournal}

This detector utilizes the good initialization of DDF SISO MUD, and
implements discrete SISO MUD in subsequent iterations to drive the
free energy even lower, and produce improved performance. We will
demonstrate how the DDF-aided discrete SISO MUD improves upon both
the original discrete SISO MUD and DDF SISO MUD through a simple
simulation based on Example 1 in \cite{Duel93}.

\begin{figure}

\hspace{-0.5in}\scalebox{1.0}{\includegraphics{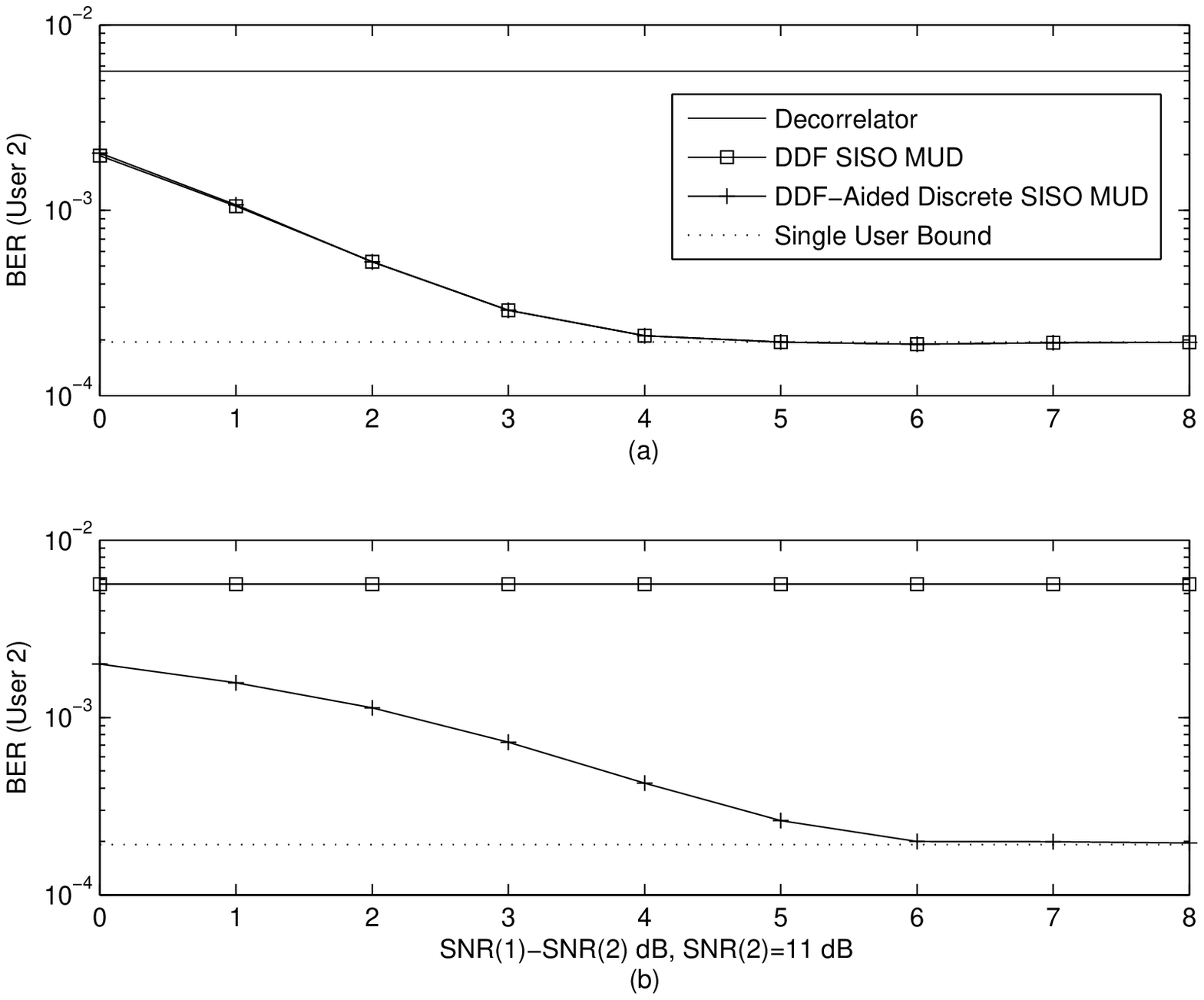}}\caption{BER
vs. SNR difference in two-user channel. (a) Strong user detected
first. (b) Weak user detected first.}\label{fig:uncode_ddf}

\end{figure}

Consider an uncoded two-user synchronous CDMA system with
spreading-sequence cross-correlation $\rho=0.7$. We let the
signal-to-noise ratio (SNR) of user 1 (strong user) be no smaller
than that of user 2 (weak user), \ie, $\SNR(1) \geq \SNR(2)$. Fixing
$\SNR(2)$ at $11$ dB and varying $\SNR(1)$, we obtain the bit error
rate (BER) of user 2 as shown in \fig \ref{fig:uncode_ddf}. In \fig
\ref{fig:uncode_ddf}(a), we detect the strong user first. the
DDF-aided discrete SISO MUD is implemented with the DDF SISO MUD
followed by four discrete SISO MUD iterations. It is seen that in
this case, the additional discrete SISO MUD iterations do not offer
performance enhancement, thus the DDF-aided discrete SISO MUD
performs nearly identical to DDF SISO MUD. The use of discrete SISO
MUD alone suffers from poor convergence as predicted, and is omitted
in the plot.

However, like Duel-Hallen's conventional DDF detector, DDF SISO
detector is sensitive to the detection order. This means if the weak
user is detected first, the near-far resistance property no longer
holds. Such an effect is depicted in \fig \ref{fig:uncode_ddf}(b),
where we detect the weak user first. By fixing $\SNR(2)$ at $11$ dB
and varying $\SNR(1)$, the BER of different schemes are plotted. It
is seen that the DDF SISO MUD no longer approaches near-optimal
performance as the SNR difference increases, while the DDF-aided
discrete SISO MUD continues to demonstrate good near-far resistance
even with non-optimal detection order. This is because the
additional discrete SISO MUD iterations rectifies the performance
degradation of DDF SISO MUD due to unfavourable detection order.

These simple examples reveal that the DDF-aided discrete SISO MUD is
a much more powerful detection scheme compared to both discrete SISO
MUD and DDF SISO MUD. It can be viewed as either the
multiple-iteration extension to DDF SISO MUD, or as a discrete SISO
MUD with convergence acceleration. The DDF-aided discrete SISO MUD
is a powerful algorithm that is now capable of coping with
strong-interference channels such as the ones assumed in
\cite{Wangx99}, which will be studied in Section
\ref{sec:simulation1}.

\subsection{Summary}

\begin{table}[h]\caption{Variational-inference-based multiuser detectors employing different scheduling schemes.} \label{tab:var_comparison}
\begin{center}
\begin{tabular}{|c|c|c|} \hline
    & \bf{Gaussian SISO} & \bf{Discrete SISO}  \\
    \hline\hline
    Sequential & $\star$ & \cite{Koba01} \\
    \hline
    Flooding & $\star$ & \cite{Alex98}  \\
    \hline
    Hybrid & \cite{Wangx99} & $\star$ \\
\hline
\end{tabular}
\end{center}
\end{table}

Table \ref{tab:var_comparison} categorizes some of the existing
turbo multiuser detectors according to the standard-form SISO MUD
schemes outlined in Table \ref{tab:disc_siso} and
\ref{tab:gauss_siso}. $\star$ indicates the schemes that are
outcomes of the general framework, but not seen in the literature.
We now outline how the existing schemes fit into the categories
created.

\BIT

\item \cite{Wangx99} is identical to the hybrid-Gaussian -SISO
MUD, but it is re-derived in Section \ref{sec:exist_gaussian_siso}
via a completely different VFEM-based approach. With the help of
the insights offered by the VFEM framework, we are able to further
extend \cite{Wangx99} to sequential and flooding schedule
implementations, both explained in Section
\ref{sec:stand_gaussian_siso}.

\item \cite{Koba01} can be seen as the standard sequential-discrete-SISO MUD with
sequential scheduling, and $I=1$ inner iteration. Moreover,
\cite{Koba01} considers the joint estimation of noise-plus-MAI
variance and channel amplitude. Like the noise-plus-MAI variance
estimation in \cite{Alex98}, this can also be studied within the
VFEM framework, as an instance of the variational EM algorithm
discussed in Section \ref{sec:var_EM}.

\item \cite{Alex98} can be seen as a simplified version of
flooding-discrete-SISO MUD, but it differs from the the standard
approach in two aspects: 1) \cite{Alex98} uses parallel updates of
each user's bit LLR, which does not guarantee the reduction of
free energy. 2) \cite{Alex98} uses the posterior estimate (instead
of the extrinsic information) from the APP decoder as the
initialization of $\mv$, a practice that is suboptimal from the
message-passing standpoint. \EIT

\section{Variational EM for Iterative Parameter
Estimation}\label{sec:var_EM}

In recent years, the impact of imperfect channel estimation on
uncoded and coded multiuser detection have been studied in the
large system limit \cite{Evans00,Li04,Li05}. However, the
alleviation of this problem has rarely been systematically
investigated in the literature. In this section, we will introduce
an important extension of variational inference, called
variational EM, to enable joint parameter estimation in turbo MUD.
The two examples in Sections \ref{sec:em_sisommse} and
\ref{sec:em_meanfield}, based on the Gaussian SISO detector and
discrete SISO detector, respectively, will illustrate how the
variational EM framework provides a feasible solution to practical
turbo receiver design when exact channel state information (CSI)
is unavailable.

\subsection{Formulation of Variational EM Algorithm}

Like most detectors, variational-inference-based detection schemes
assume perfect knowledge of system parameters, such as various
types of channel information. These parameters, in practice
however, may not be known accurately at the receiver. One
traditional way to incorporate the uncertainties of these
parameters in the detection operation is through the EM algorithm
\cite{Demp77,Kocian03}. The EM algorithm is used to estimate a
vector of parameters, say $\thetav$, from the observation $\rv$
that is termed ``incomplete data'', together with some auxiliary
or hidden variable, say $\bv$. The algorithm iteratively carries
out two operations: the E step and the M step. The $j$-th
iteration effectively computes a probability density
$p(\bv|\rv,\thetav^{(j-1)})$ in the E step, where
$\thetav^{(j-1)}$ is the estimate of $\thetav$ in the previous
iteration, and then in the M step maximizes
\begin{equation}
  U(\thetav,\thetav^{(j-1)}) = \int p(\bv|\rv,\thetav^{(j-1)})
  \log p(\bv,\rv|\thetav) d\bv
\end{equation}
over $\thetav$, yielding $\thetav^{(j)}$.

The work in \cite{Neal98} shows that the
EM algorithm is equivalent to jointly estimating the hidden
variables and parameters by minimizing a single free-energy
expression over a postulated distribution for the hidden
variables, and over the parameters. The VFEM formulation offers an
additional degree of freedom to the conventional EM algorithm,
such that in the E step, an approximate posterior
$Q(\bv|\thetav^{(j-1)})$ may be used to replace the exact
posterior $p(\bv|\rv,\thetav^{(j-1)})$. Variational EM has been
successfully applied in various applications, \eg, in image
processing to perform scene analysis \cite{Frey05}, and in joint
detection/estimation problems in wireless channels
\cite{Lin05detect,Niss06}.

To provide a concrete example, assume $\thetav$ remains static
over $T$ independent uses of the channel. In the context of MUD,
this implies that we assume $\thetav$, the noise variance
$\sigma^2$ for example, remains constant when a block of $T$ bits
are transmitted by each user ($T$ could be the code word length).
The variational EM algorithm extracts point estimates for
$\thetav$ and postulated posterior distributions over the channel
bits. Therefore, the new $Q$-function may be written as:
\begin{equation}\label{eq:var_em}
    Q(\bv_{1},\cdots,\bv_{T},\thetav) = \delta(\thetav-\hat{\thetav})\prod_{t=1}^T
    Q(\bv_t),
\end{equation}
where $\hat{\thetav}$ is an estimate of $\thetav$, and $\bv_t$
contains the channel bits transmitted in the $t$-th use of the
channel. The notation
$\delta(\av-\hat{\av})$ denotes a vector Dirac delta function with
the following properties: $\int\delta(\av-\hat{\av})f(\av)\ d\av =
f(\hat{\av})$, and $\int\delta(\av-\hat{\av})\ d\av = 1$. Recall that for i.i.d. data, $p(\bv,\thetav,\rv) =
p(\thetav)\prod_{t=1}^T p(\bv_t,\rv_t|\thetav)$. Substituting
(\ref{eq:var_em}) into (\ref{eq:free_F}) yields the following free
energy:
\begin{equation}\label{eq:var_free}\BA{rcl}
     \mathcal{F} &=& \displaystyle \int_{\bv}\int_{\thetav}
    Q(\bv_{1},\cdots,\bv_{T},\thetav) \log \frac{Q(\bv_{1},\cdots,\bv_{T},\thetav)}{p(\bv_1,\cdots,\bv_T,\thetav,\rv)}
    d\thetav d\bv\\
    &=& \displaystyle \int_{\bv}\int_{\thetav}
    \delta(\thetav-\hat{\thetav})\prod_{t=1}^T
    Q(\bv_t) \log \frac{\delta(\thetav-\hat{\thetav})\prod_{t=1}^T
    Q(\bv_t)}{p(\thetav)\prod_{t=1}^T p(\bv_t,\rv_t|\thetav)}
    d\thetav d\bv\\
    &=& \displaystyle \int_{\thetav} \delta(\thetav-\hat{\thetav}) \log \delta(\thetav-\hat{\thetav})
    d\thetav - \int_{\thetav} \delta(\thetav-\hat{\thetav}) \log p(\thetav)
    d\thetav + \int_{\bv} \prod_{t=1}^T Q(\bv_t) \log \frac{\prod_{t=1}^T
    Q(\bv_t)}{\prod_{t=1}^T p(\bv_t,\rv_t|\hat{\thetav})} d\bv\\
    &=& \displaystyle -\log p(\hat{\thetav}) + \sum_{t=1}^T\left( \int_{\bv_t} Q(\bv_t)\log
    \frac{Q(\bv_t)}{p(\bv_t,\rv_t|\hat{\thetav})} d\bv_t
    \right).
\EA\end{equation}

In the last line of the above equation we omit the constant term
$\int_{\thetav} \delta(\thetav-\hat{\thetav}) \log
\delta(\thetav-\hat{\thetav}) d\thetav$. The term
$p(\hat{\thetav})$ constitutes the prior knowledge of the
parameter. In cases when it is not available, we may set
$p(\hat{\thetav})=constant$ and ignore it in the minimization of
free energy.

As proven in \cite{Neal98}, alternating between minimizing
(\ref{eq:var_free}) {\wrt} $\{Q(\bv_t)\}_{t=1}^T$ in the E step, and
{\wrt} $\hat{\thetav}$ in the M Step leads to the exact EM
algorithm where $\{\bv_t\}_{t=1}^T$ are the ``hidden variables''
and $\hat{\thetav}$ is the unknown parameter of interest.
Unfortunately, the exact EM is only possible in special cases,
because the computation in the E step of $Q(\bv_t) =
p(\bv_t|\rv_t,\hat{\thetav})$ (s.t. $\int_{\bv_t} Q(\bv_t) d\bv_t
=1$) is often intractable. But suppose we use a postulated (and
simple) distribution $Q(\bv_t)$, with parameter $\lambda_t$, and
then find $\lambda_t$ that minimizes (\ref{eq:var_free}). We then
arrive at the \emph{variational EM} algorithm, which consists of
the initialization plus the E step and M step in the $j$-th
iteration:

\begin{boxitjournal}
\textbf{Initialization} Choose initial values for
$\hat{\thetav}^{(0)}$.

\textbf{E Step} Minimize
$\mathcal{F}(\lambda_1,\cdots,\lambda_T,\hat{\thetav}^{(j-1)})$ in
(\ref{eq:var_free}) {\wrt} $\lambda_t$
\begin{equation}\label{eq:estep}
    \lambda_t^{(j)} = \arg \min_{\lambda_t} \int_{\bv_t} Q(\bv_t)\log
    \frac{Q(\bv_t)}{p(\bv_t,\rv_t|\hat{\thetav}^{(j-1)})} d\bv_t,
\end{equation}
for $t = 1,\cdots,T$.

\textbf{M Step} Minimize
$\mathcal{F}(\lambda_1^{(j)},\cdots,\lambda_T^{(j)},\hat{\thetav})$
in (\ref{eq:var_free}) {\wrt} $\hat{\thetav}$
\begin{equation}\label{eq:mstep}
    \hat{\thetav}^{(j)} = \arg \min_{\hat{\thetav}} \sum_{t=1}^T\left( \int_{\bv_t} Q^{(j)}(\bv_t)\log
    \frac{Q^{(j)}(\bv_t)}{p(\bv_t,\rv_t|\hat{\thetav})} d\bv_t
    \right)-\log p(\hat{\thetav}).
\end{equation}
\end{boxitjournal}

In the rest of the section, we will implement the variational EM
algorithm in both Gaussian SISO MUD and discrete SISO MUD, to
resolve the uncertainty in channel information at the receiver.
More specifically, we will assume no noise variance information
and noisy channel amplitude information at the receiver, and
attempt to adaptively estimate the noise variance, as well as
improve the channel amplitude estimation, jointly with data
detection.

\subsection{Channel and Noise Variance Estimation for Gaussian SISO
MUD}\label{sec:em_sisommse}

Adding a time index $t$ to (\ref{eq:CDMA}) to represent a sequence
of channel observations $\rv_t = \Sv\Av\bv_t + \nv_t$
($t=1,\cdots,T$), according to (\ref{eq:var_free}), we may write
the free energy for $T$ channel realizations as
\BEQ\label{eq:F_gauss}
    \mathcal{F}_{gauss}(\muv_1,\cdots,\muv_T,\Sigmav_1,\cdots,\Sigmav_T,\sigma^2,\av) =
     - \log p(\sigma^2) - \log p(\av) + \sum_{t=1}^T \mathcal{F}_{gauss}(\muv_t,\Sigmav_t|\sigma^2,\av).
\EEQ
where $\av = \diag(\Av)$, and we define \BEQ
    \mathcal{F}_{gauss}(\muv_t,\Sigmav_t|\sigma^2,\av) = \int_{\bv_t} Q(\bv_t)\log
\frac{Q(\bv_t)}{p(\bv_t,\rv_t|\sigma^2,\av)} d\bv_t , \EEQ
which is equal to (\ref{eq:F_siso}), except that $\sigma^2$ and
$\av$ are now explicitly shown to be variables of $\mathcal{F}$.
Here $\thetav = \{\sigma^2,\av\}$ are the model parameters to be
estimated. In (\ref{eq:F_gauss}), $p(\sigma^2)$ is a constant,
since we do not assume prior knowledge about $\sigma^2$. But
estimates of the channel, however noisy, can be assumed available
at the receiver. In particular, we model the true channel, $\av$,
as the sum of the channel estimate, $\tilde{\av}$, and random
measurement error with variance $\varsigma^2$ \cite{Medard00}. Or,
equivalently, $p(\av) = \mathcal{N}(\tilde{\av},\varsigma^2 \Iv)$, where
$\varsigma$ is assumed to be known.

The E step, \ie, the estimation of $\muv$ and $\Sigmav$, has already
been completed in Section \ref{sec:siso_mmse}. The only challenge
now remaining is the M step. From (\ref{eq:mstep}), we see that we
are required to solve for \BEQ
    \{\hat{\sigma}^2, \hat{\av}\} = \arg\min_{\sigma^2,\av}\
    \mathcal{F}_{gauss}(\muv_1,\cdots,\muv_T,\Sigmav_1,\cdots,\Sigmav_T,\sigma^2,\av).
\EEQ
To this end, the following identity is needed:

\begin{lemma}\label{lemma:schur_trace1}
    \BEQ
    \tr[\diag(\xv)\cdot\Av \cdot\diag(\yv)\cdot\Bv] = \xv^T(\Av\circ \Bv^T)\yv
    \EEQ
    for square matrices $\Av$, $\Bv \in \mathbb{R}^{N \times N}$,
    and vectors $\xv$, $\yv \in \mathbb{R}^{N \times 1}$.
\end{lemma}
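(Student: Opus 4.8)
The plan is to prove this identity by a direct index computation, matching the two sides entry by entry. Writing $\diag(\xv)$ for the diagonal matrix with $i$-th entry $x_i$ and $\diag(\yv)$ for the one with $j$-th entry $y_j$, I would first form the product left to right: $(\diag(\xv)\Av)_{ik} = x_i A_{ik}$, then $(\diag(\xv)\Av\diag(\yv))_{i\ell} = x_i A_{i\ell} y_\ell$, and finally the $(i,m)$ entry of $\diag(\xv)\Av\diag(\yv)\Bv$ is $\sum_\ell x_i A_{i\ell} y_\ell B_{\ell m}$. Setting $m=i$ and summing over $i$ to take the trace gives $\tr[\diag(\xv)\Av\diag(\yv)\Bv] = \sum_{i,\ell} x_i\, A_{i\ell} B_{\ell i}\, y_\ell$.

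The only content of the lemma is then the observation that $A_{i\ell} B_{\ell i} = A_{i\ell}(\Bv^T)_{i\ell} = (\Av\circ\Bv^T)_{i\ell}$, so the double sum collapses to $\sum_{i,\ell} x_i (\Av\circ\Bv^T)_{i\ell} y_\ell = \xv^T(\Av\circ\Bv^T)\yv$, which is the claimed identity. There is essentially no obstacle here — the result is a one-line manipulation — and the one thing to be careful about is the transpose: the trace pairs the $(i,\ell)$ entry of $\Av$ with the $(\ell,i)$ entry of $\Bv$, which is exactly why $\Bv$ (not $\Bv^T$) appears inside the trace on the left while $\Bv^T$ appears in the Schur product on the right.

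As a sanity check I would verify the special case $\Bv = \Iv$, where the left side reduces to $\tr[\diag(\xv)\Av\diag(\yv)] = \sum_i x_i A_{ii} y_i$ and the right side to $\xv^T(\Av\circ\Iv)\yv = \sum_i x_i A_{ii} y_i$, in agreement. This identity is what will allow the trace terms in $\mathcal{F}_{gauss}$, which sandwich $\Av^T\Sv^T\Sv\Av$ (and similar matrices built from the amplitude vector $\av$) against diagonal factors, to be rewritten as explicit quadratic forms in $\av$, thereby making the M-step minimization over $\av$ tractable.
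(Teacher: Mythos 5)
Your proof is correct and follows essentially the same route as the paper, which simply notes that both sides equal $\sum_{i,j} x_i A_{ij} B_{ji} y_j$; your version just carries out this index computation more explicitly. The transpose bookkeeping and the $\Bv=\Iv$ sanity check are both right.
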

\BP
    Writing $\Av = [A_{ij}]$ and $\Bv = [B_{ij}]$, it is easily verified that both sides of the equation are equal to $\sum_{i,j}x_i A_{ij}
B_{ji} y_j$. \EP

Utilizing Lemma \ref{lemma:schur_trace1} and ignoring the terms
independent of $\sigma^2$ and $\av$, we have \BEQ\BA{rl}
    &\mathcal{F}_{gauss}(\muv_1,\cdots,\muv_T,\Sigmav_1,\cdots,\Sigmav_T,\sigma^2,\av)\\
    =& \sum_{t=1}^T\left\{ \frac{1}{2\sigma^2}(\rv_t - \Sv\Muv_t\av)^T(\rv_t - \Sv\Muv_t\av) + \frac{N}{2}\log(\sigma^2) + \frac{1}{2\sigma^2} \av^T[(\Sv^T\Sv)\circ
    \Sigmav_t]\av + \frac{1}{2\varsigma^2}(\av-\tilde{\av})^T(\av-\tilde{\av})
    \right\},
\EA\EEQ
where $\Muv_t = \diag(\muv_t)$. Equating $\partial
\mathcal{F}/\partial \av = \zerov$ produces \BEQ\label{eq:av_hat}
    \hat{\av} = \left\{\sum_{t=1}^T \Muv_t^{T}\Sv^T\Sv\Muv_t + (\Sv^T\Sv)\circ\Sigmav_t + \frac{\sigma^2}{\varsigma^2}\Iv\right\}^{-1} \left( \sum_{t=1}^T
    \Muv_t\Sv^T\rv_t + \frac{\sigma^2}{\varsigma^2}\tilde{\av}\right).
\EEQ Substituting $\av = \hat{\av}$ into $\mathcal{F}$ and solving
for
$\mathcal{F}(\muv_1,\cdots,\muv_T,\Sigmav_1,\cdots,\Sigmav_T,\sigma^2,\hat{\av})/\partial
\sigma^{-2} = 0$ gives \BEQ\label{eq:sigma_hat}
    \hat{\sigma}^2 = \frac{1}{NT} \left\{\sum_{t=1}^T\left[ (\rv_t-\Sv\hat{\Av}\muv_t)^T (\rv_t-\Sv\hat{\Av}\muv_t) + \hat{\av}^T[(\Sv^T\Sv)\circ\Sigmav_t]\hat{\av} \right]
    \right\},
\EEQ
where $\hat{\Av} = \diag(\hat{\av})$. Note that (\ref{eq:av_hat})
and (\ref{eq:sigma_hat}) decrease the free energy in a coordinate
descent manner, not necessarily minimizing it, due to the coupling
of $\av$ and $\sigma^2$ in the two equations. But this is
acceptable, since they will converge to the exact minimizers over
the EM iterations.

After incorporating iterative decoding, the variational EM algorithm
for turbo MUD employing flooding-Gaussian-SISO MUD is described in
Table \ref{tab:gauss_siso_em}.

\begin{table}\caption{Variational EM algorithm employing Gaussian SISO MUD.}\label{tab:gauss_siso_em}
\centering \vspace{3pt}
\setlength{\tabcolsep}{3pt} \small
\begin{tabular}{|c|l|} \hline

\emph{Initialization} & \BT{c} Set $\tilde{\bv}_t = \zerov$,
$\sigma^{2(0)} = 0$, and $\av^{(0)} = \tilde{\av}$. \vspace{5pt} \ET\\

\BT{c} \emph{Update for $Q(\bv)$}\\ \emph{E Step} \ET &

\BT{lllcl}
    FOR& \multicolumn{4}{l}{ $j = 1:J$ (\emph{Outer Iteration})}\\
    &\multicolumn{4}{l}{FOR $k=1:K$ and $t=1:T$} \\
    &   & $\tilde{\bv}_{t,k}$ &=& $[\tilde{b}_{t,1},\cdots,\tilde{b}_{t,k-1},0,\tilde{b}_{t,k+1},\cdots,\tilde{b}_{t,K}]^T$\\
    &   & $\Wv_t$ &=&
            $\diag([1-\tilde{b}_{t,1}^2,\cdots,1-\tilde{b}_{t,K}^2]^T)$\\
    &   & $\check{\mu}_{t,k}$ &=& $A_k\ev_k^T\left[\Av^T\Wv_t\Av + \sigma^2\Rv^{-1}\right]^{-1} \left[\Rv^{-1}\yv_t-\Av\tilde{\bv}_{t,k}\right]$\\
    &   & $\check{\alpha}_{t,k}$ &=& $(1-\tilde{b}_{t,k}^2)A_k^2 \left[(\Av^T\Wv_t\Av + \sigma^2 \Rv^{-1})^{-1} \right]_{k,k}$ \\
    &   & \multicolumn{3}{l}{$\LLR_{mud}(b_{t,k}) = \frac{2\check{\mu}_{t,k}}{1-\check{\alpha}_{t,k}}$} \\
    &END& & &\\
\ET\\

\BT{c} \emph{BCJR Decoding} \ET & \hspace{18pt}
\BT{lllcl}
    &\multicolumn{4}{l}{FOR $k=1:K$ and $t=1:T$} \\
    &   & \multicolumn{3}{l}{$\LLR_{dec}(b_{t,k}) \stackrel{\scriptsize \mbox{Decoding}}{\Longleftarrow} \LLR_{mud}(b_{t,k})$} \hfill (\emph{Extrinsic Information})\\
    &   & $\tilde{b}_{t,k}$ &=& $\tanh[\LLR_{dec}(b_{t,k})/2]$\\
    &   & $\hat{b}_{t,k}$ &=& $\tanh[\LLR_{mud}(b_{t,k})/2 + \LLR_{dec}(b_{t,k})/2 ]$ \hfill (\emph{Posterior Estimate})\\
    &   & $\mu_{t,k}$ &$\leftarrow$& $\hat{b}_{t,k}$\\
    &   & \multicolumn{3}{l}{$[\Sigmav_t]_{k,k} \leftarrow 1-\mu_{t,k}^2$}\\
    &END& & &\\
\ET\\

\BT{c} \emph{Update for $\sigma^2$ and $\av$}\\ \emph{M Step} \ET &
\BT{lllcl}
    &   & $\av$ &$\leftarrow$& $\left\{\sum_{t=1}^T \Muv_t^T\Sv^T\Sv\Muv_t^T + (\Sv^T\Sv)\circ\Sigmav_t  + \frac{\sigma^2}{\varsigma^2}\Iv\right\}^{-1} \left( \sum_{t=1}^T \Muv_t\Sv^T\rv_t
    + \frac{\sigma^2}{\varsigma^2}\tilde{\av} \right)$\\
    &   & $\sigma^2$ &$\leftarrow$& $\frac{1}{NT} \left\{\sum_{t=1}^T\left[ (\rv_t-\Sv\Av\muv_t)^T (\rv_t-\Sv\Av\muv_t) + \av^T[(\Sv^T\Sv)\circ\Sigmav_t]\av \right]
    \right\}$\\
\ET\\
& END\\
\hline
\end{tabular}
\setlength{\tabcolsep}{5pt} \normalsize
\end{table}

This is an extension to the flooding-Gaussian-SISO MUD algorithm
in Table \ref{tab:gauss_siso}. The variational EM extension to
sequential or hybrid Gaussian-SISO MUD is straightforward, where
the M step may be implemented either once every outer iteration
$j$, or more frequently, after the E step update of each user.

\subsection{Channel and Noise Variance Estimation for Discrete SISO
MUD}\label{sec:em_meanfield}

Similar to Gaussian SISO MUD, the free energy of discrete SISO MUD
for $T$ channel outputs can be written as \BEQ\label{eq:F_disc}
    \mathcal{F}_{disc}(\mv_1,\cdots,\mv_T,\sigma^2,\av) =
    - \log p(\sigma^2) - \log p(\av) + \sum_{t=1}^T \mathcal{F}_{disc}(\mv_t|\sigma^2,\av) ,
\EEQ

$\mathcal{F}_{disc}(\mv_t|\sigma^2,\av)$ is simply
(\ref{eq:f_meanfield}) with an additional time index $t$. In
(\ref{eq:F_disc}), we set $p(\sigma^2)$ to a constant and let
$p(\av) = \mathcal{N}(\tilde{\av},\varsigma^2 \Iv)$. Making use of
the E step already derived in Section \ref{sec:mean_field}, we only
need to complete the M step.

Utilizing Lemma \ref{lemma:schur_trace1} and ignoring the terms
independent of $\sigma^2$ and $\av$, we have \BEQ\BA{rl}
    &\mathcal{F}_{disc}(\mv_1,\cdots,\mv_T,\sigma^2,\av)\\
    =& \sum_{t=1}^T\left\{ \frac{N}{2}\log(\sigma^2) + \frac{1}{2\sigma^2}\av^T[\Mv_t^T\Sv^T\Sv\Mv_t - \diag(\Mv_t^T\Sv^T\Sv\Mv_t)]\av  + \frac{1}{2\sigma^2}
    \av^T \diag(\Sv^T\Sv)\av - \frac{1}{\sigma^2} \rv_t^T\Sv\Mv_t\av \right\} \\
    & +
    \frac{1}{2\varsigma^2}(\av-\tilde{\av})^T(\av-\tilde{\av}),
\EA\EEQ
where $\Mv_t = \diag(\mv_t)$. Equating $\partial
\mathcal{F}/\partial \av = \zerov$ produces \BEQ
    \textstyle \hat{\av} = \left\{\sum_{t=1}^T \diag(\Sv^T\Sv) + \Mv_t^T[\Sv^T\Sv-\diag(\Sv^T\Sv)]\Mv_t\right\}^{-1} \left( \sum_{t=1}^T
    \Mv_t\Sv^T\rv_t + \frac{\sigma^2}{\varsigma^2}\tilde{\av}\right).
\EEQ Substituting $\av = \hat{\av}$ into $\mathcal{F}$ and solve
for $\mathcal{F}(\mv_1,\cdots,\mv_T,\sigma^2,\hat{\av})/\partial
\sigma^{-2} = 0$ gives \BEQ\label{eq:sigma_hat1}
    \hat{\sigma}^2 = \frac{1}{NT} \left\{\sum_{t=1}^T\left[ (\rv_t-\Sv\hat{\Av}\mv_t)^T (\rv_t-\Sv\hat{\Av}\mv_t) + \sum_{k=1}^K(1-m_{t,k}^2)\hat{A}_k^2\cdot\sv_k^T\sv \right] \right\}.%
\EEQ

The variational EM algorithm for flooding-discrete-SISO MUD is
presented in Table \ref{tab:disc_siso_em}.

\begin{table}\caption{Variational EM algorithm employing Discrete SISO MUD.}\label{tab:disc_siso_em}
\centering \vspace{3pt}
\setlength{\tabcolsep}{3pt} \small
\begin{tabular}{|c|l|} \hline

\emph{Initialization} & \BT{c} Set $\tilde{\mv}_t = \zerov$,
$\LLR_{dec}(b_{t,k} = 0)$, $\sigma^{2(0)} = 0$, and $\av^{(0)} = \tilde{\av}$. \vspace{5pt} \ET\\

\BT{c} \emph{Update for $Q(\bv)$}\\ \emph{E Step} \ET &
\BT{lllllcl}
    FOR& \multicolumn{6}{l}{ $j = 1:J$ (\emph{Outer Iteration})}\\
    &\multicolumn{6}{l}{FOR $i = 1:I$ (\emph{Inner Iteration})} \\
    &   &\multicolumn{5}{l}{FOR $k=1:K$ and $t=1:T$} \\
    &   & & \multicolumn{4}{l}{$\LLR_{pos}(b_{t,k}) = \LLR_{dec}(b_{t,k}) + \frac{2}{\sigma^2}\left[\etav_k^T \rv_t - \betav_k^T\mv_t\right]$} \\
    &   & & \multicolumn{4}{l}{$m_{t,k} = \tanh[\LLR_{pos}(b_{t,k})/2]$} \\
    &   &END&&&&\\
    &END&&&&&\\
\ET\\

\BT{c} \emph{BCJR Decoding} \ET & \hspace{18pt}
\BT{lllcl}
    &\multicolumn{4}{l}{FOR $k=1:K$ and $t=1:T$} \\
    &   & \multicolumn{3}{l}{$\LLR_{mud}(b_{t,k}) = \LLR_{pos}(b_{t,k}) - \LLR_{dec}(b_{t,k})$} \\
    &   & \multicolumn{3}{l}{$\LLR_{dec}(b_{t,k}) \stackrel{\scriptsize \mbox{Decoding}}{\Longleftarrow} \LLR_{mud}(b_{t,k})$} \hfill (\emph{Extrinsic Information})\\
    &   & $\hat{b}_{t,k}$ &=& $\tanh[\LLR_{mud}(b_{t,k})/2 + \LLR_{dec}(b_{t,k})/2 ]$ \hfill (\emph{Posterior Estimate})\\
    &   & $m_{t,k}$ &$\leftarrow$& $\hat{b}_{t,k}$\\
    &END& & &\\
\ET\\

\BT{c} \emph{Update for $\sigma^2$ and $\av$}\\ \emph{M Step} \ET &
\BT{lllcl}
    &   & $\av$ &$\leftarrow$& $\left\{\sum_{t=1}^T \Sv^T\Sv + \Mv_t^T[\Sv^T\Sv-\diag(\Sv^T\Sv)]\Mv_t + \frac{\sigma^2}{\varsigma^2}\Iv \right\}^{-1} \left( \sum_{t=1}^T \Mv_t\Sv^T\rv_t
    + \frac{\sigma^2}{\varsigma^2}\tilde{\av} \right)$\\
    &   & $\sigma^2$ &$\leftarrow$& $\frac{1}{NT} \left\{\sum_{t=1}^T\left[ (\rv_t-\Sv\Av\mv_t)^T (\rv_t-\Sv\Av\mv_t) - \sum_{k=1}^K (1-m_{t,k}^2)A_k^2\cdot \sv_k^T\sv_k \right]
    \right\}$\\
\ET\\
& END\\
\hline
\end{tabular}
\setlength{\tabcolsep}{5pt} \normalsize
\end{table}

Now consider taking a step backward and assume $\av$ to be known
perfectly, and only $\sigma^2$ needs to be estimated. It is clear
that, in (\ref{eq:sigma_hat1}), each element of $\mv_t$ converges
to $-1$ or $+1$ as the algorithm converges. Hence, the last term
in (\ref{eq:sigma_hat1}) eventually vanishes. By omitting the
vanishing term, this is exactly the equation to estimate
$\sigma_{tot}^2 = \sigma^2 + \sigma_{MU}^2$ in \cite{Alex98}.
Together with Section \ref{sec:mean_field}, we have now completed
the interpretation of Alexander-Grant-Reed's turbo detector as an
instance of the variational EM algorithm.


\section{Simulation Results}\label{sec:simulation1}

In this section, we investigate the performance of turbo multiuser
detectors employing the Gaussian SISO MUD and DDF-aided discrete
SISO MUD (the original-form discrete SISO MUD would suffer from
poor convergence due to the non-convexity of
$\mathcal{F}_{disc}(\mv)$). We will consider two different
scenarios to test the proposed schemes in both standard benchmark
settings and in more practical situations:

\begin{description}
\item[Scenario I:] A flat-fading
synchronous CDMA channel same as that in \cite{Wangx99}: A four-user
system is assumed with equal cross-correlation $\rho = 0.7$. All
users have equal power and employ the same rate $1/2$ convolutional
code with generators $10011$ and $11101$.

\item[Scenario II:] A flat-fading
synchronous CDMA channel with random spreading: The system has
spreading gain of $N=32$ and $K=32$ active users. All users have
equal power and employ the same rate $1/2$ convolutional code with
generators $111$ and $101$. In this case, we also assume the
receiver has noisy channel information (unknown $\sigma^2$ and
inaccurate channel amplitude estimates).
\end{description}

Since the focus of this paper is the introduction of a theoretical
framework, these test results are for proof-of-concept purposes
only and are by no means complete. More elaborate and complete
simulations, such as the near-far situation or the extensions to
multipath channels, may be performed following the same recipe
presented in \cite{Wangx99} and will be omitted.

\subsection{Gaussian SISO MUD}

\subsubsection{Scenario I with Perfect Channel Knowledge}

In Section \ref{sec:stand_gaussian_siso}, we proved that the
Wang-Poor turbo MUD scheme is an instance of hybrid-Gaussian-SISO
MUD, whose performance is depicted in \fig 3 of \cite{Wangx99}. In
this paper, we will realize the other two variations, namely
sequential-Gaussian-SISO MUD and flooding-Gaussian-SISO MUD,
predicted by the theory of VFEM and the associated message passing
rules. The complete Gaussian SISO MUD algorithms are described in
Table \ref{tab:gauss_siso}.

\begin{figure}
\begin{center}
\scalebox{0.9}{\includegraphics{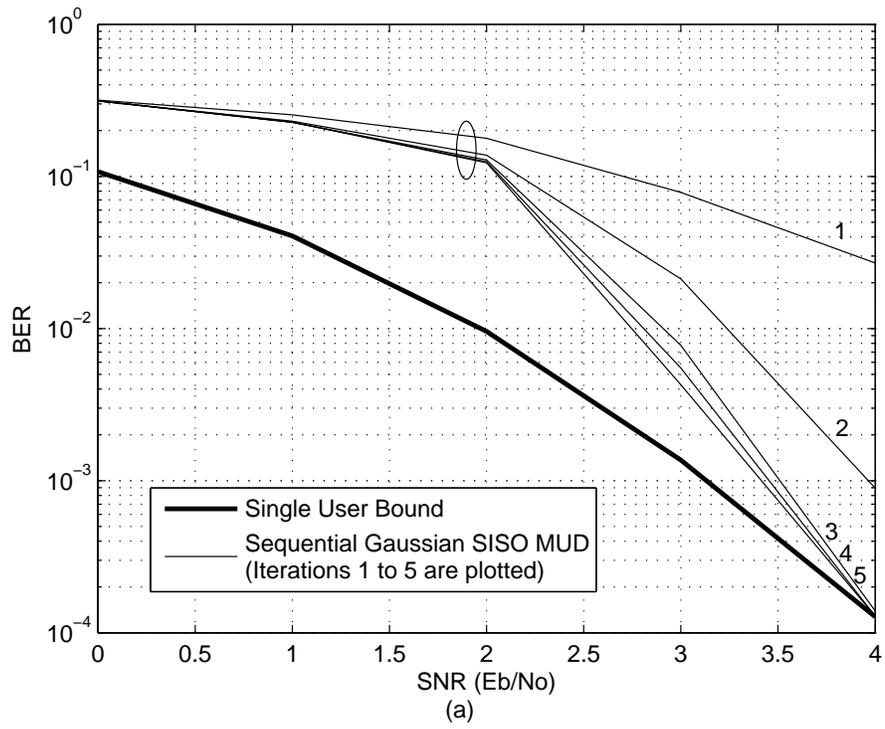}}\\%
\scalebox{0.9}{\includegraphics{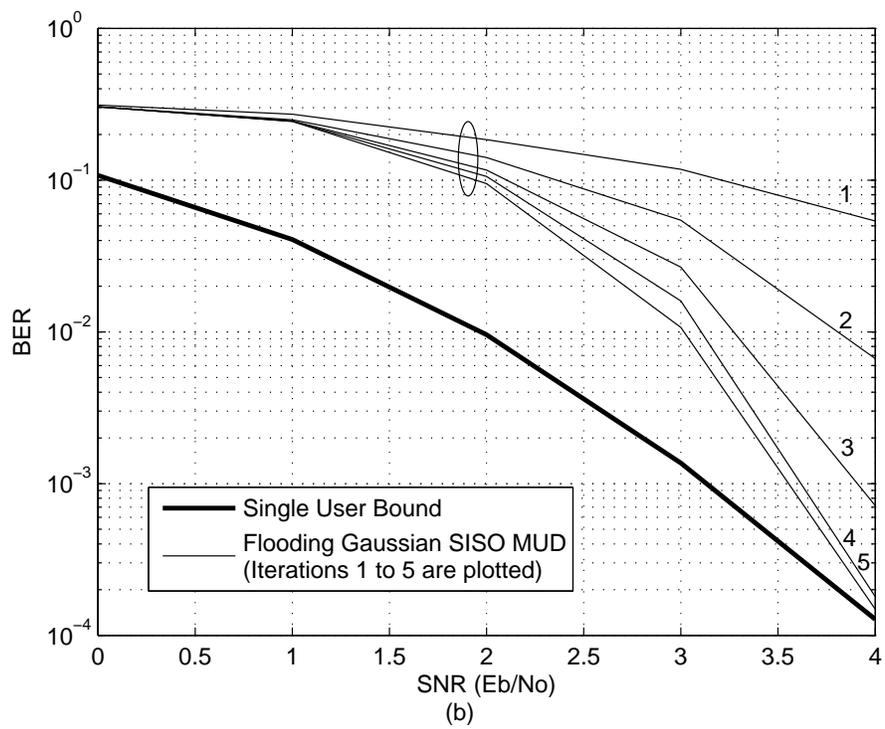}}%
\caption{BER performance of turbo MUD employing Gaussian-SISO MUD
($K=4$, $\rho=0.7$). (a) Sequential schedule; (b) Flooding
schedule.}\label{fig:bpsk_gauss}
\end{center}
\end{figure}

\fig \ref{fig:bpsk_gauss}(a) and \fig \ref{fig:bpsk_gauss}(b) plot
the BER performance of turbo MUD employing sequential-Gaussian-SISO
MUD and flooding-Gaussian-SISO MUD, respectively, in simulation
scenario I. The results after each of the $J=5$ outer iterations are
recorded. It is shown that both schemes out-perform
hybrid-Gaussian-SISO MUD, which was originally proposed outside of
the variational inference framework. The BER improvement, although
small, verifies that the sequential and flooding scheduling schemes
are sound in practice as they are in theory.

In the above simulations, we find that the difference in performance
between sequential schedule and flooding schedule is small. For
conciseness, in the case of inaccurate channel estimates, we will
only consider the flooding schedule, since it in general leads to
lower overall complexity and latency at both the detection and
decoding stages. That been said, one may choose to implement the
sequential or hybrid schedule with ease as special need arises.

\subsubsection{Scenario II with Unknown Noise Variance and Inaccurate Channel
Estimates}

We now consider simulation scenario II, a more challenging
situation where the receiver is assumed to have no noise variance
information and only inaccurate channel estimates. The actual
channel is assumed to be Gaussian-distributed conditioned on the
inaccurate estimate $\tilde{\av}$, \ie, $p(\av) =
\mathcal{N}(\tilde{\av}, \varsigma^2 \Iv)$, as in Section
\ref{sec:em_sisommse}. In the simulations, we fix the true channel
$A_k =1$ (or $\av = \onev$), and generate the noisy channel
estimate $\tilde{A}_k$ as $1 + \delta_k$, where $p(\delta_k) =
\mathcal{N}(0,\varsigma^2)$.

\begin{figure}

\hspace{-0.5in}\scalebox{1.0}{\includegraphics{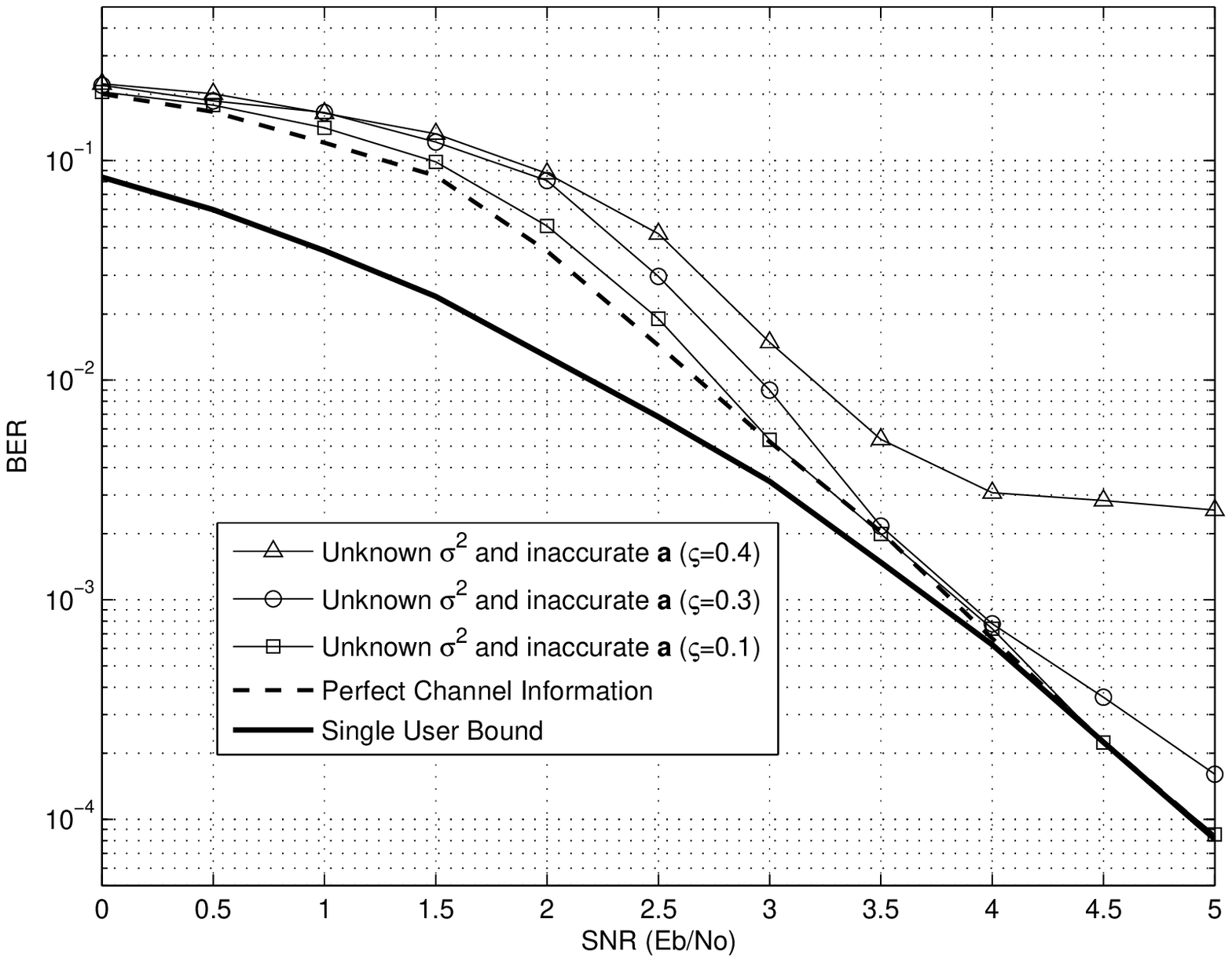}}\\%
\caption{BER performance of turbo MUD employing
flooding-Gaussian-SISO MUD with joint noise variance and channel
estimation ($N=32$, $K=32$). The single user bound is obtained by
assuming perfect channel knowledge.}\label{fig:gauss_em}

\end{figure}

\fig \ref{fig:gauss_em} depicts the flooding-Gaussian-SISO MUD
implemented with joint $\av$ and $\sigma^2$ estimation. We set
$\varsigma$ to be $0.1$, $0.3$ and $0.4$, respectively, to be
compared with the case of exact channel knowledge at the receiver.
To be consistent, the curves plotted are the results after $J=10$
outer iterations, although in most cases convergence is achieved
with fewer iterations. It is seen that, with the help of the
variational EM algorithm, this turbo multiuser detector is very
robust to severe channel estimation error, up to $\varsigma = 0.3$.
It is only when $\varsigma$ reaches $0.4$, \ie, $40\%$ that of the
actual channel amplitude, significant performance degradation starts
to appear.

\subsection{DDF-Aided Discrete SISO MUD}

\subsubsection{Scenario I with Unknown Noise Variance Only}

To implement DDF-aided discrete SISO MUD in turbo MUD, we simply
need to replace the first outer iteration ($j=1$) in the algorithms
of Table \ref{tab:disc_siso} with DDF update
(\ref{eq:mk_update_DDF4}), and keep the remaining outer iterations
($j\geq 2$) the same. We will first simulate scenario I with turbo
MUD employing DDF-aided sequential-discrete-SISO MUD and DDF-aided
flooding-discrete-SISO MUD, each having $I=6$ inner iterations
within every outer iteration, except for the first outer iteration,
where the DDF update only requires $1$ inner iteration. In the
detection algorithms prescribed in Table \ref{tab:disc_siso}, we
added a noise variance estimate step like it is done in
\cite{Alex98,Koba01}. This is a special case of the variational EM
algorithm introduced in Section \ref{sec:em_meanfield} with
$\sigma^2$ alone being the unknown parameter. Having the noise
variance as an unknown does not seem to temper the detector
performance compared to perfectly-known noise variance, and, in
certain cases, even helps. We attribute this phenomenon to its
possible relation with optimizing (\ref{eq:f_meanfield}) via
\emph{simulation annealing} by setting $\sigma^2$ to be the
\emph{temperature} parameter \cite{Mackay98}, but will leave
detailed discussions to future works.

\begin{figure}
\begin{center}
\scalebox{0.9}{\includegraphics{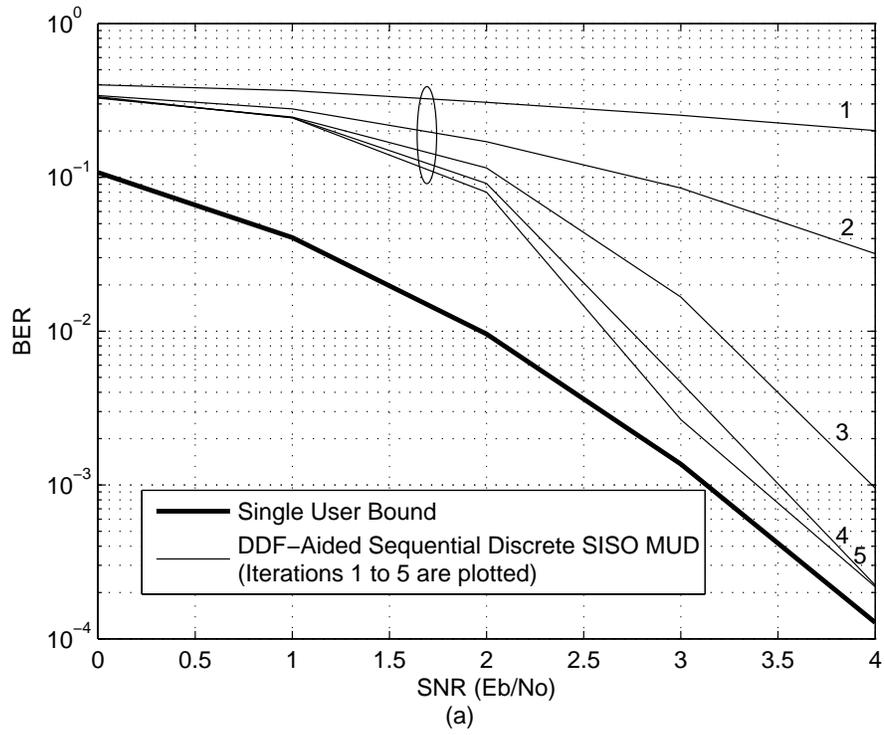}}\\%
\scalebox{0.9}{\includegraphics{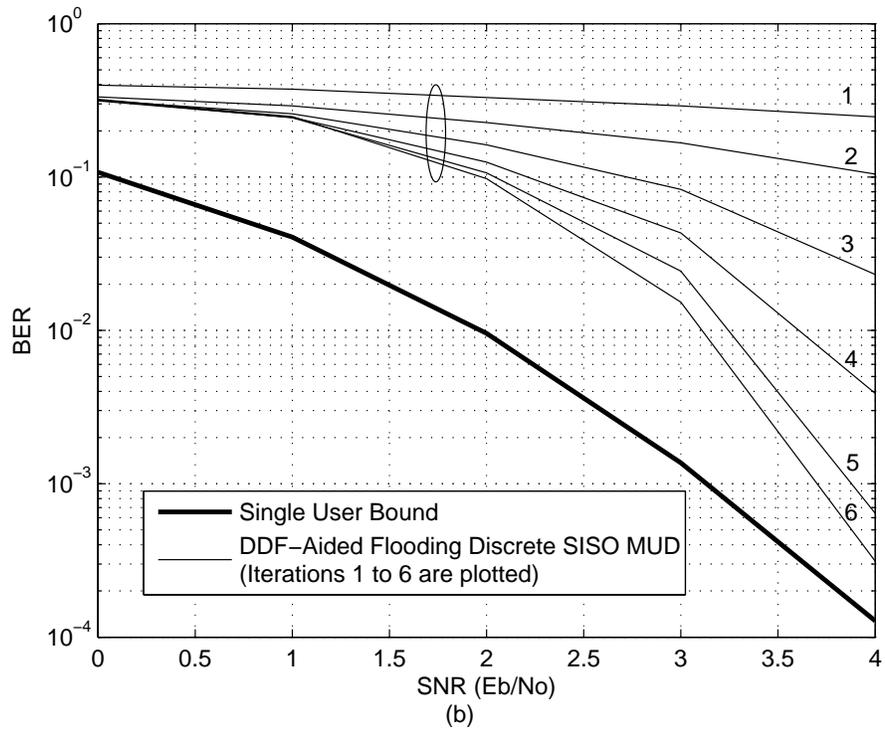}}%
\caption{BER performance of turbo MUD employing discrete-SISO MUD
with joint noise variance estimation ($K=4$, $\rho=0.7$). (a)
Sequential schedule; (b) Flooding schedule.}\label{fig:bpsk_disc}
\end{center}
\end{figure}

\fig \ref{fig:bpsk_disc}(a) and \fig \ref{fig:bpsk_disc}(b) depict
the BER performance of the above-mentioned schemes over $J=5$ or $6$
outer iteration. Despite the slightly inferior performance compared
to the Gaussian SISO counterparts, the DDF-aided discrete SISO
detectors have been shown to produce excellent results even with
unknown noise variance. The existing discrete SISO detectors, such
as \cite{Alex98,Koba01}, would fail under such simulation settings,
due to the lack of DDF initialization.

%

\subsubsection{Scenario II with Unknown Noise Variance and Inaccurate Channel
Estimates}

We now further investigate the case of inaccurate channel estimates
in addition to unknown noise variance with simulation scenario II.
Like the Gaussian SISO case, we set the true channel to be $A_k =
1$, and generate noisy channel estimates by letting $\tilde{A}_k$ be
$1 + \delta_k$, where $p(\delta_k) = \mathcal{N}(0,\varsigma^2)$.

\begin{figure}

\hspace{-0.5in}\scalebox{1.0}{\includegraphics{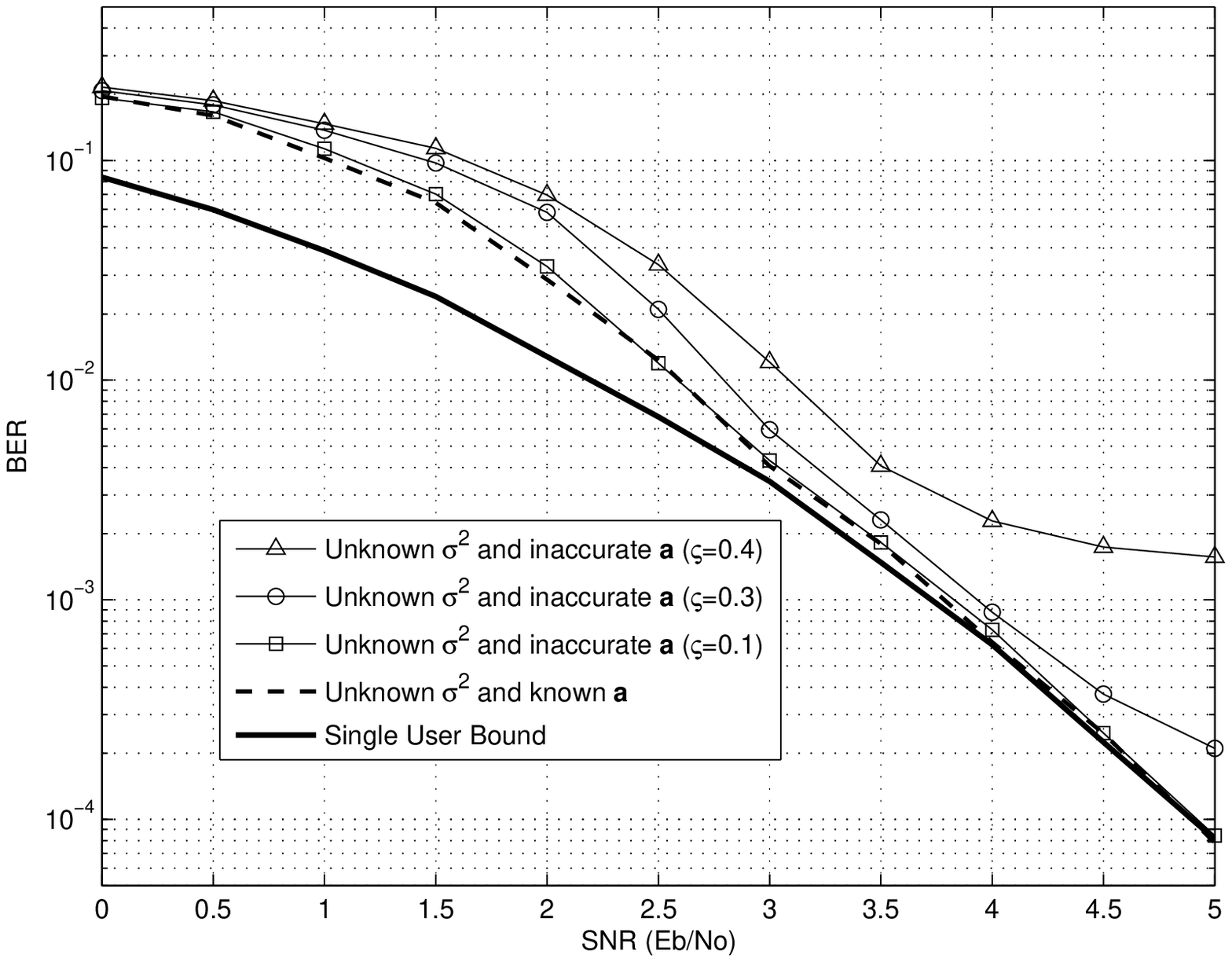}}\\%
\caption{BER performance of turbo MUD employing
flooding-discrete-SISO MUD with joint noise variance and channel
estimation ($N=32$, $K=32$). The single user bound is obtained by
assuming perfect channel knowledge.}\label{fig:disc_em}

\end{figure}

\fig \ref{fig:disc_em} depicts the performance of DDF-aided
flooding-discrete-SISO MUD under channel estimation error of
$\varsigma = 0.1$, $0.3$ and $0.4$, respectively. It is shown that,
by iteratively refining the channel estimates, the modified
flooding-discrete-SISO MUD is also robust to significant errors in
channel estimation.

\section{Conclusions}\label{sec:conclusion}

The concept of free energy is a far-reaching one. Besides its
original formulation in statistical physics, it also recently
finds its application in interpreting various probabilistic
inference techniques, such as the EM algorithm \cite{Neal98} and
belief propagation (sum-product algorithm) \cite{Yedi05}.

The main focus of this paper is the introduction of a
comprehensive theory, centered around the minimization of
variational free energy, that would describe various SISO detectors
in multiple access channels. In particular, we developed guidelines for SISO
detector design in linear Gaussian vector channels, first by
pointing out the importance of message-passing scheduling, and
next by deriving detection algorithms accordingly. We show that it
is a carefully-chosen scheduling scheme combined with its
accompanying SISO detector that produces a good turbo detector,
opposed to the conventional view that focuses on the detector
design alone. With this new paradigm comes a spectrum of plausible
SISO detectors. In addition to new detectors constructed as a
result, we show that some of the classic SISO detectors can be
seen as special instances of this composition, and subsequently
refined systematically.

In the algorithm design stage, after the \emph{postulation} of
prior and posterior distributions with the help of some intuition, it may
be seen that our efforts in obtaining good algorithms have been
condensed to the \emph{evaluation} and \emph{optimization} of free
energy expressions, such as the ones at the centre of this paper,
$\mathcal{F}_{gauss}(\muv,\Sigmav)$ and $\mathcal{F}_{disc}(\mv)$.
By viewing existing MUD schemes under the same roof, we obtain
many interesting insights. Furthermore, we extended the initiative
to variational-EM-based MUD, in which channel parameters may be
efficiently and blindly estimated in conjunction with turbo MUD.


\bibliographystyle{IEEEtran}
\small\bibliography{../../../subspace}
\end{document}